\def\R{\mathbb{R}}
\def\rR{\mathcal{R}}
\def\I{\mathbb{I}}
\def\hF{\hat{F}}
\def\hJ{\hat{J}}
\def\e{\varepsilon}
\newcommand{\eps}{\varepsilon}
\newcommand\norm[1]{\left\lVert#1\right\rVert}
\DeclareMathOperator*{\argmin}{arg\,min}
\newtheorem{theorem}{Theorem}
\newtheorem{prop}[theorem]{Proposition}
\newtheorem{lemma}[theorem]{Lemma}
\newtheorem{remark}[theorem]{Remark}
\newtheorem{definition}[theorem]{Definition}
\numberwithin{theorem}{section}
\author{Guillaume Bal}
\address{Departments of Statistics and Mathematics and CCAM, University of Chicago, Chicago, IL 60637}
\email{guillaumebal@uchicago.edu}
\author{Ruoming Gong}
\address{Department of Engineering Science and Applied Mathematics, Northwestern University, Evanston, IL 60201}
\email{ruoming.gong@northwestern.edu }
\author{Fatma Terzioglu}
\address{Department of Mathematics, North Carolina State University, Raleigh, NC 27695}
\email[Corresponding author]{fterzioglu@ncsu.edu}
\date{}
\title[An inversion algorithm for $P-$functions]{An inversion algorithm for $P-$functions with applications to Multi-energy CT}
\subjclass[2020]{Primary 65R32; Secondary 92C55}
\keywords{P-function, inversion, global, uniqueness, stability, multi-energy CT, spectral CT}
\begin{document}
\maketitle

\begin{abstract}
Multi-energy computed tomography (ME-CT) is an x-ray transmission imaging technique that uses the energy dependence of x-ray photon attenuation to determine the elemental composition of an object of interest. Mathematically, forward ME-CT measurements are modeled by a nonlinear integral transform. In this paper, local conditions for global invertibility of the ME-CT transform are studied, and explicit stability estimates quantifying the error propagation from measurements to reconstructions are provided. Motivated from the inverse problem of image reconstruction in ME-CT, an iterative inversion algorithm for the so-called $P-$functions is proposed. Numerical simulations for ME-CT, in two and three materials settings with an equal number of energy measurements, confirm the theoretical predictions. 

\end{abstract}
%%----------------------------------------------
\section{Introduction}
%%----------------------------------------------
Multi-energy computed tomography (ME-CT) is a diagnostic imaging technique that uses x-rays for identifying material properties of an examined object in a non-invasive manner. While standard CT is based on the simplifying assumption of mono-energetic radiation, ME-CT exploits the fact that the attenuation of x-ray photons depends on the energy of the x-ray photon in addition to the materials present in the imaged object \cite{AlvarezMacovski, Lionheart, Katsura, McCollough, Park}.

ME-CT employs several energy measurements acquired from either an energy integrating detector that uses different x-ray source energy spectra or a photon counting detector that can register photons in multiple energy windows \cite{Schlomka, Taguchi, Willemink}. Energy dependence of photon attenuation can be utilized to distinguish between different materials in an imaged object based on their density or atomic numbers \cite{AlvarezMacovski}. As a result, ME-CT provides quantitative information about the material composition of the object, whereas standard CT can only visualize its morphology. More information on the physics and practical applications of ME-CT can be found, for example in \cite{Heismann2012,McCollough,So2021}.

Mathematically, ME-CT measurements are modeled by a nonlinear integral transform that maps x-ray attenuation function (or coefficient) of the object to weighted integrals of its x-ray transform over photon energy. Let $\Omega \in \R^d$, $d=2,3$, be the spatial domain of the imaged object. For $y \in \Omega$ and photon energy $E$, we denote by $\mu(y,E)$ the x-ray attenuation coefficient of the object. For $1 \leq i \leq n$, let $w_i(E)$ be the spectral weight function of the $i$-th energy measurement. For example, in the case of energy integrating detectors, $w_i(E)$ is the product of the $i-$th x-ray source energy spectrum $S_i(E)$ and the detector response function $D(E)$. These weights $w_i(E)$ are known and assumed to be compactly supported and normalized so that $\textstyle \int_0^\infty w_i(E) dE=1$. Then, the corresponding ME-CT measurements for a line $l$ are given by the integrals
\begin{align}
I_i(l) = \int_0^\infty w_i(E) e^{-\int_{l}\mu(y,E)dy} dE, \quad 1 \leq i \leq n.
\end{align}

The x-ray attenuation coefficient $\mu(E,y)$ is commonly expressed by a superposition of the (known) elemental x-ray attenuation functions $\mu_j(E)$ weighted by the (unknown) partial density of each respective element $\rho_j(y)$  \cite{AlvarezMacovski,Heismann2012}:
\begin{align*}
\mu(E,y) = \sum_{j=1}^n \mu_j(E)\rho_j(y).
\end{align*}
Here, we assume as many energy measurements as the number of unknown material densities. Then, we can write
\begin{align}\label{ME-CT transform}
  I_i(x) = \int_0^\infty w_i(E) e^{-M(E)\cdot x(l)} dE,
\end{align}
where $M(E) = (\mu_j(E))_{1 \leq j \leq n}$ and $x(l)=(x_j(l))_{1 \leq j \leq n}$ with $x_j(l)= \textstyle \int_l \rho_j dl$ denoting the x-ray transform of $\rho_j$ along a line $l$. We assume that $x=x(l) \in \rR \subset \R^n$ where $\rR \subset \R^n$ is a closed rectangle (a Cartesian product of closed intervals).

%%%%%%
\begin{figure}[htbp]
\begin{center}
   \includegraphics[width=0.8\textwidth]{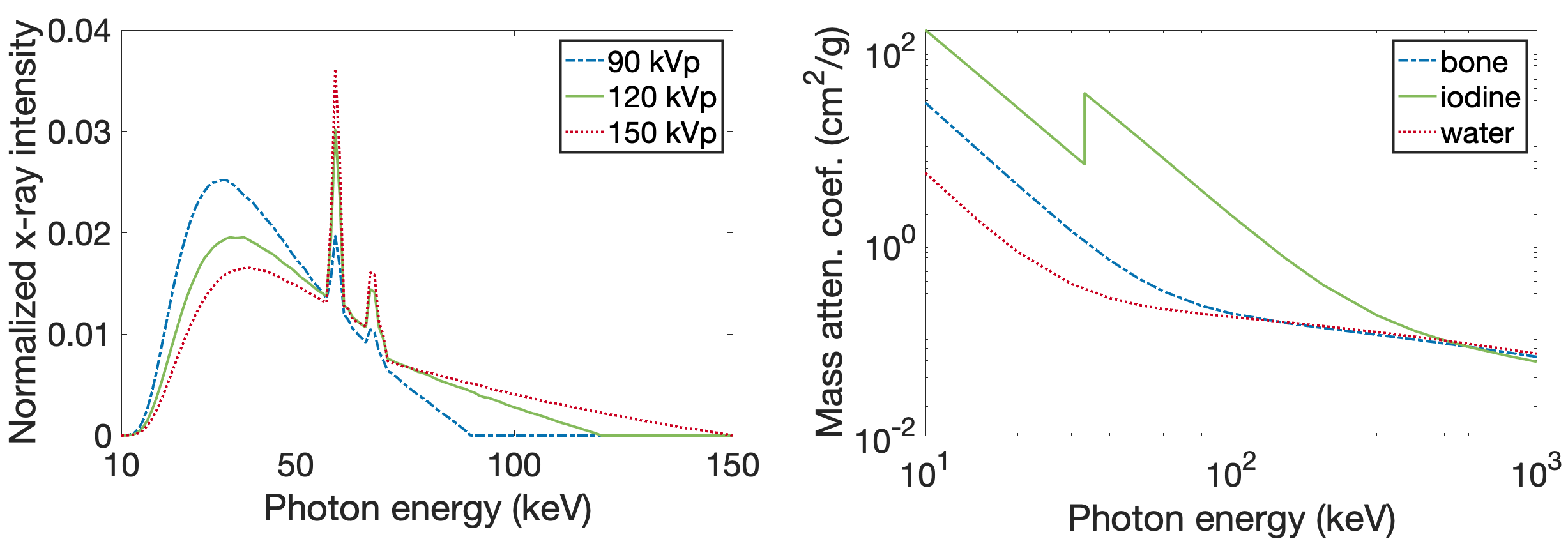}
    \caption{Left: Examples of x-ray source spectrum for varying tube potentials computed using the publicly available code SPEKTR 3.0 \cite{Spektr3}, and then normalized. Right: The x-ray attenuation coefficients of bone, iodine and water as functions of x-ray energy in log-log scale. The raw data was obtained from NIST \cite{NIST}. }
    \label{fig:spectrum_attenuation}
\end{center}
\end{figure}
%%%%%%

Therefore, one way to perform image reconstruction is to perform first a nonlinear inversion reconstructing $x=x(l)$ from $I(x)$ for each line $l$; and then a linear tomographic reconstruction to recover the material density maps $\rho_i$ from their line integrals $x_i(l)$. Since the invertibility of the line integral transform is well-studied, here we focus on the map $g : \rR \subset \R^n \to \R^n$ defined by
\begin{align}\label{logI}
   g(x(l)) = (g_i(x(l)))_{1\leq i\leq n}, \qquad g_i(x) = -\ln I_i(x). 
\end{align}
The invertibility of the maps $g_i$ and $I_i$ are equivalent.

The map $g$ is smooth for $w_i$ compactly supported and $\mu_j$ bounded, and its Jacobian at $x \in \rR$ is given by the matrix $J(x)$ with coefficients
\begin{align}
  J_{ij}(x) = \frac{\partial g_i(x)}{\partial x_j} = e^{g_i(x)} \int_0^\infty w_i(E) \mu_j(E) e^{-M(E)\cdot x} dE ,\qquad 1 \leq i,j \leq n.
\end{align}
We note that $\mu_j(E)\geq 0$ and $w_i(E)\geq0$ for all $1 \leq i, j \leq n$, and that all entries of the Jacobian matrix are strictly positive.

By the inverse function theorem, if $\det J(x) \neq 0$, then the map $g$ is locally injective. For $n=2$, Alvarez \cite{Alvarez19} studied the invertibility of $g$ by testing for zero values of the Jacobian. In a previous work \cite{BalTer20}, we proved that local injectivity of $g$ guarantees injectivity in the whole domain $\rR$. This may not be true for a general map, but it holds in our case because of the positivity of the entries of the Jacobian matrix $J$. On the other hand, for $n \geq 3$, nonvanishing of the Jacobian determinant is not sufficient, and thus we need to impose further conditions on the Jacobian matrix to ensure global injectivity. This is given in the following theorem.

\begin{theorem}[\cite{GaleNikaido}]\label{GaleNikaido}
Let $F : \rR \subset \R^n \to \R^n$ be differentiable on the closed rectangle $\rR$. If the Jacobian $J(x)$ of $F$ is a $P-$matrix for each $x \in \rR$, then $F$ is injective in $\rR$.
\end{theorem}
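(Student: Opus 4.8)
The plan is to argue by contradiction and to reduce the statement to a purely algebraic ``P-function'' inequality. I would first establish (or quote) the standard characterization of P-matrices: a matrix $A \in \R^{n\times n}$ is a P-matrix if and only if for every $z \in \R^n$ with $z \neq 0$ there is an index $i$ with $z_i (Az)_i > 0$. Granting this, it suffices to show that $F$ is a P-function on $\rR$, i.e. that for all $a \neq b$ in $\rR$ one has $\max_{1\le i \le n} (b_i - a_i)\big(F_i(b) - F_i(a)\big) > 0$; indeed this inequality is incompatible with $F(a) = F(b)$, since then every product vanishes, so it yields injectivity at once.

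To set up the analysis I would use that $\rR$, being a Cartesian product of intervals, is convex, so for fixed $a \neq b$ the entire segment $x(t) = a + t(b-a)$, $t \in [0,1]$, lies in $\rR$. Writing $z = b - a \neq 0$ and $p_i(t) = z_i\big(F_i(x(t)) - F_i(a)\big)$, we have $p_i(0) = 0$ and, by the chain rule, $p_i'(t) = z_i (J(x(t))\,z)_i$. Since $J(x(t))$ is a P-matrix and $z \neq 0$, the characterization gives $\max_i p_i'(t) > 0$ for every $t \in [0,1]$. The target inequality $\max_i p_i(1) > 0$ would then follow if one could propagate the pointwise positivity of $\max_i p_i'$ into positivity of the envelope $P(t) := \max_i p_i(t)$ at $t = 1$, for instance by tracking the one-sided (Dini) derivatives of $P$, which at each $t$ are governed by the derivatives $p_i'(t)$ of the currently active indices.

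The hard part, and where the naive monotonicity argument stalls, is precisely this last step: the index $i$ for which the P-matrix property guarantees $p_i'(t) > 0$ need not attain the maximum $P(t) = \max_i p_i(t)$, so pointwise positivity of $\max_i p_i'$ does not by itself force $P$ to increase. Resolving this mismatch between the ``positive'' index and the ``active'' index is the crux. My primary plan is an induction on the dimension $n$, using the structural fact that every principal submatrix of a P-matrix is again a P-matrix: at an extremal configuration one restricts the problem to the active coordinate set $S$, where $F$ induces a lower-dimensional map whose Jacobian is the principal submatrix $J_{SS}$, and applies the inductive hypothesis to extract an active index carrying the required strict sign, thereby contradicting extremality. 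Making this restriction rigorous — controlling the off-active entries and checking that the reduced map genuinely inherits the P-function property — is the main technical obstacle I anticipate. The base case $n = 1$ is immediate, since a $1\times 1$ P-matrix is a positive scalar and $F$ is then strictly increasing. As a fallback I would use Gale--Nikaido's original route: since all principal minors of a P-matrix are positive, in particular $\det J(x) > 0$ throughout $\rR$, one can run a topological degree/boundary argument on the rectangle that sidesteps the active-index bookkeeping entirely.
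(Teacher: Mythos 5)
The paper itself does not prove this theorem: it quotes it from Gale--Nikaido and, for the underlying mechanism, relies on \cite[Theorem 5.2]{MoreRheinboldt} (Jacobian a $P-$matrix on a rectangle implies $F$ is a $P-$function, hence injective). Your plan follows exactly this route, and your setup is correct as far as it goes: the sign-reversal characterization of $P-$matrices you invoke is theorem \ref{t:GeoCharP} of the appendix, the reduction of injectivity to the $P-$function inequality is sound, and $p_i'(t)=z_i\bigl(J(x(t))z\bigr)_i$ together with that characterization does give $\max_i p_i'(t)>0$ for every $t$. You also correctly diagnose why this pointwise statement does not close the argument.

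However, there is a genuine gap precisely at that crux, and you leave it open. The proposed induction on $n$ via principal submatrices only lowers the dimension when some coordinate of $b-a$ vanishes; in the generic case where all coordinates of $b-a$ are nonzero, restricting to the ``active set'' changes nothing, and the entire difficulty of Gale--Nikaido's proof is to manufacture such a reduction (they minimize a linear functional over $\{x\in[a,b]:F(x)\le F(a)\}$ and use Kuhn--Tucker conditions to force a coordinate constraint to bind, which is exactly where \emph{all} principal minors, not just $\det J$, are needed). Your proposal explicitly flags this step as ``the main technical obstacle'' without carrying it out, so the theorem is not proved. The fallback is also not viable: $\det J(x)>0$ throughout $\rR$ does not imply global injectivity --- the map $(x,y)\mapsto(e^x\cos y,\,e^x\sin y)$ has everywhere positive Jacobian determinant yet is not injective on large rectangles --- and the paper itself stresses that for $n\ge 3$ nonvanishing of the determinant is insufficient; a degree argument would require injectivity on the boundary, which is essentially the statement to be proved.
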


A matrix $A$ is called a $P-$matrix if all principal minors of $A$ are positive \cite{FiedlerPtak}. Principal minors of an $n\times n$ matrix $A$ are defined as follows. Let $K$ be a subset of $\langle n \rangle= \{1, \dots, n\}$. We denote by $A_K$ the submatrix of $A$ formed by deleting the rows and columns with indices in $K$. The principal minor of $A$ associated to $K$, denoted by $[A]_K$, is the determinant of $A_K$. We let $[A]_{\langle n \rangle} = 1$.

A map $F:\rR \subset \R^n \to \R^n$ is called a $P-$function if for any $x, y \in \rR, x \neq y$, there exists an index $k=k(x,y)$ such that 
$$(x_k-y_k)(f_k(x)-f_k(y))>0.$$
Here $x_k$ and $f_k(x)$ are the $k$-th components of $x$ and $F(x)$, respectively \cite{MoreRheinboldt}.

It is known that \cite[theorem 5.2]{MoreRheinboldt}, a differentiable map $F$ defined on a rectangle $\rR \subset \R^n$ is a $P-$function if its Jacobian $J(x)$ is a $P-$matrix for all $x \in \rR$. Thus, a $P-$function $F$ is injective and hence invertible on its range $F(\rR)$. Moreover, the inverse is a $P-$function as well (\cite[theorem 3.1]{MoreRheinboldt}). 

The class of $P-$matrices includes positive quasi-definite matrices as well as strictly diagonally dominant matrices with positive diagonal entries. According to our numerical experiments in \cite{BalTer20}, the Jacobians in ME-CT are often $P-$matrices for varying spectral weights, but they are neither quasi-definite nor diagonally dominant. If the Jacobian matrix is quasi-definite or strictly diagonally dominant everywhere, then iterative algorithms such as Gauss-Seidel are guaranteed to converge to the global inverse \cite{More1972}. However, in the case of $P-$matrix Jacobians, we were not able to find an algorithm in the literature that is guaranteed to converge. In this paper, we propose such an algorithm and prove its convergence for which we need an estimate for the Lipschitz constant of the inverse map. Such a result was given in our previous paper \cite[Theorem 9]{BalTer20} under conditions on the Lipschitz constant of the forward map that were not explicitly formulated. In this paper, we make these conditions explicit and present new estimates.

\medskip

The paper is organized as follows. Section 2 presents quantitative estimates for injectivity of $P-$functions. In section 3, we propose a damped newton type algorithm for the inversion of $P-$functions. Section 4 contains the application of our results to ME-CT in the two and three materials settings with an equal number of measurements. 

We use the following notation:  $\langle n \rangle = \{1, \dots, n\}$, $\I$: the identity matrix, $\|x\|$: the Euclidean norm, $\|x\|_{\infty} = \max_{i \in \langle n \rangle} |x_i|$, $\norm{A} = \max_{\norm{x} = 1}\norm{Ax}$, and $ \vvvert A \vvvert = \max_{i,j \in \langle n \rangle} |a_{ij}|$.

%%-----------------------------------------------------------------------
\section{Injectivity and its quantitative estimation}
%%-----------------------------------------------------------------------
Let $\rR \subset \R^n$ be a closed rectangle. Suppose that $F: \rR \to \R^n$ is a continuously differentiable map with Jacobian matrix $J(z)$, $z \in \rR$. The Lipschitz constant of $F$ in $\rR$ is given by $L = \textstyle \max_{z \in \rR} \vvvert J(z)\vvvert.$ In this section, we prove that the inverse map $F^{-1}$ is also Lipschitz continuous on $F(\rR)$, and provide a bound for its Lipschitz constant. To this end, we extend the map $F$ to $\R^n$ using $\hF: \R^n \to \R^n$ defined by
\begin{align}\label{Fext}
\hF(x) = F(P(x))+L(x-P(x)),
\end{align}
where $P: \R^n \to \rR$ is the orthogonal projection map given by $\textstyle P(x) = \argmin_{z \in \rR} \|x-z\|$. Note that $P(x)=x$ when $x\in\rR$ while $P(x)\in\partial\rR$ when $x\in\R^n\backslash \rR$. The above extension with $L=1$ was used by Mas-Colell \cite{MasColell} in proving theorem \ref{GaleNikaido} of Gale and Nikaido for polyhedral domains.

Since $F$ and $P$ are Lipschitz continuous, $\hF$ is also Lipschitz continuous, and hence almost everywhere differentiable (by Rademacher's theorem). Moreover, for all $x \in \R^n$ where $\hF$ is differentiable, the Jacobian of $\hF$ is given by
\begin{align} \label{Jhat}
\hJ(x) = J(P(x))DP(x) + L(\I-DP(x)),
\end{align}
with $DP(x)$ being an $n \times n$ diagonal matrix with diagonal entries either 0 or 1. Observe that the Lipschitz constant of $\hF$ is also equal to $L$. We now state our quantitative injectivity estimate.

\begin{theorem}\label{q-injectivity}
 Let $\rR \subset \R^n$ be a closed rectangle. Suppose that $F: \rR \to \R^n$ is a continuously differentiable map with a $P-$matrix Jacobian $J(z)$ at every $z \in \rR$. Then, for all $x, y \in \R^n$, for the extended map $\hF$, we have 
\begin{align}\label{invFh is Lipschitz}
 \|\hF(x)-\hF(y)\|_\infty \geq \tau \|x-y\|_\infty,
\end{align}
where
\begin{align}\label{tau}
\tau = n^{-\frac{1}{2}}(n-1)^{\frac{1-n}{2}} \min_{z \in \rR} \min_{K \subset \langle n\rangle}  L^{|K|+1-n}[J(z)]_K.
\end{align}
In particular, for all $x, y \in \rR$, 
\begin{align}\label{invF is Lipschitz}
 \|F(x)-F(y)\|_\infty \geq \tau \|x-y\|_\infty.
\end{align}
\end{theorem}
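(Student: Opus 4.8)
The plan is to deduce the lower (co-Lipschitz) bound for $\hF$ from a uniform upper bound on the Lipschitz constant of its inverse, which lets me sidestep the fact that the forward Jacobian cannot be integrated directly (see the obstacle below). As a first step I would record the algebraic structure of $\hJ$. Writing $S=S(x)=\{i:DP(x)_{ii}=1\}$ for the active set of the projection and $K=\langle n\rangle\setminus S$ for its complement, formula \eqref{Jhat} shows that the columns of $\hJ(x)$ indexed by $S$ coincide with those of $J(P(x))$, while each column $i\in K$ equals $Le_i$. Expanding the determinant along the latter columns gives, at every point of differentiability,
\begin{align}\label{detJhat}
\det\hJ(x)=L^{|K|}\,[J(P(x))]_{K},
\end{align}
and the same computation applied to each principal submatrix shows that $\hJ(x)$ is again a $P-$matrix, since $J(P(x))$ is. In particular $\det\hJ(x)\ge \min_{z\in\rR}\min_{K\subseteq\langle n\rangle}L^{|K|}[J(z)]_K>0$.

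Next I would estimate $\norm{\hJ(x)^{-1}}_{\infty\to\infty}$ pointwise. All entries of $\hJ(x)$ are bounded in modulus by $L$, so Cramer's rule together with Hadamard's inequality applied to the $(n-1)\times(n-1)$ cofactors bound the absolute row sums of $\hJ(x)^{-1}$ by $\sqrt{n}\,(n-1)^{(n-1)/2}L^{\,n-1}$, giving
\begin{align}\label{invbound}
\norm{\hJ(x)^{-1}}_{\infty\to\infty}\le \frac{\sqrt{n}\,(n-1)^{(n-1)/2}L^{\,n-1}}{\det\hJ(x)}.
\end{align}
Combining \eqref{invbound} with \eqref{detJhat} and taking the infimum over $x\in\R^n$ (equivalently over all $z\in\rR$ and all index sets $K$) reproduces exactly the constant $\tau$ of \eqref{tau}: for a.e. $x$ one has $\norm{\hJ(x)^{-1}}_{\infty\to\infty}\le \tau^{-1}$, i.e. the infinitesimal co-Lipschitz bound $\norm{\hJ(x)v}_\infty\ge\tau\norm{v}_\infty$ for every $v$.

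The main obstacle is the passage from this pointwise estimate to the global bound \eqref{invFh is Lipschitz}. One cannot simply integrate $\hJ$ along the segment joining $x$ and $y$: the class of $P-$matrices is not convex, so the averaged Jacobian $\int_0^1\hJ(y+t(x-y))\,dt$ need not even be invertible, and the forward estimate is destroyed by cancellation. I would instead bound the inverse map, where the triangle inequality gives an upper bound with no cancellation. Since $\hF$ is a $P-$function and proper (as $\hF(x)=F(P(x))+L(x-P(x))$ with the first term bounded on $\rR$), it is a homeomorphism of $\R^n$ onto $\R^n$; let $G=\hF^{-1}$. To see that $G$ is locally Lipschitz I would verify that every matrix in Clarke's generalized Jacobian of $\hF$ is nonsingular. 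At a point $x_0$ on the common boundary of several projection regions, this generalized Jacobian is the convex hull of the finitely many limiting matrices $\hJ^{(\sigma)}$, which differ only in the columns indexed by the boundary coordinates, each such column being either the corresponding column of $J$ or $Le_i$. The determinant is multi-affine in these columns, hence its minimum over the convex hull is attained at a vertex; as each vertex determinant is a positive quantity of the form \eqref{detJhat}, every element of the generalized Jacobian is nonsingular with determinant $\ge\min_z\min_K L^{|K|}[J(z)]_K$ and inverse controlled by \eqref{invbound}. Clarke's Lipschitz inverse function theorem then makes $G$ locally Lipschitz with $\norm{DG}_{\infty\to\infty}\le\tau^{-1}$ a.e., and the mean value inequality for Lipschitz maps (via mollification, on the convex domain $\R^n$) yields $\norm{G(a)-G(b)}_\infty\le\tau^{-1}\norm{a-b}_\infty$. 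Taking $a=\hF(x)$, $b=\hF(y)$ gives \eqref{invFh is Lipschitz}, and restricting to $x,y\in\rR$, where $\hF=F$, gives \eqref{invF is Lipschitz}.
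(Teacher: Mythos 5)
Your argument is correct in substance but takes a genuinely different route from the paper. The paper never works with the nonsmooth extension $\hF$ directly: it introduces the $C^1$ regularized extension $\hF_\e$ built from the smoothed projection $P_\e$ of \eqref{regularized_projection}, shows $\hJ_\e$ is a $P-$matrix everywhere so that $\hF_\e$ is a global diffeomorphism, applies the classical mean value theorem to the smooth inverse $\hF_\e^{-1}$ together with the same Cramer/Hadamard bound and the same lower bound on $\det\hJ_\e$ via the principal-minor expansion (there with weights $\prod(1-p_\e')\prod p_\e'$ in place of your $0/1$ active sets), and finally lets $\e\to0$ to recover \eqref{invFh is Lipschitz}. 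You instead stay with $\hF$ itself and replace smoothness by Clarke's generalized Jacobian: the observation that every element of the generalized Jacobian at a corner point has $i$-th column of the form $t_iJ_i+(1-t_i)Le_i$, so that its determinant is multi-affine in $t$ and minimized at a vertex where it equals $L^{|K|}[J(P(x))]_K>0$, is a clean substitute for the paper's $\e$-regularization, and Clarke's inverse function theorem plus the a.e.\ derivative bound for the Lipschitz inverse globalizes the estimate. The core computation is identical in both; the paper's smoothing buys a $C^2$ extension that is reused in section 3 for the damped Newton convergence proof, while your version avoids the limiting argument at the cost of nonsmooth-analysis machinery. Three points to tighten: (i) your claim that $\hF$ is a $P-$function does not follow from the Mor\'e--Rheinboldt criterion, which requires differentiability --- but you do not need it, since Clarke's theorem makes $\hF$ a proper local homeomorphism of $\R^n$, hence a global one; (ii) to write $DG(y)=\hJ(G(y))^{-1}$ a.e.\ you should note that the non-differentiability set $N$ of $\hF$ is Lebesgue-null and that $\hF(N)$ is null because $\hF$ is Lipschitz; (iii) the row-sum bound $\sqrt n\,(n-1)^{(n-1)/2}L^{n-1}/\det\hJ$ for $\norm{\hJ^{-1}}_{\infty\to\infty}$ is asserted rather than derived --- summing $n$ Hadamard-bounded cofactors naively gives a factor $n$ rather than $\sqrt n$ --- although the paper's own passage from $\vvvert\hJ_\e^{-1}\vvvert$ to the $\ell^\infty$ estimate carries the same factor $\sqrt{n}$, so your final constant agrees with \eqref{tau} as stated.
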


The constant $1/\tau$ provides an upper bound for $\hat{M} = \max_{x \in \R^n} \vvvert \hJ(x)^{-1} \vvvert$.

The proof of theorem \ref{q-injectivity} will be given later in this section. Two alternative derivations of \eqref{invF is Lipschitz} with different constants $\tau$ are provided in an Appendix.

\begin{remark}\normalfont
Different extensions than \eqref{Fext} may also be considered.
For instance, when $n=2$ and $J>0$, the constant $L$ may be replaced by any matrix 
$\textstyle \begin{bmatrix}
L & -m \\
-m & L 
\end{bmatrix},$
with 
 $0 \leq m \leq \min_{z \in \rR} \min_{i,j=\{1,2\}} J_{ij}(z)$. Then, $\hF$ is also a $P-$function and $\textstyle \vvvert \hJ(x)^{-1}\vvvert\leq \frac{L}{\det J(P(x))}$. The extension in \eqref{Fext} was chosen to optimize the bound on the norm of the inverse Jacobian.
 \end{remark}

%%%%%%%%%%%%%%%%%%%%%%%%%%%%%%%%%%%%
\subsection{Regularized extension and proof of theorem \ref{q-injectivity}}
Our proof of theorem \ref{q-injectivity} requires a regularized version of the extension \eqref{Fext}. Without loss of generality, we consider $\rR = [0,1]^n$. We first assume that the Jacobian $J(x)$ of $F(x)$ is a $P-$matrix at every $x \in \rR$, so $F$ is globally injective in $\rR$. By continuous differentiability, there exists $\e >0$ such that $J(x)$ is a $P-$matrix at every $x \in \rR_\e = [-\e/2,1+\e/2]^n$.

Let $p_\e: \R \to [-\e/2,1+\e/2]$ be the function defined by
\begin{align}\label{regularized_projection}
 p_\e(x) =
\begin{cases}
-\e/2, & x < -\e,\\
\frac{1}{2} \left( x  + \frac{\e}{\pi}\sin\left(\frac{\pi x}{\e} \right) \right),  & -\e < x <0,\\
x, & 0 \leq x \leq 1,\\
\frac{1}{2} \left( x+1 + \frac{\e}{\pi}\sin\left(\frac{\pi (1-x)}{\e} \right) \right),  & 1 \leq x \leq 1 + \e,\\
1+\e/2, & 1+\e<x.
\end{cases}
\end{align}
For $\e=0.25$, the plots of $p_\e$, and its first and second order derivatives are given in fig. \ref{fig:regularized_projection}.
%%%%%%
\begin{figure}[htbp]
\begin{center}
   \includegraphics[width=0.7\textwidth]{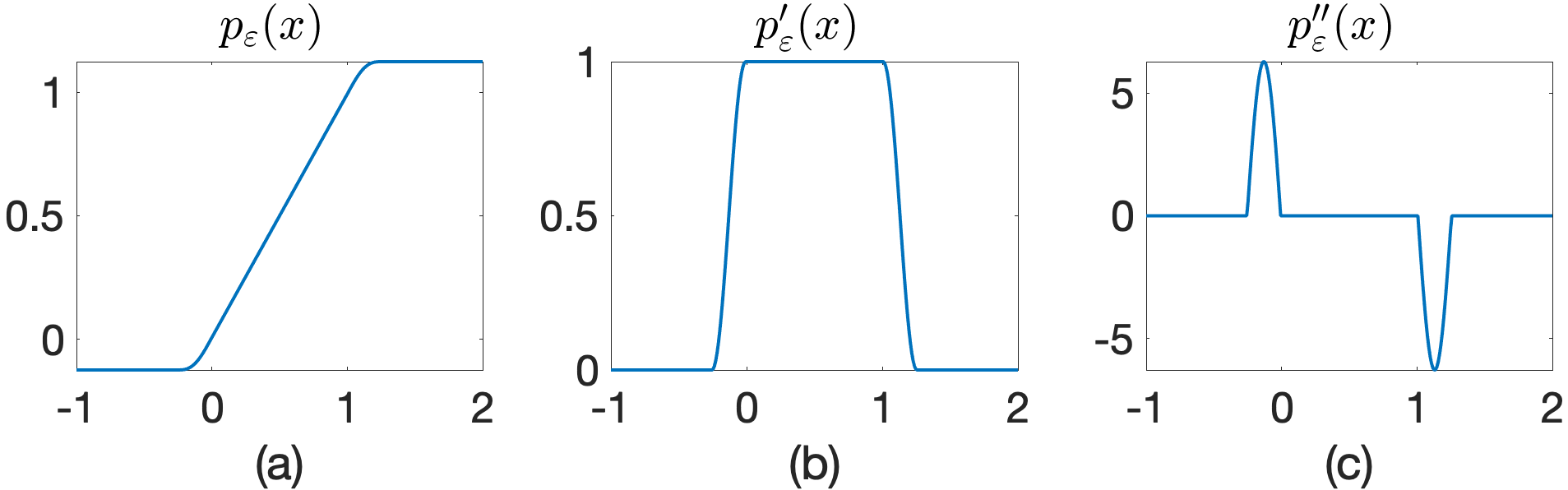}
    \caption{Plots of (a) $p_\e(x)$ given in \eqref{regularized_projection}, and its (b) first and (c) second derivatives when $\e = 0.25$.}
    \label{fig:regularized_projection}
\end{center}
\end{figure}
%%%%%%

Let $L_\e$ be the Lipschitz constant of $F$ in $\rR_\e$, i.e., $L_\e = \max_{z \in \rR_\e} \vvvert J(z)\vvvert$. We define $\hF_\e: \R^n \to \R^n$ by 
\begin{align}\label{Fext_e}
\hF_\e(x) = F(P_\e(x))+L_\e(x-P_\e(x)),
\end{align}
where 
$P_\e(x) = (p_\e(x_i))_{i \in \langle n \rangle}.$ We note that  as $\e \to 0$, we have $P_\e \to P$ and thus  $\hF_\e \to \hF$ pointwise. We then have the following result.

\begin{prop}
$\hF_\e$ extends $F$ from $\rR$ to $\R^n$, and is as smooth as $F$. Moreover, it is a $P-$function, and hence a diffeomorphism.
\end{prop}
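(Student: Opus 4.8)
The plan is to verify the four asserted properties in turn, the $P-$function claim carrying essentially all of the work. \emph{Extension and smoothness} are immediate. For $x\in\rR=[0,1]^n$ every coordinate lies in $[0,1]$, where $p_\e$ is the identity, so $P_\e(x)=x$ and $\hF_\e(x)=F(x)$; thus $\hF_\e$ restricts to $F$ on $\rR$. For regularity it suffices to examine $p_\e$ alone: one checks that its five pieces agree in value and in first derivative at the junctions $x=-\e,0,1,1+\e$ (the sinusoidal arcs are built so that $p_\e'$ matches the value $1$ of the identity piece and the value $0$ of the constant pieces), whence $p_\e\in C^1(\R)$ with $0\le p_\e'\le 1$. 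Consequently $P_\e$, and therefore $\hF_\e=F\circ P_\e+L_\e(\mathrm{id}-P_\e)$, inherits the regularity of $F$.

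Next I would \emph{reduce the $P-$function property to a statement about the Jacobian}. Since $\R^n$ is itself a (closed, unbounded) rectangle and $\hF_\e$ is continuously differentiable on it, the cited criterion \cite[theorem 5.2]{MoreRheinboldt} shows it is enough to prove that the Jacobian is a $P-$matrix everywhere. In analogy with \eqref{Jhat}, at a point of differentiability the Jacobian of $\hF_\e$ equals
\[
\hJ_\e(x)=J(z)\,D+L_\e(\I-D),\qquad z=P_\e(x)\in\rR_\e,\ D=\mathrm{diag}\bigl(p_\e'(x_1),\dots,p_\e'(x_n)\bigr),
\]
a diagonal matrix with entries $d_i=p_\e'(x_i)\in[0,1]$. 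Thus the $j$-th column of $\hJ_\e(x)$ is $d_j$ times the $j$-th column of $J(z)$ plus $L_\e(1-d_j)\,e_j$, where $e_j$ is the $j$-th standard basis vector. By construction $J(z)$ is a $P-$matrix for every $z\in\rR_\e$, so all of its principal minors are positive.

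The crux is \emph{positivity of every principal minor of $\hJ_\e(x)$}, and this is the step I expect to be the main obstacle. Fix $S\subseteq\langle n\rangle$ and expand the principal minor of $\hJ_\e(x)$ on $S$ by multilinearity in the columns: each $j\in S$ contributes either its $J(z)$-part (weight $d_j$) or its $L_\e e_j$-part (weight $L_\e(1-d_j)$). Selecting the $J(z)$-part exactly on a subset $T\subseteq S$ leaves the remaining columns as standard basis vectors, and expanding along them---equivalently, permuting the indices of $S\setminus T$ to the front to produce a block-triangular matrix with an identity block---collapses the determinant to the principal minor of $J(z)$ indexed by $T$ (with a $+$ sign). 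Hence
\[
\det\!\bigl(\hJ_\e(x)\big|_S\bigr)=\sum_{T\subseteq S}\Bigl(\prod_{j\in T}d_j\Bigr)\Bigl(\prod_{j\in S\setminus T}L_\e(1-d_j)\Bigr)\,m_T(z),
\]
where $m_T(z)$ denotes the principal minor of $J(z)$ on the index set $T$ (with $m_\emptyset=1$). Every factor is nonnegative, and each $m_T(z)>0$ since $J(z)$ is a $P-$matrix; moreover the single term $T=\{j\in S:d_j>0\}$ is strictly positive, as there $d_j>0$ on $T$ and $1-d_j=1$ on $S\setminus T$ (and $L_\e>0$). Therefore the sum is strictly positive, $\hJ_\e(x)$ is a $P-$matrix, and by the reduction above $\hF_\e$ is a $P-$function. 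The subtlety is that this does not follow from convexity---the $P-$matrices are not a convex set and $\hJ_\e$ is not a convex combination of the $P-$matrices $J(z)$ and $L_\e\I$---but from the special column-wise interpolation induced by $DP_\e$.

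Finally, the \emph{diffeomorphism} claim: a $P-$function is injective, and $\hJ_\e(x)$ is nonsingular at every point (its determinant is a positive principal minor), so by the inverse function theorem $\hF_\e$ is a local diffeomorphism that is globally injective, hence a diffeomorphism onto its open image. Surjectivity onto $\R^n$ follows from coercivity: since $P_\e$ is bounded while $L_\e(x-P_\e(x))$ grows linearly, $\|\hF_\e(x)\|\to\infty$ as $\|x\|\to\infty$, so $\hF_\e$ is proper and, being a proper local diffeomorphism of the simply connected space $\R^n$, a global diffeomorphism by Hadamard's theorem.
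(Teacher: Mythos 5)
Your proof is correct and follows essentially the same route as the paper's: reduce the $P-$function property to the Jacobian being a $P-$matrix via \cite[theorem 5.2]{MoreRheinboldt}, and establish positivity of each principal minor through the multilinear expansion $\det(\hJ_\e(x)\big|_S)=\sum_{T\subseteq S}\bigl(\prod_{j\in T}d_j\bigr)\bigl(\prod_{j\in S\setminus T}L_\e(1-d_j)\bigr)m_T(z)$, which is exactly the paper's formula \eqref{ detJhat_e} applied to each principal submatrix. Your identification of the specific strictly positive term $T=\{j\in S: d_j>0\}$ and the closing properness/Hadamard argument for surjectivity are slightly more explicit than what the paper records, but the substance is identical.
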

\begin{proof}
Since $P_\e(x) = x$ for all $x \in \rR$ and $\e \geq 0$, we have $\hF_\e \big|_{\rR} = F$. By \cite[theorem 5.2]{MoreRheinboldt}, we know that $\hF_\e$ is a $P-$function if its Jacobian $\hJ_\e(x)$ is a $P-$matrix for all $x \in \R^n$. Observe that
\begin{align} \label{Jhat_e}
\hJ_\e(x) = J(P_\e(x))DP_\e(x) + L_\e(\I-DP_\e(x)),
\end{align}
with $DP_\e(x)$ being an $n \times n$ diagonal matrix with diagonal entries $p'_\e(x_i)$, $i \in \langle n\rangle$.

Therefore,
\begin{align}\label{ detJhat_e}
\det(\hJ_\e(x)) = \sum_{K \subset \langle n\rangle} L_\e^{|K|}\Big(\prod_{k \in K} (1-p'_\e(x_k))\Big) \Big(\prod_{k \in \langle n\rangle \setminus K}p'_\e(x_k) \Big) [J(P_\e(x))]_K.
\end{align}

Now since $P_\e(\R^n) = \rR_{\e}$ and $J(x)$ is a $P-$matrix for all $x \in \rR_{\e}$, we have $[J(P_\e(x))]_K >0$ for all $K \subset \langle n\rangle$. Moreover, $0 \leq p'_\e \leq 1$. Thus, all terms in the sum \eqref{ detJhat_e} are nonnegative and do not vanish at the same time, which implies that $\det(\hJ_\e(x)) >0$ for all $x \in \R^n$. Applying the same argument to principal submatrices, we obtain that all principal minors of $\hJ_\e(x)$ are positive for all $x \in \R^n$. Hence, $\hJ_\e(x)$ is a $P-$matrix at every $x \in \R^n$.
\end{proof}

We now present a quantitative injectivity estimate for $\hF_\e$.
\begin{theorem}\label{t:invF is Lipschitz}
For all $x, y \in \R^n$,   
\begin{align}\label{invFhat is Lipschitz}
 \|\hF_\e(x)-\hF_\e(y)\|_\infty \geq \tau_\e \|x-y\|_\infty,
\end{align}
where
\begin{align}\label{tau_e}
   \tau_\e := n^{-\frac{1}{2}}(n-1)^{\frac{1-n}{2}} \min_{z \in \rR_{\e}} \min_{K \subset \langle n\rangle}  L_\e^{|K|+1-n}[J(z)]_K.
\end{align}
\end{theorem}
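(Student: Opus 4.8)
The plan is to control the extended map through its Jacobian. Since the preceding proposition shows that $\hF_\e$ is a smooth $P-$function, hence a diffeomorphism of $\R^n$, the estimate \eqref{invFhat is Lipschitz} is equivalent to a uniform upper bound on the derivative of the inverse $G_\e=\hF_\e^{-1}$, i.e. on $\hJ_\e(w)^{-1}$ over all $w\in\R^n$. I would produce such a bound entrywise via the adjugate formula $\hJ_\e^{-1}=(\det\hJ_\e)^{-1}\mathrm{adj}(\hJ_\e)$, which reduces the task to (a) a lower bound on $\det\hJ_\e$ and (b) an upper bound on the $(n-1)\times(n-1)$ cofactors.

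For (a) the key is the expansion \eqref{ detJhat_e}. Writing $d_k=p'_\e(x_k)\in[0,1]$ and factoring out $L_\e^{n-1}$, I would observe that
\begin{align*}
\det\hJ_\e(x) = L_\e^{n-1}\sum_{K\subset\langle n\rangle} w_K(x)\, L_\e^{|K|+1-n}[J(P_\e(x))]_K, \quad w_K(x)=\prod_{k\in K}(1-d_k)\prod_{k\notin K}d_k,
\end{align*}
and that the weights satisfy $w_K\ge 0$ and $\sum_K w_K=\prod_k\big((1-d_k)+d_k\big)=1$. Thus $\det\hJ_\e(x)$ is $L_\e^{n-1}$ times a convex combination of the normalized minors $L_\e^{|K|+1-n}[J(P_\e(x))]_K$, all of which are positive since $P_\e(\R^n)=\rR_\e$ and $J$ is a $P-$matrix on $\rR_\e$. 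Hence $\det\hJ_\e(x)\ge L_\e^{n-1}\min_{z\in\rR_\e}\min_K L_\e^{|K|+1-n}[J(z)]_K$, uniformly in $x$. For (b), every entry of $\hJ_\e$ is bounded in modulus by $L_\e$ (each off-diagonal entry is $J_{ij}d_j$, and each diagonal entry is the convex combination $J_{ii}d_i+L_\e(1-d_i)$ of the positive number $J_{ii}\le L_\e$ and $L_\e$), so Hadamard's inequality bounds every cofactor by $(n-1)^{\frac{n-1}{2}}L_\e^{n-1}$. Combining (a) and (b) gives the uniform entrywise bound $\vvvert\hJ_\e(w)^{-1}\vvvert\le (n-1)^{\frac{n-1}{2}}/m_\e$ with $m_\e=\min_{z\in\rR_\e}\min_K L_\e^{|K|+1-n}[J(z)]_K$.

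Finally I would convert this into the $\ell^\infty$ estimate. Writing $G_\e(u)-G_\e(v)=\big(\int_0^1 \hJ_\e(G_\e(v+s(u-v)))^{-1}\,ds\big)(u-v)$ and setting $u=\hF_\e(x)$, $v=\hF_\e(y)$, the uniform bound on $\hJ_\e^{-1}$ controls $\|x-y\|$ in terms of $\|\hF_\e(x)-\hF_\e(y)\|$; passing between the Euclidean and $\ell^\infty$ norms then yields \eqref{invFhat is Lipschitz} with the dimensional prefactor $n^{-\frac12}(n-1)^{\frac{1-n}{2}}$ of \eqref{tau_e}. One tempting shortcut I would deliberately avoid is to write $\hF_\e(x)-\hF_\e(y)=\big(\int_0^1\hJ_\e(y+t(x-y))\,dt\big)(x-y)$ and argue on the averaged matrix: the average of $P-$matrices need not be a $P-$matrix, so its determinant is not controlled, and routing through the inverse diffeomorphism is what makes the argument go through.

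I expect the determinant lower bound to be the crux. Recognizing \eqref{ detJhat_e} as a convex combination of the normalized principal minors is precisely what makes the estimate uniform over all $x\in\R^n$ (equivalently, over all diagonal scalings $d_k\in[0,1]$ arising from $DP_\e$), and it is the reason the normalization $L_\e^{|K|+1-n}$ appears in $\tau_\e$. The remaining difficulty is bookkeeping: tracking the norm conversions carefully enough that the final constant lands on exactly \eqref{tau_e} rather than a cruder multiple of it.
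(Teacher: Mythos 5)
Your proposal is correct and follows essentially the same route as the paper's proof: the adjugate formula with Hadamard's inequality bounding the cofactors by $(n-1)^{\frac{n-1}{2}}L_\e^{n-1}$, the observation that \eqref{ detJhat_e} is a convex combination (weights summing to $\prod_k((1-d_k)+d_k)=1$) giving the uniform lower bound on $\det\hJ_\e$, and a mean-value/integral argument on $\hF_\e^{-1}$ with norm equivalence to produce the prefactor in \eqref{tau_e}. The only cosmetic difference is that you phrase the last step via the integral of $\hJ_\e^{-1}$ along the image segment rather than the componentwise Mean Value Theorem, which is the same idea.
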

\begin{proof}
Let $x, y \in \R^n$. Then $\hF_\e(x)=u$ and $\hF_\e(y)=v$ for some $ u, v \in \R^n$. Applying the Mean Value theorem to $\hF_\e^{-1}$, then using Cauchy-Schwarz inequality and the equivalence of norms, we obtain
\begin{align*}
 \|x-y\|_\infty = \|\hF_\e^{-1}(u)-\hF_\e^{-1}(v)\|_\infty  \leq \sqrt{n} \hat{M}_\e \|u-v\|_\infty,
\end{align*}
where $\hat{M}_\e := \max_{x \in \R^n} \vvvert \hJ_\e(x)^{-1}\vvvert$. 

We now need to find an upper bound for $$ (\hJ_\e(x)^{-1})_{ij} = \frac{(-1)^{i+j}[\hJ_\e(x)]_{\{j\},\{i\}}}{\det \hJ_\e(x)},$$ 
for all $x \in \R^n$ and $i, j \in \langle n \rangle$. By Hadamard inequality (see e.g. \cite[p. 1077]{Gradshteyn}), we have 
$$[\hJ_\e(x)]_{\{i\},\{j\}}\leq (n-1)^{\frac{n-1}{2}}\vvvert \hJ_\e(x)\vvvert^{n-1}.$$
Moreover,
\begin{align*}
\vvvert \hJ_\e(x)\vvvert= \max_{i,j \in \langle n \rangle}  \left\{ J_{ij}(P_\e(x))p'_\e(x_j), J_{jj}(P_\e(x))p'_\e(x_j) + L_\e(1-p'_\e(x_j))\right\} \leq L_\e.
\end{align*}

We also have
\begin{align}\label{lb4detJ}
\det(\hJ_\e(x)) 
&= \sum_{K \subset \langle n\rangle} L_\e^{|K|} \Big(\prod_{k \in K} (1-p'_\e(x_k))\Big) \Big(\prod_{k \in \langle n\rangle \setminus K}p'_\e(x_k) \Big) [J(P_\e(x))]_K\nonumber\\
&\geq \Big[\sum_{K \subset \langle n\rangle} \Big(\prod_{k \in K} (1-p'_\e(x_k))\Big) \Big(\prod_{k \in \langle n\rangle \setminus K}p'_\e(x_k) \Big) \Big] \min_{K \subset \langle n\rangle}  L_\e^{|K|}[J(P_\e(x))]_K\\
&= \min_{K \subset \langle n\rangle} L_\e^{|K|}[J(P_\e(x))]_K\nonumber,
\end{align}
as the sum in square brackets is equal to 1. Hence,
\begin{align}\label{invJnorm_e}
\vvvert \hJ_\e(x)^{-1}\vvvert \leq \frac{ (n-1)^{\frac{n-1}{2}} L_\e^{n-1}}{\min_{K \subset \langle n\rangle} L_\e^{|K|}[J(P_\e(x))]_K}.
\end{align}
Since $P_\e(\R^n) = \rR_{\e}$, by maximizing over all $x \in \R^n$, one obtains that $\hat{M}_\e \leq \frac{1}{\sqrt{n} \tau_\e}$, which yields \eqref{invFhat is Lipschitz}.
\end{proof}

\textit{Proof of Theorem \ref{q-injectivity}.} For $\e \to 0$, we have $\hF_\e \to \hF$ and $\tau_\e \to \tau$. Thus, the result follows from theorem \ref{t:invF is Lipschitz}. \hfill $\square$\\

The following estimates will be used in section 3.
\begin{prop}\label{p:estimate_DDF}
Let $DJ$, $D\hJ_\e$, and $D\hJ_\e^{-1}$ denote the Hessian of $F$, $\hF_\e$ and $\hF_\e^{-1}$, respectively. Then,
\begin{align}\label{Bound4HessianFinv}
\vvvert D\hat{J}_\e^{-1}\vvvert\leq 2n\tau_\e^{-2}\vvvert D\hat{J}_\e \vvvert,
\end{align}
and
\begin{align}\label{Bound4HessianF}
\vvvert D\hat{J}_\e \vvvert \leq \vvvert DJ\vvvert + \frac{\pi}{\e}L_\e.
\end{align}
\end{prop}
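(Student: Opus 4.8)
The plan is to treat the two estimates \eqref{Bound4HessianFinv} and \eqref{Bound4HessianF} separately, since the second is a direct entrywise differentiation of the explicit representation \eqref{Jhat_e} of $\hJ_\e$, whereas the first is a purely algebraic consequence of the derivative-of-inverse formula together with the bound on $\hat{M}_\e$ already established in theorem \ref{t:invF is Lipschitz}. In both cases $\vvvert\cdot\vvvert$ is the entrywise maximum, now applied to the third-order tensors $DJ$, $D\hJ_\e$ (the maxima being understood over $\rR_\e$), so every step reduces to bounding a single scalar $\partial_{x_k}(\cdot)_{ij}$ and then maximizing over $i,j,k$.

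For \eqref{Bound4HessianF} I would differentiate \eqref{Jhat_e} entrywise. Since $(\hJ_\e)_{ij}(x) = J_{ij}(P_\e(x))\,p'_\e(x_j)$ for $i\neq j$ and $(\hJ_\e)_{ii}(x) = J_{ii}(P_\e(x))\,p'_\e(x_i) + L_\e(1-p'_\e(x_i))$, the chain rule produces three kinds of terms in $\partial_{x_k}(\hJ_\e)_{ij}$: a term $(\partial_k J_{ij})(P_\e(x))\,p'_\e(x_k)\,p'_\e(x_j)$ arising from the dependence of $J$ on $P_\e(x)$, and, only when $k=j$ off the diagonal (resp. $k=i$ on the diagonal), a term proportional to $p''_\e$, namely $J_{ij}(P_\e(x))\,p''_\e(x_j)$ and $\bigl(J_{ii}(P_\e(x))-L_\e\bigr)p''_\e(x_i)$. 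The estimate then follows from the three elementary bounds driving this section: $0\le p'_\e\le 1$, so the first term is at most $\vvvert DJ\vvvert$ in absolute value; $|p''_\e|\le \pi/(2\e)$, read off directly from \eqref{regularized_projection}; and $0\le J_{ij}\le \vvvert J\vvvert\le L_\e$, so each $p''_\e$-term is at most $\tfrac{\pi}{\e}L_\e$ (using the crude $|J_{ii}-L_\e|\le 2L_\e$). Taking the maximum over $i,j,k$ yields \eqref{Bound4HessianF}.

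For \eqref{Bound4HessianFinv} the key point is that $D\hJ_\e^{-1}$ is the $x$-derivative of the matrix-valued map $x\mapsto \hJ_\e(x)^{-1}$, which carries only \emph{two} factors of the inverse Jacobian. Writing $A=\hJ_\e$ and using the identity $D(A^{-1}) = -A^{-1}(DA)A^{-1}$, that is $\partial_{x_k}(A^{-1})_{ij} = -\sum_{a,b}(A^{-1})_{ia}(\partial_{x_k}A)_{ab}(A^{-1})_{bj}$, I would bound each entry by the triangle inequality over the $n^2$ index pairs $(a,b)$,
\[
|\partial_{x_k}(A^{-1})_{ij}| \le n^2\,\vvvert A^{-1}\vvvert^2\,\vvvert D\hJ_\e\vvvert \le n^2\,\hat{M}_\e^2\,\vvvert D\hJ_\e\vvvert .
\]
Invoking $\hat{M}_\e\le \tfrac{1}{\sqrt n\,\tau_\e}$ from the proof of theorem \ref{t:invF is Lipschitz} gives $n^2\hat{M}_\e^2\le n\,\tau_\e^{-2}$, and maximizing over entries and over $x$ produces $\vvvert D\hJ_\e^{-1}\vvvert \le n\,\tau_\e^{-2}\vvvert D\hJ_\e\vvvert \le 2n\,\tau_\e^{-2}\vvvert D\hJ_\e\vvvert$, which is \eqref{Bound4HessianFinv} with room to spare.

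The only genuine subtlety — the step I would be most careful about — is the bookkeeping in \eqref{Bound4HessianFinv}: one must recognize that the relevant object is the derivative of the inverse Jacobian in the \emph{input} variable (two factors of $A^{-1}$, hence the scaling $\tau_\e^{-2}$), rather than the honest Hessian of $\hF_\e^{-1}$ in the output variable, which would carry a third factor of $A^{-1}$ and scale like $\tau_\e^{-3}$; the two differ precisely by a contraction with $A^{-1}$. For \eqref{Bound4HessianF} the only thing worth checking is the constant in front of $1/\e$: since $|p''_\e|\le\pi/(2\e)$ and $|J_{ii}-L_\e|\le L_\e$ already give the sharper $\vvvert DJ\vvvert+\tfrac{\pi}{2\e}L_\e$, the stated $\tfrac{\pi}{\e}L_\e$ is a safe and convenient overestimate.
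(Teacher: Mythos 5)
Your proof of \eqref{Bound4HessianF} is essentially the paper's: both differentiate \eqref{Jhat_e} entrywise, isolate the $p_\e''$-terms (which appear only when the differentiation index coincides with a column/diagonal index), and conclude from $0\le p_\e'\le 1$, $|p_\e''|\le \frac{\pi}{2\e}$ and $|J_{ki}(P_\e(x))-L_\e\delta_{ki}|\le L_\e$; like you, the paper in fact obtains the sharper constant $\frac{\pi}{2\e}L_\e$ and states $\frac{\pi}{\e}L_\e$ with slack. For \eqref{Bound4HessianFinv}, however, you take a genuinely different and cleaner route. The paper differentiates the adjugate representation $(\hJ_\e^{-1})_{ij}=(-1)^{i+j}[\hJ_\e]_{\{j\},\{i\}}/\det\hJ_\e$, expands the derivatives of the minors and of the determinant as sums of cofactors weighted by second derivatives of $\hF_\e$, bounds the $(n-1)\times(n-1)$ and $(n-2)\times(n-2)$ minors by Hadamard's inequality, and divides by the lower bound \eqref{lb4detJ} on $\det\hJ_\e$. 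You instead use the identity $D(A^{-1})=-A^{-1}(DA)A^{-1}$ with $A=\hJ_\e$, bound the $n^2$ terms entrywise, and plug in the already-proved bound $\hat{M}_\e\le (\sqrt{n}\,\tau_\e)^{-1}$ from theorem \ref{t:invF is Lipschitz}; this yields the constant $n\tau_\e^{-2}$ in one line, comfortably inside the stated $2n\tau_\e^{-2}$, and avoids the cofactor bookkeeping entirely. (The identity is legitimate here since $p_\e$ in \eqref{regularized_projection} is $C^2$ and $F$ is assumed $C^2$, so $\hJ_\e$ is differentiable.) You also correctly flag the one real subtlety, which the paper glosses over: despite the statement's wording ``Hessian of $\hF_\e^{-1}$,'' what is actually estimated --- in the paper's proof and in the way \eqref{Bound4HessianFinv} is used in theorem \ref{euler-cvg} to bound $c_2=\max\norm{D(D\hF_\e(x)^{-1})v}$ --- is the $x$-derivative of $x\mapsto\hJ_\e(x)^{-1}$, carrying two factors of the inverse Jacobian rather than the three that the honest Hessian of the inverse map would require. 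Your argument is correct as written.
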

\begin{proof}
We continue using the notation $\vvvert A\vvvert = \max_{i,j,k\in \langle n \rangle} |a_{ijk}|$.
For $i,j,k \in \langle n \rangle$, and $x \in \R^n$,
 \begin{align*}
(D\hJ_\e(x)_k)_{ij} &= \frac{\partial^2 (\hF_\e)_k (x)}{\partial x_j\partial x_i} 
 = \frac{\partial^2 F_k (P_\e(x))}{\partial x_j\partial x_i} p_\e^\prime(x_i)p_\e^\prime(x_j)
+ \left(\frac{\partial F_k(P_\e(x))}{\partial x_i}-L_\e\delta_{ki}\right)\delta_{ij}p_\e^{\prime\prime}(x_i)\\
&= p_\e^\prime(x_i)p_\e^\prime(x_j) (DJ(P_\e(x))_k)_{ij} 
+ \delta_{ij}p_\e^{\prime\prime}(x_i)\left((J(P_\e(x))_{ki}-L_\e\delta_{ki}\right).
\end{align*}
Since $|p_\e^{\prime}| \leq 1$ and $|p_\e^{\prime\prime}| \leq \frac{\pi}{2\e}$, the estimate \eqref{Bound4HessianF} holds. We note that since $P_\e(\R^n) = \rR_{\e}$ and $F$ is smooth in $\rR_{\e}$, $DJ$ is bounded, hence so is $D\hJ_\e$.

For the estimate \eqref{Bound4HessianFinv}, we differentiate
\begin{align*}
    (\hJ_\e(x)^{-1})_{ij} = \frac{(-1)^{i+j}[\hJ_\e(x)]_{\{j\},\{i\}}}{\det \hJ_\e(x)},
\end{align*}
to obtain
\begin{align*}
    \frac{\partial}{\partial x_k}(\hJ_\e(x)^{-1})_{ij} = \frac{(-1)^{i+j}\frac{\partial}{\partial x_k}[\hJ_\e(x)]_{\{j\},\{i\}}-(\hJ_\e(x)^{-1})_{ij}\frac{\partial}{\partial x_k}\det \hJ_\e(x)}{\det \hJ_\e(x)}.
\end{align*}
For a matrix $A(x) = (a_{ij}(x))_{i,j \in \langle n \rangle}$, by differentiating $\det A = \sum_{i,j=1}^n (-1)^{i+j}a_{ij}[A]_{\{i\},\{j\}}$, we have
\begin{align*}
    \frac{\partial (\det A(x))}{\partial x_k} 
    = \sum_{i,j=1}^n \frac{\partial (\det A(x))}{\partial a_{ij}} \frac{\partial a_{ij}(x) }{\partial x_k}
    = \sum_{i,j=1}^n (-1)^{i+j}[A(x)]_{\{i\},\{j\}} \frac{\partial a_{ij}(x)}{\partial x_k}.
\end{align*}
Thus,
\begin{align*}
    \frac{\partial}{\partial x_k} (\det \hJ_\e(x))
    = \sum_{i,j=1}^n (-1)^{i+j}[\hJ_\e(x)]_{\{i\},\{j\}} \frac{\partial^2 (\hF_\e)_{i}(x)}{\partial x_k \partial x_j},
\end{align*}
and
\begin{align*}
    \frac{\partial}{\partial x_k}  [\hJ_\e(x)]_{\{j\},\{i\}}
    = \sum_{l,m=1}^{n-1} (-1)^{l+m}[\hJ_\e(x)]_{\{j,l'\},\{i,m'\}} \frac{\partial^2 (\hF_\e)_{l'}(x)}{\partial x_k \partial x_{m'}},
\end{align*}
where $l' = l + \delta_{j\leq l}$ and $m' = m + \delta_{i\leq m}$ with $\delta_{j\leq l} = 1$ if $j\leq l$; and 0, otherwise.

By Hadamard's inequality, we have 
$$[\hJ_\e(x)]_{\{i\},\{j\}}\leq (n-1)^{\frac{n-1}{2}}\vvvert \hJ_\e(x)\vvvert^{n-1},$$
and
$$[\hJ_\e(x)]_{\{i,l'\},\{j,m'\}}\leq (n-2)^{\frac{n-2}{2}}\vvvert \hJ_\e(x)\vvvert^{n-2}.$$
Now using the above estimates in
\begin{align}
\left| \frac{\partial}{\partial x_k}(\hJ_\e(x)^{-1})_{ij} \right|
    \leq \frac{\left|\frac{\partial}{\partial x_k}[\hJ_\e(x)]_{\{j\},\{i\}} \right|+ \left|\hJ_\e(x)^{-1})_{ij}\right| \left| \frac{\partial}{\partial x_k}\det \hJ_\e(x) \right|}{\det \hJ_\e(x)},
    \end{align}
and then maximizing over all $x \in \R^n$, in view of \eqref{tau_e} and \eqref{invJnorm_e}, we obtain \eqref{Bound4HessianFinv}.
\end{proof} 

%%----------------------------------------------
\section{An inversion algorithm for $P-$functions}
%%----------------------------------------------
In section 2, we presented stability estimates for a $P-$function $F$ and its extension $\hat{F}_{\e}$. We are now interested in finding an algorithm to solve the inverse problem $F(x) = y$ where $F$ is a $P-$function defined on a rectangular region. We did not find any standard iterative algorithm that is guaranteed to converge to the unique attractor in the $P-$function setting. In this section, we will propose an algorithm taking the form of damped Newton's method which is guaranteed to converge for a smooth extension $\hat F:=\hat{F}_{\e}$ at a fixed value of $\e$. 

\subsection{Existence of periodic orbit for Newton's method}

Consider the inverse problem $F(x)=y^*$ and the standard iterative Newton's method $F(x_n)+DF(x_n)(x_{n+1}-x_n)=y^*$, or equivalently, $x_{n+1} = x_n+ DF^{-1}(x_n)(y^*-F(x_n))$. We now show that such an algorithm is not guaranteed to converge when $F$ is a $P-$function. Indeed, let $A$ be a $P-$matrix and $F(x)=Ax$ linear in the quadrant $x>0$ (i.e., each coordinate positive). 

Let $\hat F(x)$ be the extension \eqref{Fext} (with $\e=0$) to $\R^n$. The Jacobian of $\hat F$ is then piecewise-constant and equal to $A_q$ for $q$ labeling the $2^n$ quadrants (e.g. equal to identity in the quadrant $x<0$ and to $A$ in the quadrant $x>0$). 

By linearity, using $q=q(x)$ the quadrant label to which $x$ belongs, we observe that 
\[
    x_{n+1} = x_n + A_q^{-1}(x_n)(y^*-\hat F(x_n))= x_n + A_q^{-1}(x_n)(y^*- A_q(x_n) x_n) = A_q^{-1}(x_n)y^*.
\]
For a fixed $y^*$, the above right-hand side thus takes a maximum of $2^n$ values. As soon as $x_{n+1}$ belongs to the same quadrant as $x_n$, then $x_{n+j}=x_{n+1}$ for all $j\geq1$ and the algorithm converges. However, it turns out that cycles in the $x_k$ are quite possible and thus prevent the algorithm from converging to the unique solution $x$ of $F(x)=y^*$ (which does not belong to the cycle). 

As a concrete example in dimension $n=3$, consider
\begin{align}\label{P-matrix:A}
    A = \begin{bmatrix}
    6 & 1 & 7 \\ 7 & 3 & 1 \\ 1 & 8 & 3
    \end{bmatrix}, \quad y^* = \begin{bmatrix}
    -9 \\ -4 \\ -6
    \end{bmatrix}.
\end{align}
Here, $F(x) = Ax$ is a $P-$function for $x > 0$ since $A$ is a $P-$matrix. Incidentally, $A$ has positive entries (as do Jacobian matrix for ME-CT models). We extend $F(x)$ using Mas-Colell's extension \eqref{Fext}. For suitable choices of the initial point $x_0$, we observe a periodic trajectory in Figure \ref{periodic_orbit} (a).

The points involved in the above cyclic trajectories all live away from the hyperplanes separating quadrants. Therefore, for $\e>0$ sufficiently small, $\hat F$ and $\hat F_\e$ coincide in the vicinity of the above numerical trajectories. This shows that the Newton algorithm would also fail to converge for the smooth $P-$function $\hat F_\e$, or as a matter of fact for any possibly $C^\infty$ $P-$function equal to the above $\hat F$ in the vicinity of the trajectories. We have not seen such obstructions to the convergence of the Newton algorithm with smooth functionals in the literature. 

\medskip

To avoid the above periodic orbits in the iterative algorithm, let us consider the following damped Newton's method, 
\begin{align}\label{e:discretization_eps=0}
    x_{n+1}  
    = x_n - D\hat{F}(x_n)^{-1}(\hat{F}(x_n)-y^*)h,
\end{align}
where $h\leq1$ is a constant step size. When $D\hat{F}$ is not defined at $x_n$ belonging to (the closure of) more than one quadrant, we choose $D\hat{F}(x_n) = D\hat{F}_{x_{n-1}}(x_n)$, with 
$\hat{F}_{x_{n-1}}$ the $C^2$ function equal to $\hat{F}$ on the rectangular sector where $x_{n-1}$ lives. 

In the piecewise linear example with $A$ defined in (\ref{P-matrix:A}), we actually observe the persistence of periodic orbits for values of $h<1$, for instance $h=0.8$ in  
 Fig.\ref{periodic_orbit}(b). For $h = 0.7$ however, we observe that the discrete trajectory converges to the target point $\hat{F}^{-1}(y^*)$ as in Fig.\ref{periodic_orbit}(c). In our choice of $F$ and $y^*$, we have $\hat{F}^{-1}(y^*) = y^*$.

\begin{figure}[htbp]
    \centering 
\begin{subfigure}{0.3\textwidth}
  \includegraphics[width=\linewidth]{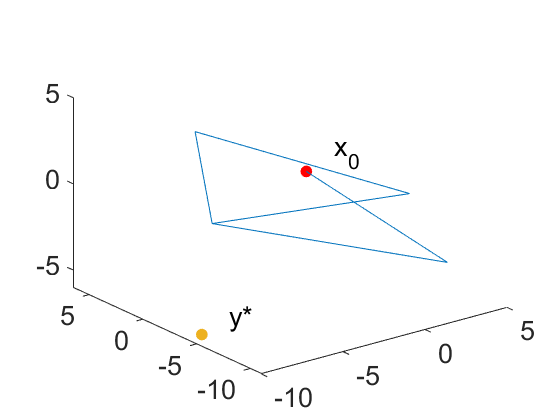}
  \caption{$h=1$}
  \label{fig:1}
\end{subfigure}\hfil 
\begin{subfigure}{0.3\textwidth}
  \includegraphics[width=\linewidth]{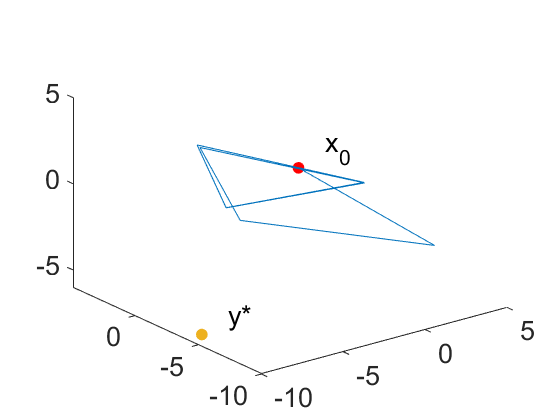}
  \caption{$h=0.8$}
  \label{fig:2}
\end{subfigure}\hfil 
\begin{subfigure}{0.3\textwidth}
  \includegraphics[width=\linewidth]{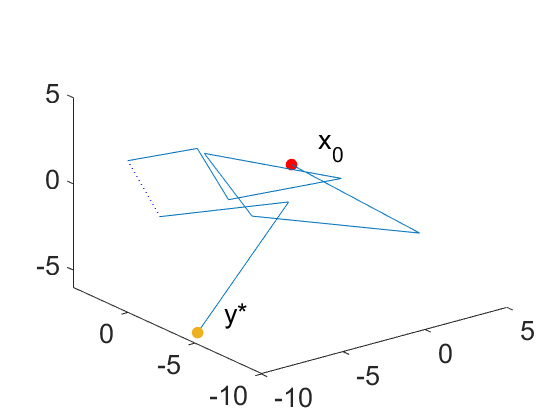}
  \caption{$h=0.7$}
  \label{fig:3}
\end{subfigure}
\caption{Plots of 3D discrete trajectories of the damped Newton algorithm with step sizes (a) h=1; (b) h=0.8; and (c) h=0.7. $x_0$ is the initial point and $y^*=\hat{F}^{-1}(y^*)$ the target point. The dotted line in (c) represents a few steps in the discrete trajectory that are not presented. }
\label{periodic_orbit}
\end{figure}

The above example provide counterexamples to the convergence of the damped algorithm when $h$ is not sufficiently small. We now show that for smooth $P-$functionals such as $\hat F_\eps(x)$, the damped Newton algorithm with $h$ {\em sufficiently small} does indeed converge to the unique solution $\hat F_\eps(x)=y^*$.

\subsection{First-order evolution ODE for injective functions}
%%--------------------------------------------------------------------------
Recall that we are interested in solving the inverse problem $F(x) = y$,  where $F:\ \mathcal{R}\subset \mathbb{R}^n \rightarrow \mathbb{R}^n $ is a continuously differentiable $P-$function defined on rectangle region $\mathcal{R}$.

We showed in section 2 that the $P-$function $F$ defined on a rectangle as well as its extension $\hat{F}_{\e}$ given in \eqref{Fext_e} were injective. We define the associated dynamical system
\begin{align}
    \Psi: \mathbb{R}^n \times \mathbb{R}_+ \rightarrow \mathbb{R}^n, \quad \Psi(y,t) = y^* + e^{-t}(y-y^*), \label{flow in image space}
\end{align}
parametrizing the segment between any $y$ in image space and the target $y^*$.  It is known \cite{MarGorZam94} that for any injective continuous function $F:\Omega \subset \mathbb{R}^n \rightarrow \mathbb{R}^n$ defined on a connected domain $\Omega$, there exists a flow $\Phi:\Omega \subset \mathbb{R}^n \rightarrow \Omega$ satisfying
\begin{align}\label{Phi_def}
    F(\Phi(x,t)) = \Psi(F(x),t),
\end{align}
for all $x \in \Omega, t\geq 0$. When $F$ is differentiable, with fixed initial point $x_0$ (for convenience, we write $\Phi(x_0,t)=x$), we have 
\begin{align*}
    \frac{dF(x)}{dt} &= DF(x) \dot{x}, \qquad 
    \frac{d\Psi(F(x_0),t)}{dt} = \frac{d[F(x^*)+e^{-t}(F(x_0)-F(x^*))]}{dt} = -[F(x) - F(x_*)].
\end{align*}

Since $F(x) = \Psi(F(x_0),t)$, we have $DF(x)\dot{x} = -[F(x)-F(x^*)]$, which implies
\begin{align} \label{origin_ODE}
    \dot{x} = -(DF(x))^{-1}[F(x)-F(x^*)] = -(DF(x))^{-1}[F(x) - y^*].
\end{align}

This yields the following result.
\begin{theorem}[\cite{MarGorZam94}]\label{MG_THM}
Let $F:\Omega \rightarrow \mathbb{R}^n$, where $\Omega \subset \mathbb{R}^n$ is open and connected, be a local diffeomorphism with convex range, and $x^* \in \Omega$, $y^* = F(x^*)$ . The mapping $\Phi$ defined in \eqref{Phi_def} is $C^1$ and it is the flow of the following differential equation
\begin{align}\label{algorithm}
    \dot{x} = -(DF(x))^{-1}[F(x) - y^*] := G(x).
\end{align}
That is, with $x(0) = x_0$,
\begin{align}\label{flow}
    \Phi(x_0,t) = x(t).
\end{align}
\end{theorem}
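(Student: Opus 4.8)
The defining relation \eqref{Phi_def}, namely $F(\Phi(x,t)) = \Psi(F(x),t)$, essentially dictates the formula $\Phi(x,t) = F^{-1}(\Psi(F(x),t))$, so the plan is to make this composition rigorous and then read off the ODE \eqref{algorithm} by differentiation. First I would record the one structural fact that makes everything work: since $\Psi(y,t) = y^* + e^{-t}(y-y^*)$ is the point at parameter $e^{-t}\in(0,1]$ on the segment joining $y$ to $y^*=F(x^*)$, and since both $y=F(x)$ and $y^*$ lie in the range $F(\Omega)$, convexity of $F(\Omega)$ guarantees $\Psi(F(x),t)\in F(\Omega)$ for every $x\in\Omega$ and every $t\geq0$. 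Thus the candidate $\Phi(x,t)=F^{-1}(\Psi(F(x),t))$ takes values in $\Omega$ and is defined for all $t\geq0$, provided $F^{-1}$ makes global sense.

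Establishing that $F^{-1}$ is a well-defined $C^1$ map on the convex set $F(\Omega)$ is the crux, and I expect it to be the main obstacle. A local diffeomorphism need not be injective, so one must upgrade ``local diffeomorphism with convex range'' to ``global diffeomorphism onto its range.'' The cleanest route is a Hadamard--Caccioppoli type global inverse function theorem: a local diffeomorphism from a connected domain that is proper onto a simply connected target is a diffeomorphism, where convexity of the range supplies the simple connectedness and properness is verified along the relevant segments. In the setting of this paper the obstacle is already removed, since $F$ is a $P-$function and hence injective by Theorem \ref{GaleNikaido}; injectivity of a local diffeomorphism immediately yields a $C^1$ inverse on the range $F(\Omega)$, which is open by the open-mapping property.

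Once $F:\Omega\to F(\Omega)$ is a $C^1$ diffeomorphism, the remaining steps are routine. The map $\Phi(x,t)=F^{-1}(\Psi(F(x),t))$ is $C^1$ as a composition of the $C^1$ maps $F$, $F^{-1}$ and the smooth $\Psi$, which gives the asserted regularity. To identify it as the flow, I would fix $x_0$, set $x(t)=\Phi(x_0,t)$, and differentiate $F(x(t))=\Psi(F(x_0),t)$ in $t$: the left side is $DF(x(t))\,\dot x(t)$, while $\partial_t\Psi(F(x_0),t)=-(\Psi(F(x_0),t)-y^*)=-(F(x(t))-y^*)$. Inverting $DF(x(t))$, legitimate since $F$ is a local diffeomorphism, yields $\dot x=-(DF(x))^{-1}(F(x)-y^*)=G(x)$, and $\Phi(x_0,0)=F^{-1}(F(x_0))=x_0$ supplies the initial condition. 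This is precisely \eqref{origin_ODE}; uniqueness of integral curves (ensured by the smoothness of $F$, which makes $G$ locally Lipschitz) then identifies $\Phi(x_0,\cdot)$ as the flow of $G$, i.e.\ \eqref{flow}.

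Finally I would record completeness and the attractor property as a byproduct of the same construction: because $t\mapsto\Psi(F(x_0),t)$ traces a compact segment in $F(\Omega)$ with endpoint $y^*$, its image under the continuous $F^{-1}$ is a compact subset of $\Omega$, so $x(t)$ cannot escape $\Omega$ in finite time and exists for all $t\geq0$, converging to $F^{-1}(y^*)=x^*$ as $t\to\infty$. The differentiation and completeness arguments are then immediate, and the only genuinely delicate point remains the global-injectivity upgrade discussed above.
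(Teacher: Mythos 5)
Your argument matches the paper's: the paper likewise obtains \eqref{algorithm} by differentiating the defining relation $F(\Phi(x_0,t))=\Psi(F(x_0),t)$ from \eqref{Phi_def} in $t$ and inverting $DF(x(t))$, while delegating the existence and $C^1$ regularity of $\Phi$ to the cited reference \cite{MarGorZam94}. Your additional observations --- that convexity of the range keeps $\Psi(F(x),t)$ in $F(\Omega)$, and that the local diffeomorphism must be upgraded to a global one before writing $\Phi=F^{-1}\circ\Psi(F(\cdot),t)$ --- correctly fill in precisely the part the paper leaves to that citation.
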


Equation \eqref{algorithm} provides a continuous version of the damped Newton algorithm in the limit $h\to0$. For any choice of the initial condition $x_0$, the solution $x(t)$ of the dynamical system converges to the desired point $x^*$ (since $F(x(t))$ converges to $y^*$) provided $F$ is an injective differentiable function with convex range. This shows the usefulness of the extension $\hat{F}_{\e}$ since the $P-$function $F$ defined on a rectangle may not have (and indeed does not have in several ME-CT cases) convex range.

\subsection{Convergence of the algorithm for smooth extension}

The damped Newton algorithm, which may be seen as a first-order discretization of \eqref{algorithm}, is given by
\begin{align}\label{discretization}
    x_{n+1} = x_n + G(x_n)h = x_n - D\hat{F}_{\e}(x_n)^{-1}(\hat{F}_{\e}(x_n)-y^*)h, 
\end{align}
with $h>0$ the step size. We now show that for $h$ sufficiently small, the above algorithm converges with linear rate of convergence.  More precisely, we have:
\begin{theorem}\label{euler-cvg}
Let $F: \mathcal{R} \subset \mathbb{R}^n \rightarrow \mathbb{R}^n$ be a $C^2$ $P-$function defined on rectangle region $\mathcal{R}$. Let $\hat{L}_{\eps}$ be the Lipschitz constant of $\hat{F}_{\eps}$ and define $c_1 = \max_{x\in \R^n}\norm{D\hat{F}_{\e}(x)^{-1}}$, $c_2 = \max_{x\in\R^n,\norm{v}=1}\norm{D(D\hat{F}_{\e}(x)^{-1})v}$, and $c_3 = \norm{\hat{F}_{\e}(x_0) - y^*}$. Then there exists $r>0$, such that for all $h<H$ and $i$, 
\begin{align}\label{multi-step-rate}
    \frac{\norm{\hat{F}_{\e}(x_{i+1})-y^*}}{\norm{\hat{F}_{\e}(x_{i})-y^*}} < 1-rh,
\end{align}
where $H = \mathcal{O}(\e)$. In particular, $\hat{F}(x_i)$ converges to $y^*$ with linear rate.
\end{theorem}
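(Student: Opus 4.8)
The plan is to track the residual $\phi(x) := \hat{F}_\e(x) - y^*$ along the iteration $x_{n+1} = x_n + h\,G(x_n)$, where $G(x) = -D\hat{F}_\e(x)^{-1}\phi(x)$, and to compare a single discrete step against the exact flow of \eqref{algorithm}. Recall from Theorem \ref{MG_THM} and \eqref{flow in image space} that if $x(t)$ denotes the solution of $\dot{x} = G(x)$ with $x(0) = x_n$, then $\phi(x(t)) = e^{-t}\phi(x_n)$, so $\|\phi(x(t))\|$ is nonincreasing and $\phi(x(h)) = e^{-h}\phi(x_n)$. The iterate $x_{n+1} = x_n + hG(x_n)$ is precisely the first-order Taylor approximation of $x(h)$, so the whole difficulty reduces to controlling the second-order Euler error.

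First I would record the differential identity that produces the constant $c_2$. Differentiating $G = -D\hat{F}_\e^{-1}\phi$ and using $D\hat{F}_\e^{-1}D\hat{F}_\e = \I$ together with $D\phi = D\hat{F}_\e$, one finds $DG(x) = -\I - \big(D(D\hat{F}_\e(x)^{-1})\big)\phi(x)$, whence $\norm{DG(x) + \I} \le c_2\norm{\phi(x)}$. Consequently the second derivative of the flow, $\ddot{x}(t) = DG(x(t))\,G(x(t))$, satisfies $\norm{\ddot{x}(t)} \le \big(1 + c_2\norm{\phi(x(t))}\big)\,c_1\norm{\phi(x(t))} \le \big(1 + c_2\norm{\phi(x_n)}\big)\,c_1\norm{\phi(x_n)}$ on $[0,h]$, where the last step uses $\norm{G} \le c_1\norm{\phi}$ and the monotonicity of $\norm{\phi(x(t))}$.

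Next I would combine the Taylor remainder $\norm{x(h) - x_{n+1}} \le \tfrac{h^2}{2}\max_{[0,h]}\norm{\ddot{x}}$ with the Lipschitz bound $\norm{\phi(x_{n+1}) - \phi(x(h))} \le \hat{L}_\e\norm{x_{n+1} - x(h)}$ and the inequality $e^{-h} \le 1 - h + \tfrac{h^2}{2}$ to obtain, after dividing by $\norm{\phi(x_n)}$, the one-step estimate
\begin{align*}
\frac{\norm{\phi(x_{n+1})}}{\norm{\phi(x_n)}} \le 1 - h + \frac{h^2}{2}\Big[\,1 + \hat{L}_\e c_1\big(1 + c_2\norm{\phi(x_n)}\big)\,\Big].
\end{align*}
This is where the role of $\e$ becomes visible: by Proposition \ref{p:estimate_DDF} the constant $c_2$ is governed by $\vvvert D\hat{J}_\e^{-1}\vvvert = \mathcal{O}(\e^{-1})$, so the bracketed quantity is $\mathcal{O}(\e^{-1})$, which is exactly what forces the admissible step size to scale like $H = \mathcal{O}(\e)$.

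The main obstacle is that the displayed estimate only guarantees a decrease once $\norm{\phi(x_n)}$ is already under control, since the correction term is quadratic in the residual while $c_2$ is large. I would close this by a simultaneous induction: assuming $\norm{\phi(x_i)} \le c_3$ for $i \le n$, set $\beta := 1 + \hat{L}_\e c_1(1 + c_2 c_3)$ and choose $H := 1/\beta$; then for every $h < H$ the bracket is at most $\beta$ and $\tfrac{h\beta}{2} < \tfrac12$, so the one-step bound gives $\norm{\phi(x_{n+1})}/\norm{\phi(x_n)} < 1 - \tfrac{h}{2}$. This yields \eqref{multi-step-rate} with $r = \tfrac12$ and simultaneously $\norm{\phi(x_{n+1})} < \norm{\phi(x_n)} \le c_3$, completing the induction. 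Since $\beta = \mathcal{O}(\e^{-1})$ we indeed obtain $H = \mathcal{O}(\e)$, and iterating the ratio bound gives $\norm{\hat{F}_\e(x_i) - y^*} < (1 - rh)^i c_3 \to 0$, the asserted linear rate. The remaining points needing care are the global existence of the flow $x(t)$ on $[0,h]$ from each iterate, supplied by Theorem \ref{MG_THM} since $\hat{F}_\e$ is a diffeomorphism with convex range, and the fact that $c_1, c_2, \hat{L}_\e$ are genuinely uniform over $\R^n$ because $\hat{F}_\e$ is $C^2$ with globally bounded derivatives.
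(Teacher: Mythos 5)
Your proposal is correct and follows essentially the same route as the paper: it compares one Euler step against the exact flow of \eqref{algorithm}, bounds the local truncation error by $\tfrac{h^2}{2}\norm{DG}\norm{G}$ with $\norm{G}\le c_1\norm{\phi}$ and $\norm{DG}\le 1+c_2\norm{\phi}$, uses the exponential decay of the residual along the flow, and closes the argument by induction on $\norm{\phi(x_i)}\le c_3$, with $H=\mathcal{O}(\e)$ coming from $c_2=\mathcal{O}(\e^{-1})$ via Proposition \ref{p:estimate_DDF}. The only differences are cosmetic constants (you use $e^{-h}\le 1-h+\tfrac{h^2}{2}$ and obtain $r=\tfrac12$, whereas the paper uses $e^{-h}<1-\tfrac{h}{4}$ and obtains $r=\tfrac18$).
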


The proof follows from the error analysis of the Euler method for the above dynamical system. The constant for the linear convergence rate $r$ may be chosen as, e.g., $r=\frac18$ and the step size $H$ is linear in $\e$; see the proof for a more explicit dependency.

We first prove the following lemma.
\begin{lemma}\label{truncation_error}
Let $x_1$ be the solution to the discretized ODE \eqref{discretization} starting at $x_0$. Let $x(t)$ be the solution to the ODE \eqref{algorithm} with initial condition $x(0) = x_0$. Let $t_1 = h$. Then, we have
\begin{align*}
    \norm{x(t_1)-x_1} \leq \frac{\norm{DG}  \norm{G}}{2}h^2 
\end{align*}
where $\norm{DG} = \max_{t \in [0,t_1],\norm{v}=1} \norm{DG(x(t))v}$ and $\norm{G} = \max_{t \in [0,t_1]} \norm{G(x(t))}$.

\end{lemma}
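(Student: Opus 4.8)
The plan is to treat this as the classical one-step local truncation error estimate for the explicit Euler method applied to $\dot{x}=G(x)$, with $G$ the vector field from \eqref{algorithm}. The first step is to write both the exact and the discrete solution in integral form over $[0,t_1]=[0,h]$. Since $x(t)$ solves \eqref{algorithm} with $x(0)=x_0$, integrating gives $x(t_1)=x_0+\int_0^{t_1}G(x(s))\,ds$, while the Euler step \eqref{discretization} reads $x_1=x_0+G(x_0)h=x_0+\int_0^{t_1}G(x_0)\,ds$. Subtracting, the initial conditions cancel and I obtain $x(t_1)-x_1=\int_0^{t_1}\bigl(G(x(s))-G(x_0)\bigr)\,ds$, so the whole estimate reduces to controlling the increment of $G$ along the exact trajectory.

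Next I would expand $G(x(s))-G(x_0)$ by the fundamental theorem of calculus along the curve $\sigma\mapsto x(\sigma)$. Because $\hat{F}_{\e}$ is $C^2$ (as shown earlier, $\hat{F}_{\e}$ is as smooth as $F$, which is $C^2$ here), the map $G=-(D\hat{F}_{\e})^{-1}(\hat{F}_{\e}-y^*)$ is $C^1$, so the chain rule applies and $\frac{d}{d\sigma}G(x(\sigma))=DG(x(\sigma))\dot{x}(\sigma)=DG(x(\sigma))G(x(\sigma))$, using $\dot{x}=G(x)$. Hence $G(x(s))-G(x_0)=\int_0^s DG(x(\sigma))G(x(\sigma))\,d\sigma$, and with the definitions $\norm{DG}=\max_{t\in[0,t_1],\norm{v}=1}\norm{DG(x(t))v}$ and $\norm{G}=\max_{t\in[0,t_1]}\norm{G(x(t))}$ this yields the pointwise bound $\norm{G(x(s))-G(x_0)}\leq s\,\norm{DG}\,\norm{G}$.

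Finally, I would substitute this bound into the integral representation from the first step and integrate once more: $\norm{x(t_1)-x_1}\leq\int_0^{t_1}s\,\norm{DG}\,\norm{G}\,ds=\frac{\norm{DG}\,\norm{G}}{2}t_1^2=\frac{\norm{DG}\,\norm{G}}{2}h^2$, which is exactly the claim.

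The argument is essentially routine, so there is no genuine analytic obstacle; the only points requiring care are that $G$ is indeed $C^1$ on a neighborhood of the segment traced by $x(\cdot)$ on $[0,t_1]$—guaranteed by the $C^2$ regularity of $\hat{F}_{\e}$ together with the nonvanishing of $\det D\hat{F}_{\e}$ established in Section 2, which makes $(D\hat{F}_{\e})^{-1}$ and hence $G$ continuously differentiable—so that $DG$ exists and the chain rule is legitimate, and that the suprema $\norm{DG}$ and $\norm{G}$ are finite, which follows from continuity of $G$ and $DG$ on the compact trajectory. Everything else is the two applications of the fundamental theorem of calculus above.
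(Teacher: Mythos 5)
Your proof is correct, and it establishes the same standard one-step (local truncation) error bound for the explicit Euler method that the paper proves. The route differs slightly in execution: the paper Taylor-expands $x(t)$ to second order and writes the remainder in Lagrange form, $\frac{h^2}{2}\ddot{x}(\tau)=\frac{h^2}{2}DG(x(\tau))G(x(\tau))$ for a single intermediate $\tau$, whereas you write $x(t_1)-x_1=\int_0^{t_1}\bigl(G(x(s))-G(x_0)\bigr)\,ds$ and then expand the integrand by a second application of the fundamental theorem of calculus, using $\frac{d}{d\sigma}G(x(\sigma))=DG(x(\sigma))G(x(\sigma))$. The two arguments are interchangeable here because both remainders are ultimately dominated by $\max_{t\in[0,t_1]}\norm{DG(x(t))G(x(t))}$; if anything, your integral-form version is marginally more careful for a vector-valued $x(t)$, since the Lagrange remainder with a single intermediate point $\tau$ is, strictly speaking, a componentwise statement (each component getting its own $\tau_i$), while the integral form requires no such caveat. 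Your closing remarks on why $G$ is $C^1$ along the trajectory (smoothness of $\hat F_\e$ plus invertibility of $D\hat F_\e$) and why the suprema are finite are exactly the hypotheses the paper uses implicitly, so nothing is missing.
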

\begin{proof}
Using Taylor's formula, we expand $x(t)$ at time $t_0$ and evaluate at $t_1=h$ to obtain
\begin{align}\label{x(t_1)}
    x(t_{1}) &= x(t_0) + h\dot{x}(t_0) + \frac{h^2}{2}\ddot{x}(\tau)
    = x(t_0) + hG(x(t_0)) + \frac{h^2}{2}DG(x(\tau))G(x(\tau)). 
\end{align}
By definition of $x_1 = x_0 + hG(x_0)$,
\begin{align*}
    x_{1} - x(t_{1}) &= x_0 + hG(x_0) - x(t_{1})
    = - \frac{h^2}{2}(DG(x(\tau))G(x(\tau))).
\end{align*}
Therefore, $\norm{x(t_{1})-x_{1}} \leq  \frac{h^2}{2}\norm{DG(x(\tau))G(x(\tau))} \leq \frac{\norm{DG}\norm{G}}{2}h^2.$
\end{proof}

\begin{proof} (Theorem \ref{euler-cvg}).
We fix $i$ and define $\Delta_i = \norm{\hat{F}_{\eps}(x_i)-y^*}$. Let $x(t) = \Phi(x_i,t)$ where $\Phi(x,t)$ is defined in \eqref{flow} so that $x(0) = x_i$.  Define $E_i=\norm{x_{i+1}-x(h)}$ for convenience.

Since $\hat{F}_{\eps}(x(h))-y^* = e^{-h}(\hat{F}_{\eps}(x_i)-y^*)$ as we see from \eqref{flow in image space} and \eqref{Phi_def}, and $e^{-h} < 1-\frac{1}{4}h$ for $h$ small (in fact, it is true for 0<h<1), we obtain using the Lipschitz continuity of $\hat{F}_{\eps}$ that
\begin{align}\label{linear eh}
     \frac{\norm{\hat{F}_{\eps}(x_{i+1})-y^*}}{\norm{\hat{F}_{\eps}(x_{i})-y^*}} \leq \frac{\norm{\hat{F}_{\eps}(x_{i+1})-\hat{F}_{\eps}(x(h))} + \norm{\hat{F}_{\eps}(x(h))-y^*}}{\Delta_i} 
    \leq  \frac{\hat{L}_{\eps}E_i + (1-\frac{1}{4}h)\Delta_i}{\Delta_i}. 
\end{align}
Note that $\norm{\hat{F}_{\e}(x(t)) - y^{*}} \leq \Delta_i$ since $\hat{F}(x(t))$ lives on a straight line. We use induction to show $\Delta_k \leq c_3$ for any $k$. For $k=0$, we have $\Delta_0 = c_3$, which provides the base case.  We assume after step $k$ (that is we start with $x_k$) that $\Delta_k \leq c_3$, so that
\begin{align*}
    \norm{G(x(t))} &\leq \norm{D\hat{F}_{\e}^{-1}}\norm{\hat{F}_{\e}(x(t))-y^*} \leq c_1\Delta_k \\
    \norm{DG(x(t))} &= \norm{D(D\hat{F}_{\e}(x(t))^{-1})(\hat{F}_{\e}(x(t))-y^*) + \I} \leq c_2\Delta_k + 1 \leq c_2c_3 + 1.
\end{align*}

By Lemma \ref{truncation_error} we have
\begin{align}\label{case1 Ei}
    E_k
    \leq \frac{\norm{G}\norm{DG}}{2}h^2
    \leq \frac{c_1\Delta_k (c_2c_3+1)}{2}h^2 
\end{align}

By \eqref{linear eh} and \eqref{case1 Ei}
\begin{align}\label{smooth-step-analysis3}
    \frac{\norm{\hat{F}_{\e}(x_{k+1})-y^*}}{\norm{\hat{F}_{\e}(x_{k})-y^*}}
    \leq \frac{\hat{L}_{\eps}E_k + (1-\frac{h}{4})\Delta_k}{\Delta_k} 
    \leq 1-\frac{1}{8}h . 
\end{align}
The inequality \eqref{smooth-step-analysis3} holds with $r=\frac18$ when $h$ sufficiently small. In fact, we can choose $h = \frac{1}{4}\frac{1}{\hat{L}_{\eps}c_1(c_2c_3 + 1)}$. By Proposition \ref{p:estimate_DDF}, $c_2 = \mathcal{O}(\frac{1}{\e})$, so $h = \mathcal{O}(\e)$ necessarily. This shows that $\Delta_{k+1} < \Delta_{k} < c_3$. By induction, we conclude $\Delta_k < c_3$ for any $k$.  

With $\Delta_i < c_3$, the result follows from \eqref{case1 Ei} and \eqref{smooth-step-analysis3} by replacing $k$ with $i$.
\end{proof}

\begin{remark}[Quadratic rate of convergence]\normalfont
Theorem \ref{euler-cvg} shows that the algorithm \eqref{discretization} converges to the attractor $x^* = \hat{F}_{\e}^{-1}(y^*)$ with linear convergence rate. When close enough to $x^*$, we may in fact switch to a standard Newton's method
\begin{align*}
    x_{n+1} = x_n - D\hat{F}_{\eps}(x_n)^{-1} (\hat{F}_{\eps}(x_n)-y^*),
\end{align*}
to achieve quadratic convergence rate. The radius of convergence of Newton's method \cite{Rhe78} is $\rho < \frac{2}{3\beta\gamma}$ where $\beta = \norm{D\hat{F}_{\e}(x^*)^{-1}}$ and $\gamma$ satisfies $\norm{D\hat{F}_{\e}(y)-D\hat{F}_{\e}(x)} \leq \gamma\norm{y-x}, \forall x,y\in \mathcal{R}$. 

We may therefore choose $T$ so that $\frac{1}{\tau_{\e}}\norm{y(T)-y^*} < \rho$, where $\tau_{\e}$ is defined by \eqref{tau_e}. In particular, 
\begin{align}\label{eq:Tcrit}
    T &\geq \log\frac{3\beta\gamma\norm{y_0-y^*}}{2\tau_{\e}}.
\end{align}

The above results therefore provide an algorithm with overall {\em quadratic} rate of convergence. We solve the damped Newton algorithm for a finite time $T$ as given in \eqref{eq:Tcrit} (i.e., for a finite number of steps given by $T/h$) to obtain an approximation $y(T)$ sufficiently close to $y^*$. We then switch to a standard Newton algorithm (with $h=1$) whereby obtaining a quadratic rate of convergence to the fixed point $x^*$.
\end{remark}

\subsection{Remarks on the smoothness of the extension $\hat F$}
%%--------------------------------------------------------------------------
The above results show the unconditional convergence of the damped Newton algorithm when $F$ is extended to $\hat F$ and $\hat F$ is sufficiently smooth. When $\hat F=\hat F_\e$ a smooth extension of $F$, then $h$ needs to be chosen of order $\e$. This is not a major constraint in practice since $\e$ may in fact be chosen reasonably large without significantly modifying the stability of the inversion procedure.

\medskip

Consider the two-dimensional $P-$function
\begin{align*}
    F(x,y) = (x, kx^2+y),
\end{align*}
defined on rectangle region $\mathcal{R}:[-1,1]\times[-1,1]$. The extension $\hat{F}$ on $\R^2$ is
\begin{align*}
    \hat{F}(x,y) = \begin{cases}
    (x,y+k) & x<-1, x>1\\
    (x,kx^2+y) & -1 \leq x \leq 1
    \end{cases}
\end{align*}
In Fig.\ref{p:nonconvex_domain}(a), we display the trajectory associated with the damped Newton algorithm for the extension $\hat F$ and for a very small value of $h$. We observe that this trajectory mostly lives outside of the initial rectangular domain.
\begin{figure*}
        \centering
        \begin{subfigure}[b]{0.4\textwidth}
            \centering
            \includegraphics[width=\textwidth]{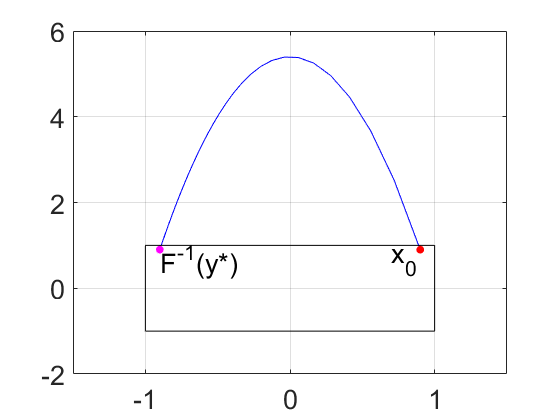}
            \caption{}
            \label{nonconvex-eg1}
        \end{subfigure}
        \begin{subfigure}[b]{0.4\textwidth}  
            \centering 
            \includegraphics[width=\textwidth]{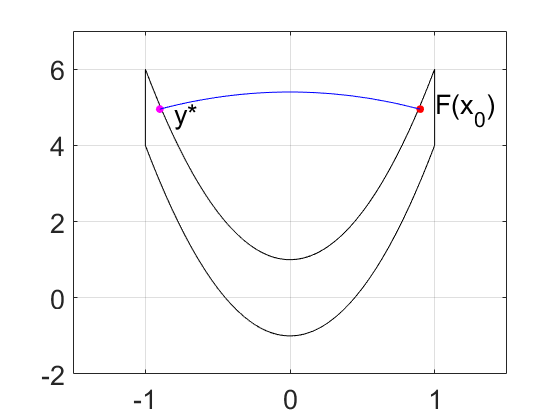}
            \caption{}
            \label{nonconvex-eg2}
        \end{subfigure}
        \caption{The plots of discrete trajectories (blue) of the algorithm in (a) physical space and (b) image space with $k=5, h=0.1$. $x_0$ is the initial point of the algorithm and $y^*$ is the target point in image space.}
        \label{p:nonconvex_domain}
\end{figure*}
This shows that the trajectories of the damped Newton algorithm leave the original domain $\mathcal{R}$ and thus require $F$ to be extended first. There may still be different algorithms allowing us to converge to $F^{-1}(y^*)$ with discrete dynamics staying inside $\mathcal{R}$ but this is not what we considered here.

It would be interesting to see how the algorithm behaves when $\hat F$ is the unregularized extension \eqref{Fext}, or whether $h$ may be chosen independently of $\e$ if the latter is small. We do not have a full theoretical understanding of this case, and in particular do not know if the damped Newton algorithm always converges for the extensions in \eqref{Fext}. 

Here, we provide a family of examples displaying the difficulty to obtain such a convergence result. In particular, we show that the algorithm is not necessarily a contraction at each step in the image space no matter how small $h$ is chosen. This is in sharp contrast with the proofs of convergence of iterative methods for functions with positive definite Jacobians \cite{Rhe78}.

Consider the linear map
\begin{align*}
    F(x,y) &= (x+ky,y), \qquad k > 0,
\end{align*}
defined on rectangular domain $[0,1]\times [0,1]$. The Jacobian matrix of $F$ is a $P-$matrix everywhere inside the domain, so $F$ is a $P-$function. The extension $\hat{F}$ on $\R^2$ is given by
\begin{align*}
    \hat{F}(x,y) = \begin{cases}
    (x,y) & y<0 \\
    (x+ky,y) & 0 \leq y \leq 1 \\
    (x+k,y) & y>1.
    \end{cases}
\end{align*}

For concreteness, choose a initial point $\textstyle x_0 = (\frac{31}{32},\frac{31}{32})$, $ \textstyle y^* = F(x_0) + (\frac{1}{16},\frac{1}{16})$, and $k = 20$. Let $d_i$ be the distance between $\hat{F}(x_i)$ and $y^*$ in the image space. Then, for $h=0.1$, we find $d_6 = 0.0470$, $d_7 = 0.0644$, $d_8 = 0.0579$, $d_{9} = 0.0522$, and $d_{10} = 0.0469$. The increase of the distance at step 7 is due to the crossing of the discontinuity of the Jacobian at the boundary $y=1$. We easily verify that the distance increases when crossing such a singular interface no matter how small $h$ is chosen. The algorithm still converges eventually. We were able to prove (details not shown) for two-dimensional extensions $\hat F$ that the algorithm is always contracting after $m$ steps, i.e., $\|F(x_{mn})-y^*\|$ decreases with $n$ for an appropriately chosen $m$ (equal to $3$ in the above example). The damped Newton algorithm is therefore convergent. By appropriately choosing $k$, we can force $m$ to be as large as we want. 
We were not able to extend the derivation to higher dimensions because of the complexity of the discrete dynamics in the vicinity of the singularities of the Jacobian of the extension $\hat F$. 

%%----------------------------------------------
\section{Application to Multi-energy CT}
%%----------------------------------------------
In this section, we present numerical experiments for multi energy CT transforms with two and three commonly used materials and an equal number of energy measurements. The configuration of parameters was done as follows.
\begin{itemize}
\item The diagnostic energy range $10 - 150$ keV was considered.
\item The energy spectra $S_i, \; i=1,\dots n,$ corresponding to tube potentials $tp_i$ were computed using the publicly available code SPEKTR 3.0 \cite{Spektr3}. For practical purposes, only integer valued tube potentials ranging from 40-150 kVp were considered. We denote $tp = (tp_1, \dots, tp_n)$. We assume that the detectors have linear sensitivity, i.e., $D(E)=E$, which is the case for energy integrating detectors.
\item  The domain of the transform $I$ is chosen as
\begin{align*}
\rR = \Bigg\{ (x_1,\dots,x_n) \in \R_+^n : \; 0 \leq x_j \leq \frac{16}{\displaystyle \max_{10\leq E \leq 150} M_j(E)} \Bigg\},
\end{align*}
where $M_j(E)$ denotes the energy-dependent mass-attenuation of the $j$-th material, and $M(E) = (M_j(E))_{1\leq j\leq n}$.
We note that then $e^{-M(E)\cdot x} \geq e^{-16}$, which is more conservative than necessary in practice.
\end{itemize}

 \subsection{DE-CT: Two materials - two measurements}
 %%--------------------------------------------------------------------
In the following, we consider three different material pairs bone-water, iodine-water, and bone-iodine in the said order, and present the corresponding plots of
\begin{align}\label{hatM2d}
    \hat{M} = \max_{z \in \rR} \vvvert J(z)^{-1}\vvvert= \max_{z \in \rR} \frac{\vvvert J(z)\vvvert}{|\det J(z)|},
\end{align}
and its estimate (see theorem \ref{q-injectivity})
\begin{align}\label{hatM2d_est}
    \hat{M}_{est} = \frac{L}{\min_{z \in \rR} |\det J(z)|}.
\end{align}

%%%Bone-Water
%%%---------
For bone-water material pair, both $\hat{M}$ and $\hat{M}_{est}$ attained their minimum at $(tp_1,tp_2) = (40,150)$, which are 9.03 and 15.95, respectively. A not so good choice for the tube potentials is $(tp_1,tp_2) = (135,150)$ where $\hat{M}$ equals $117.4$ (see fig. \ref{fig:BoneWater}). One would expect the choice $(tp_1,tp_2) = (40,150)$ to lead a better posed problem.
%%%%%%
\begin{figure}[htbp]
   \centering
    \begin{subfigure}{0.21\textwidth}
    \includegraphics[width=\textwidth]{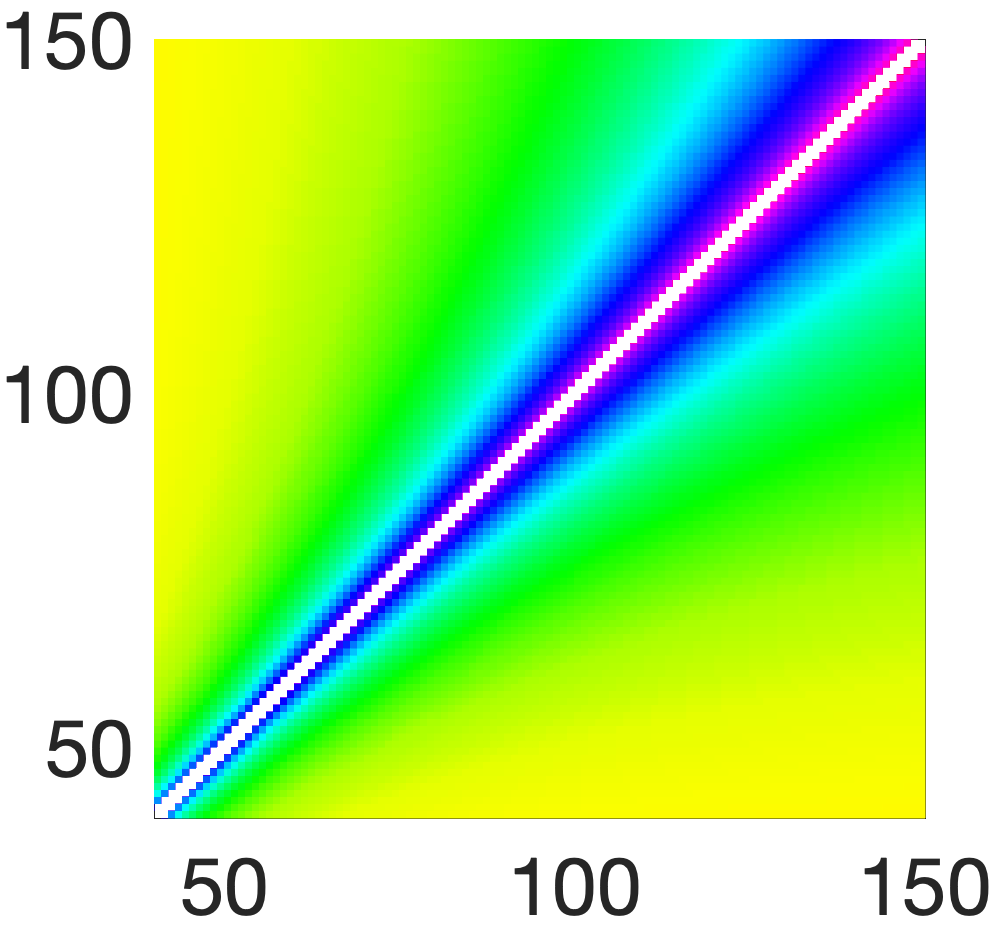}
    \caption{}
    \end{subfigure}%\hspace{-3em}
     \begin{subfigure}{0.25\textwidth}
    \includegraphics[width=\textwidth]{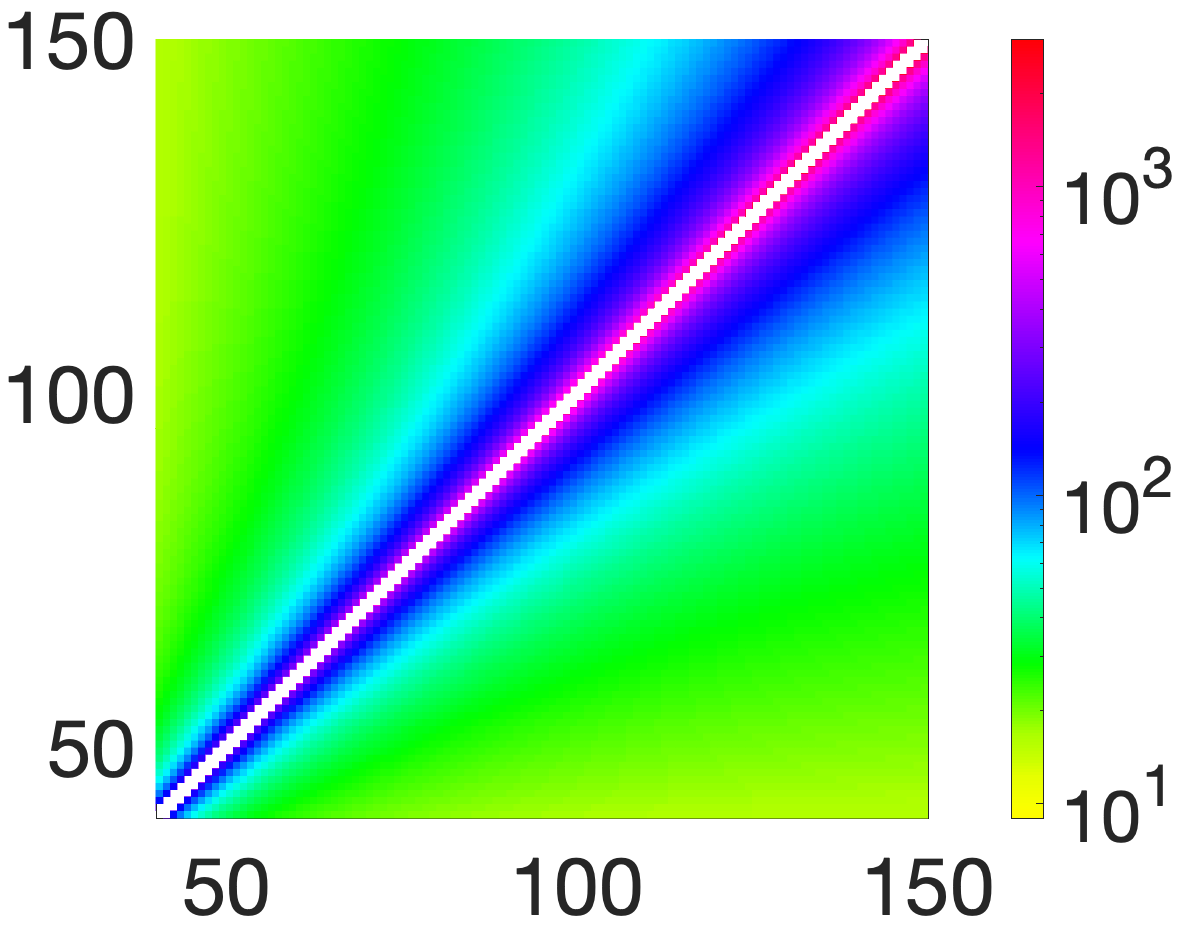}
    \caption{}
    \end{subfigure}%\hspace{-3em}
     \begin{subfigure}{0.22\textwidth}
    	\includegraphics[width=\textwidth]{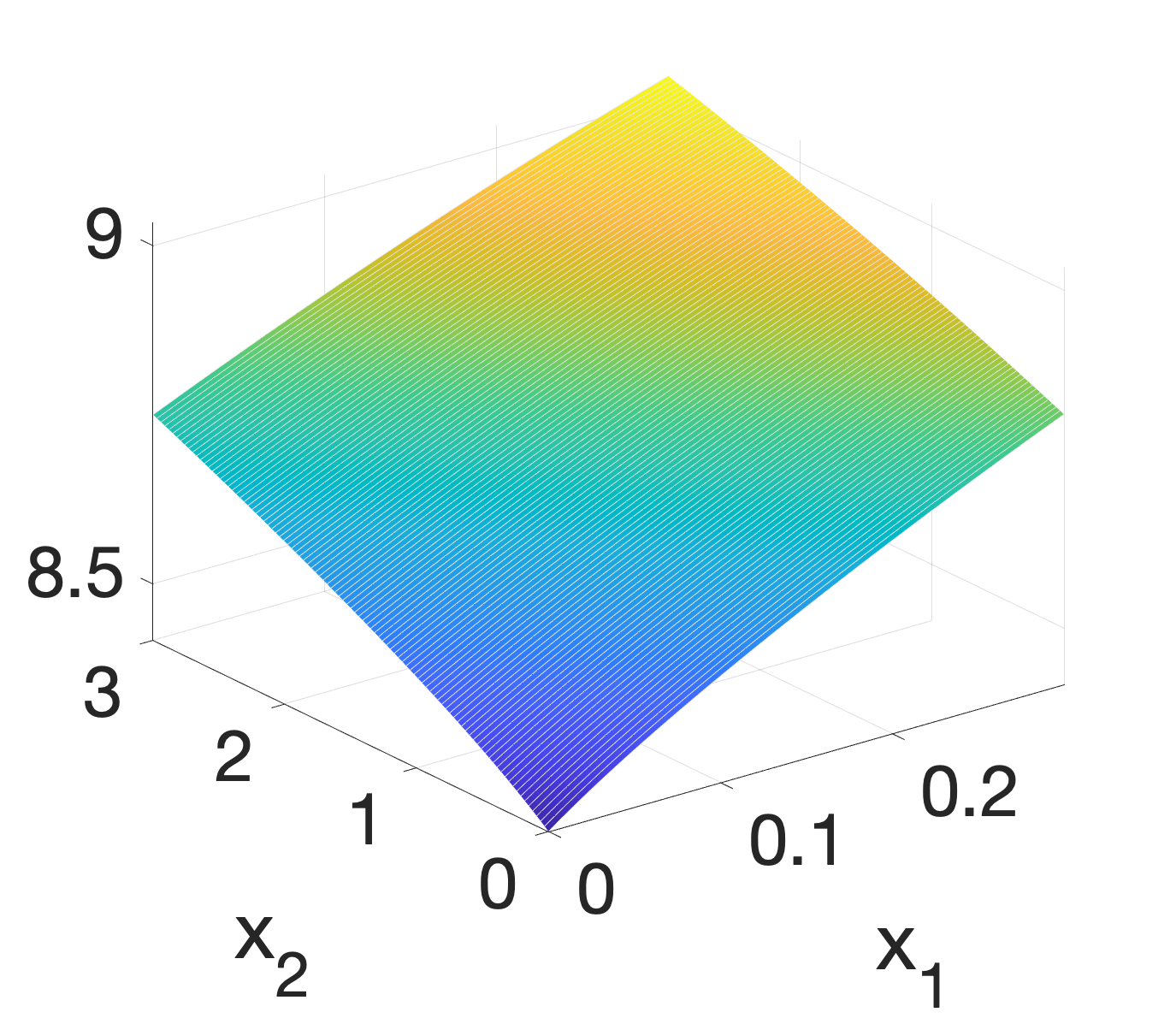}
    	\caption{}
     \end{subfigure}%\hspace{-1em}
          \begin{subfigure}{0.22\textwidth}
    	\includegraphics[width=\textwidth]{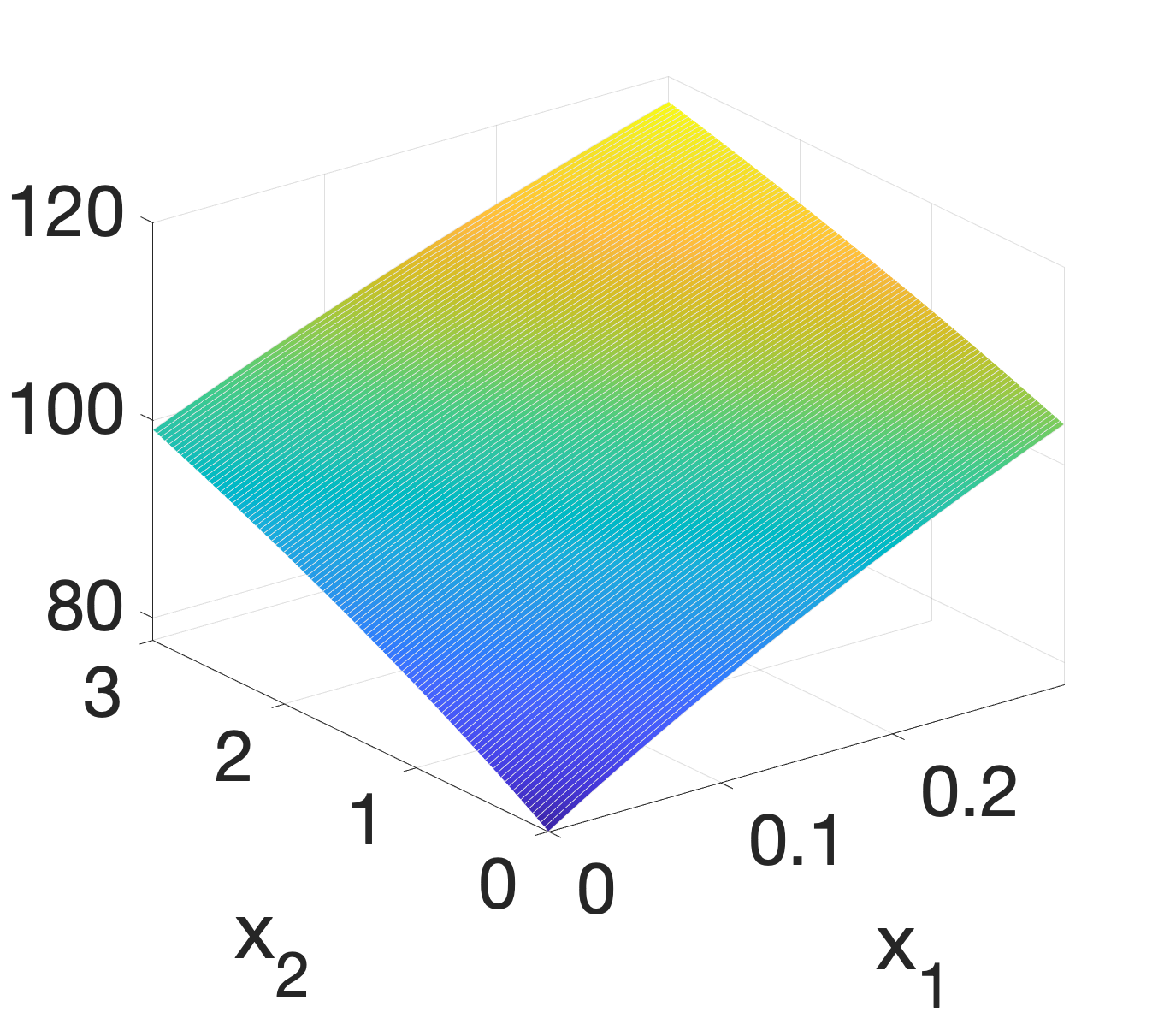}
    	\caption{}
     \end{subfigure}
    \caption{The plots of (a) $\hat{M}$ and (b) $\hat{M}_{est}$ as a function of tube potentials for the material pair (bone,water) in the rectangle $\rR = [0,0.3]\times[0,3]$. The plots of $\vvvert J(x)^{-1}\vvvert$ on $\rR$ when (c) $tp = (40,150)$ and (d) $tp = (135,150)$.}
    \label{fig:BoneWater}
\end{figure}

%%%Iodine-Water
%%%------------
For iodine-water material pair, $\hat{M}$ attained its minimum at $(tp_1,tp_2) = (40,68)$, which is 8.15. $\hat{M}_{est}$ attained its minimum at $(tp_1,tp_2) = (40,66)$, and is 11.3. We note that $\hat{M}(40,66) =8.2$. On the other hand, a poor choice for the tube potentials would be $(tp_1,tp_2) = (55,82)$ where $\hat{M}$ equals $78865$ (see fig. \ref{fig:IodineWater}). We will demonstrate in section \ref{s:reconstruction} that the former indeed leads to considerably better reconstructions.
%%%%%%%%%%%%%%%%%
\begin{figure}[htbp]
   \centering
    \begin{subfigure}{0.21\textwidth}
    \includegraphics[width=\textwidth]{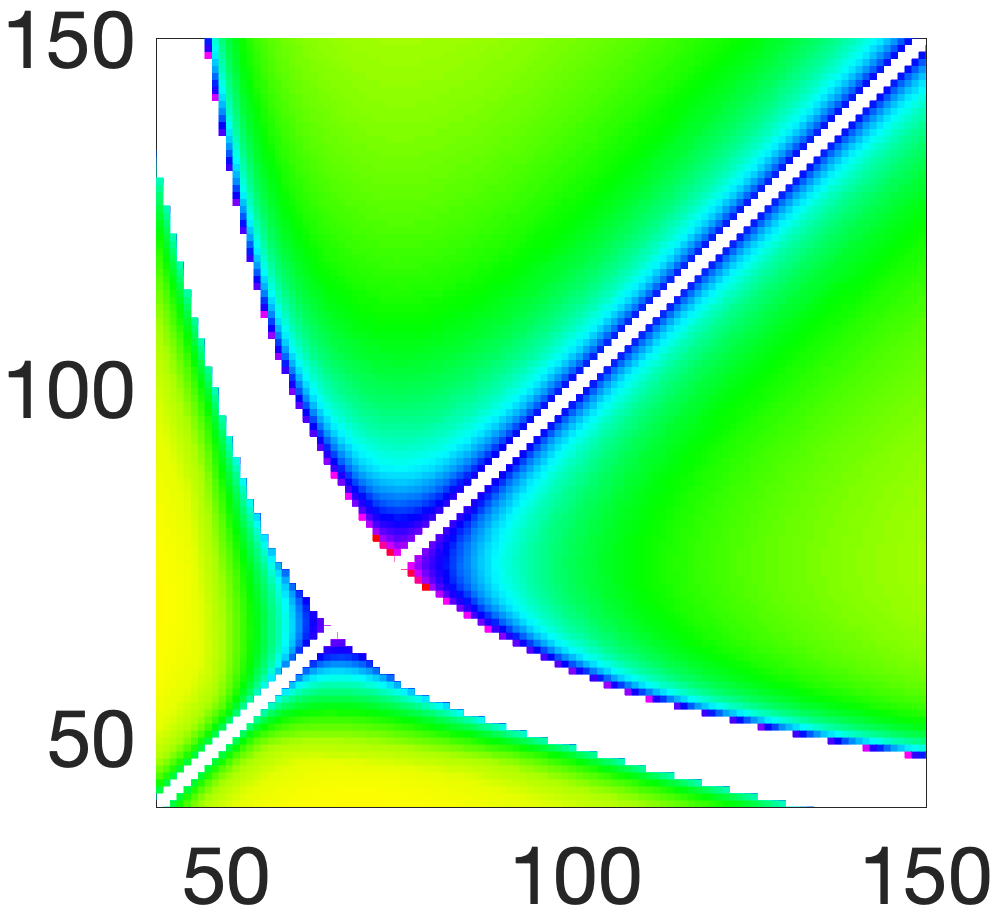}
    \caption{}
    \end{subfigure}%\hspace{-3em}
     \begin{subfigure}{0.25\textwidth}
    \includegraphics[width=\textwidth]{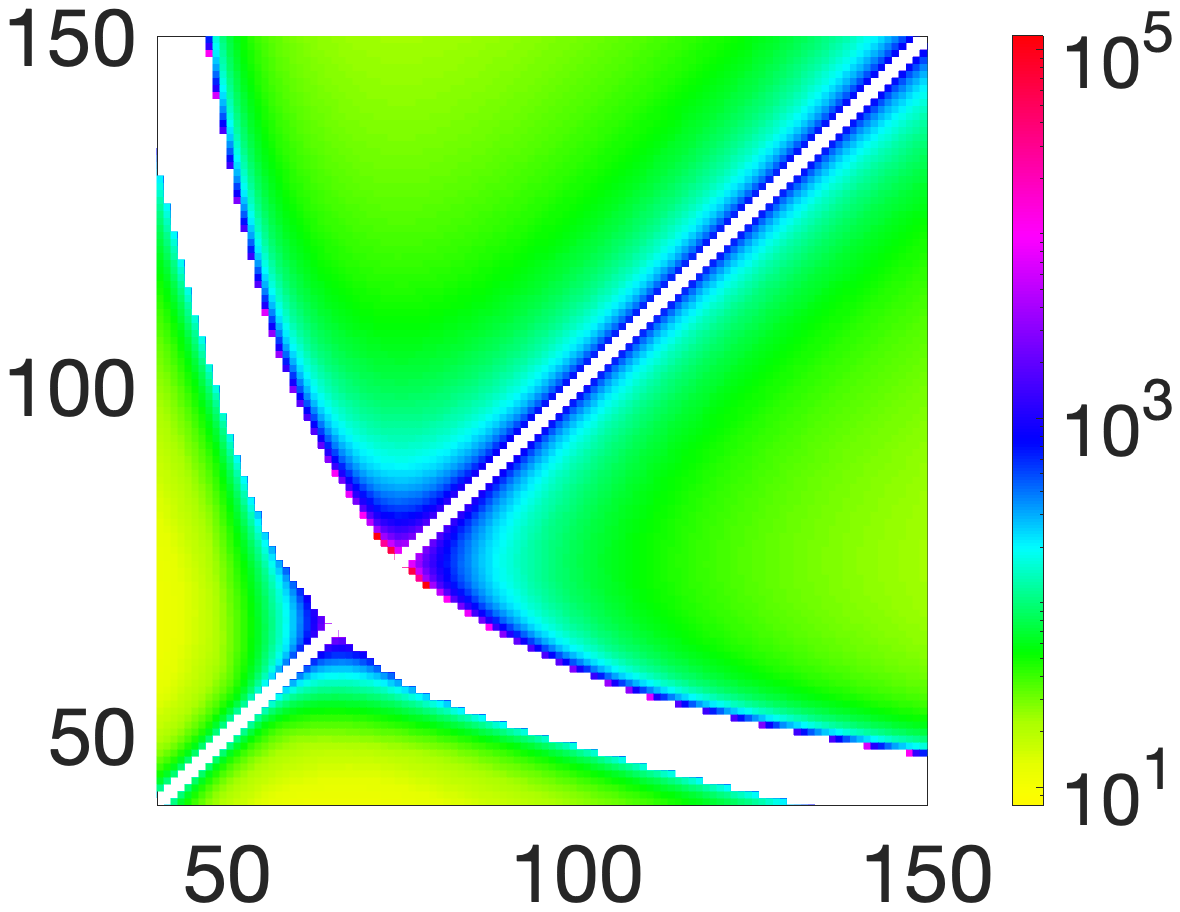}
    \caption{}
    \end{subfigure}%\hspace{-3em}
     \begin{subfigure}{0.22\textwidth}
    	\includegraphics[width=\textwidth]{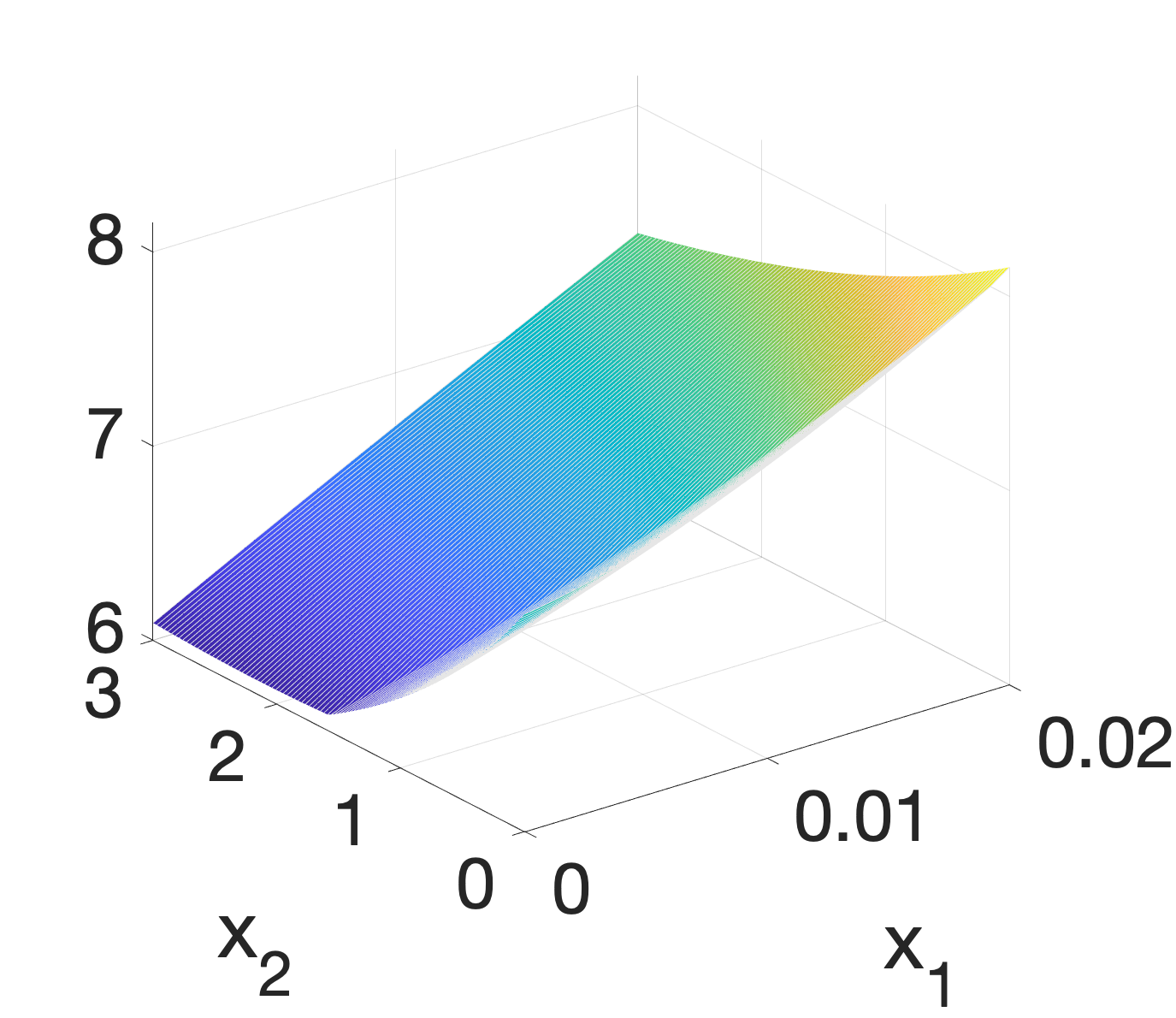}
    	\caption{}
     \end{subfigure}%\hspace{-1em}
          \begin{subfigure}{0.22\textwidth}
    	\includegraphics[width=\textwidth]{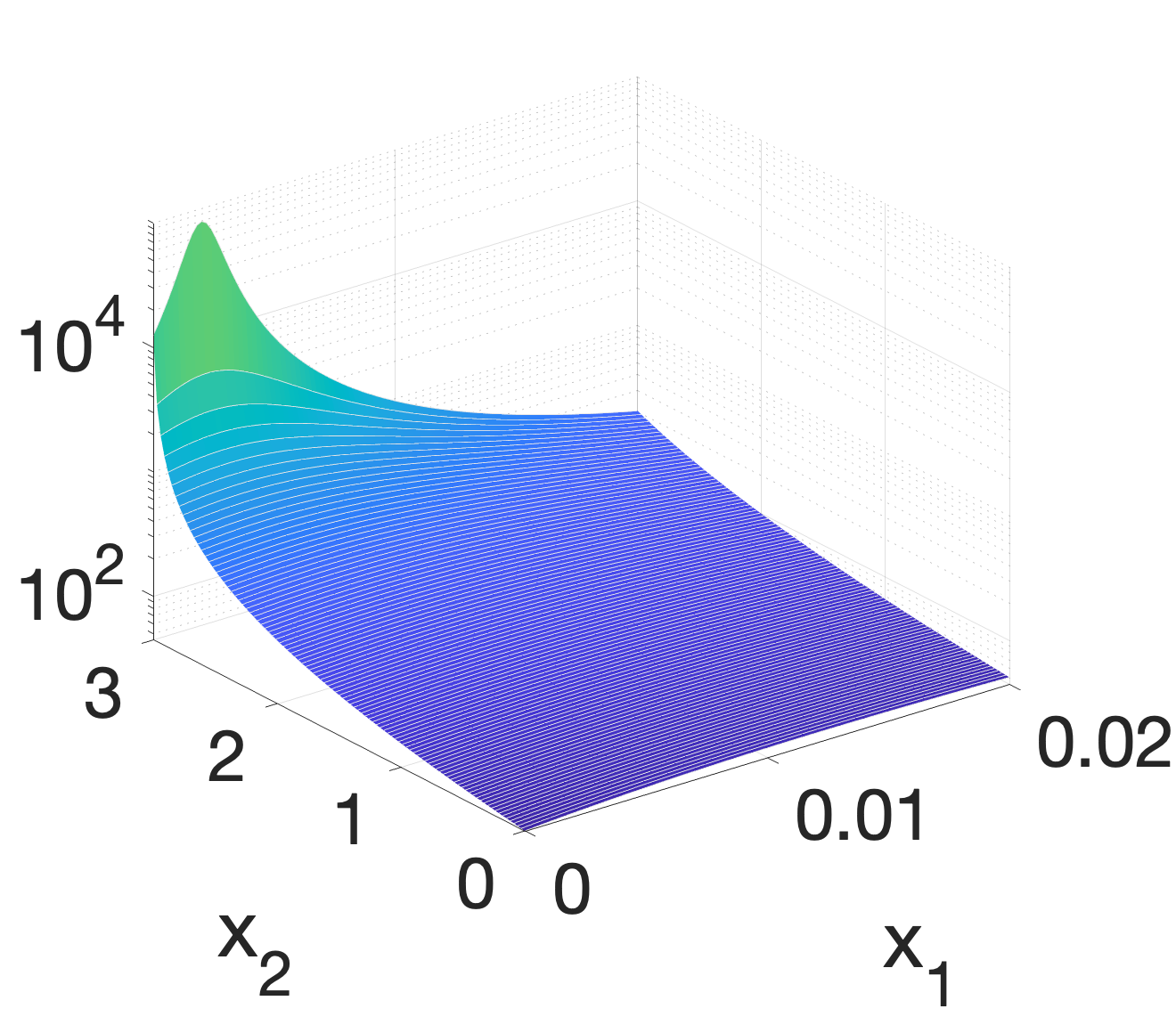}
    	\caption{}
     \end{subfigure}
    \caption{The plots of (a) $\hat{M}$ and (b) $\hat{M}_{est}$ as a function of tube potentials for the material pair (iodine,water) in the rectangle $\rR = [0,0.02]\times[0,3]$. The plots of $\vvvert J(x)^{-1}\vvvert$ on $\rR$ when (c) $tp = (40,68)$ and (d) $tp = (55,82)$.}
    \label{fig:IodineWater}
\end{figure}

%%%Bone-Iodine
%%%-----------
For bone-iodine material pair, $\hat{M}$ attained its minimum at $(tp_1,tp_2) = (40,74)$, which is 0.68. $\hat{M}_{est}$ attained its minimum at $(tp_1,tp_2) = (40,67)$, which is 0.99. We note that $\hat{M}(40,67) =0.71$. A poor choice for the tube potentials would be $(tp_1,tp_2) = (79,132)$ where $\hat{M}$ equals $30536$ (see fig. \ref{fig:BoneIodine}).
%%%%%%%%%%%%%%%%%
\begin{figure}[htbp]
   \centering
    \begin{subfigure}{0.21\textwidth}
    \includegraphics[width=\textwidth]{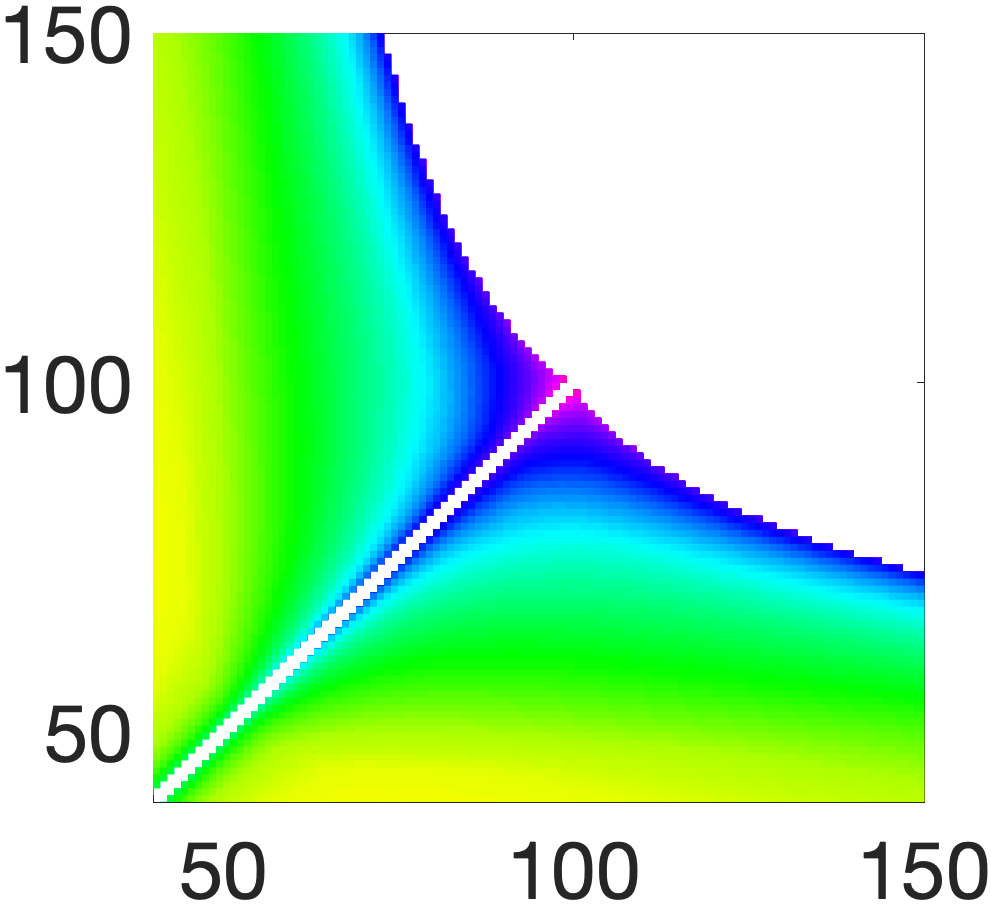}
    \caption{}
    \end{subfigure}%\hspace{-3em}
     \begin{subfigure}{0.25\textwidth}
    \includegraphics[width=\textwidth]{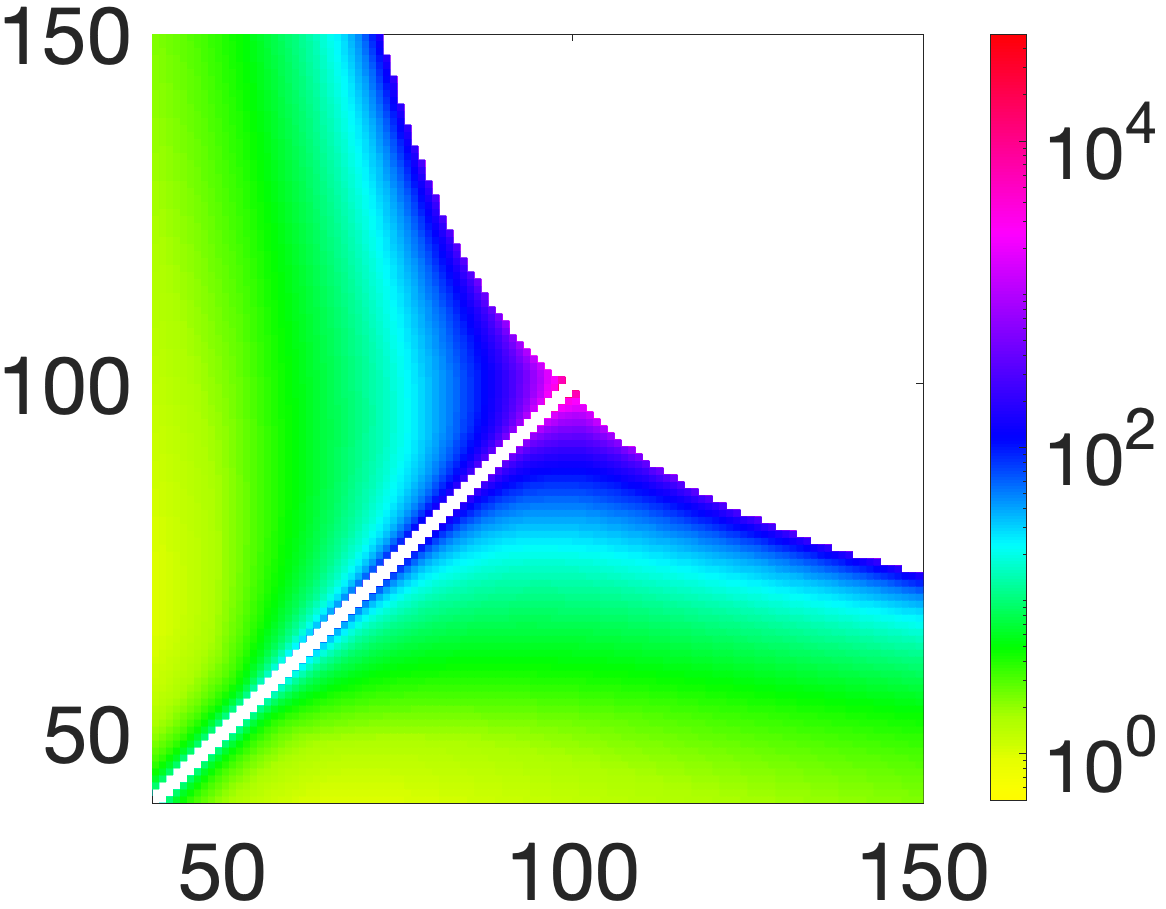}
    \caption{}
    \end{subfigure}%\hspace{-3em}
     \begin{subfigure}{0.22\textwidth}
    	\includegraphics[width=\textwidth]{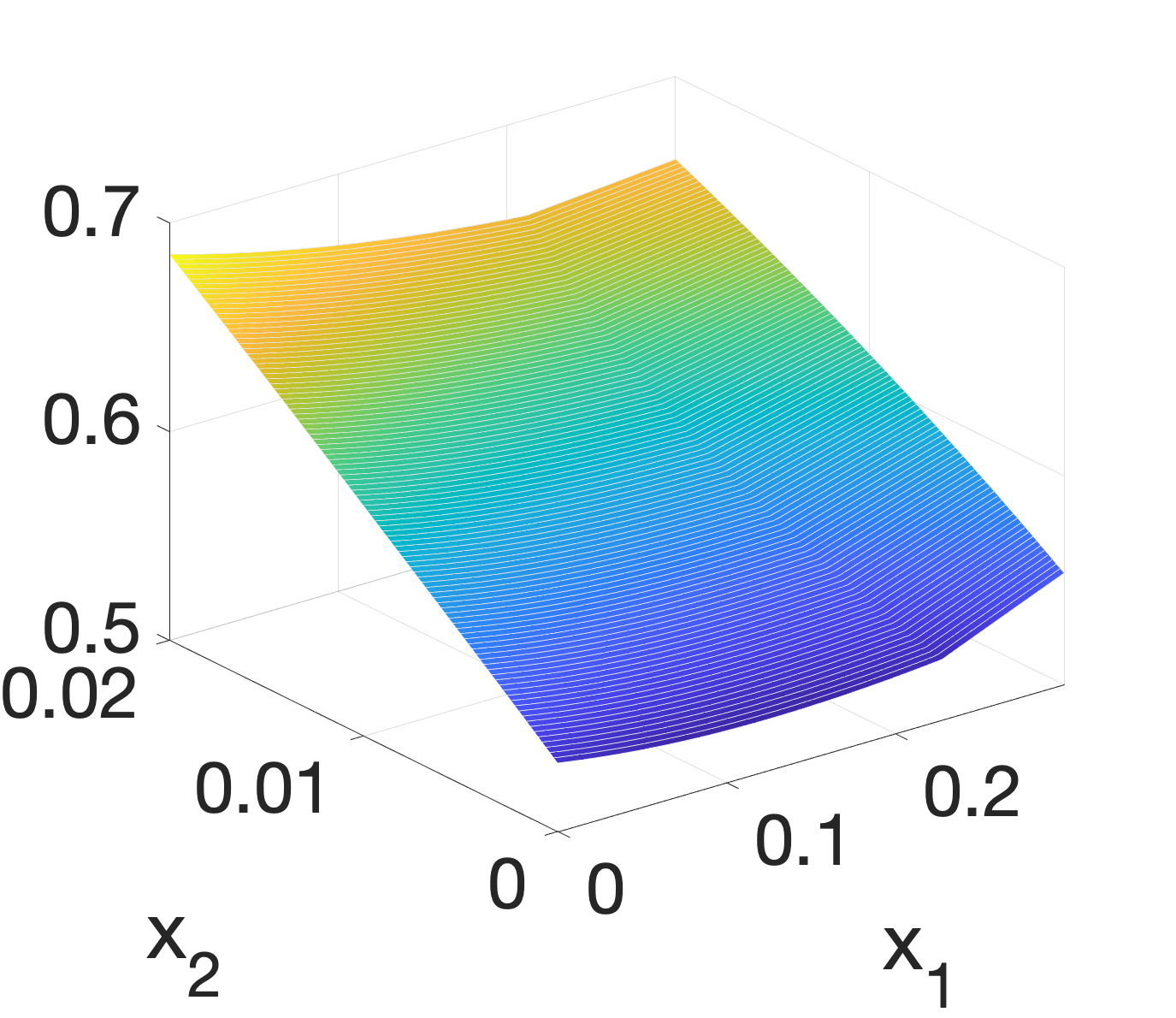}
    	\caption{}
     \end{subfigure}%\hspace{-1em}
          \begin{subfigure}{0.22\textwidth}
    	\includegraphics[width=\textwidth]{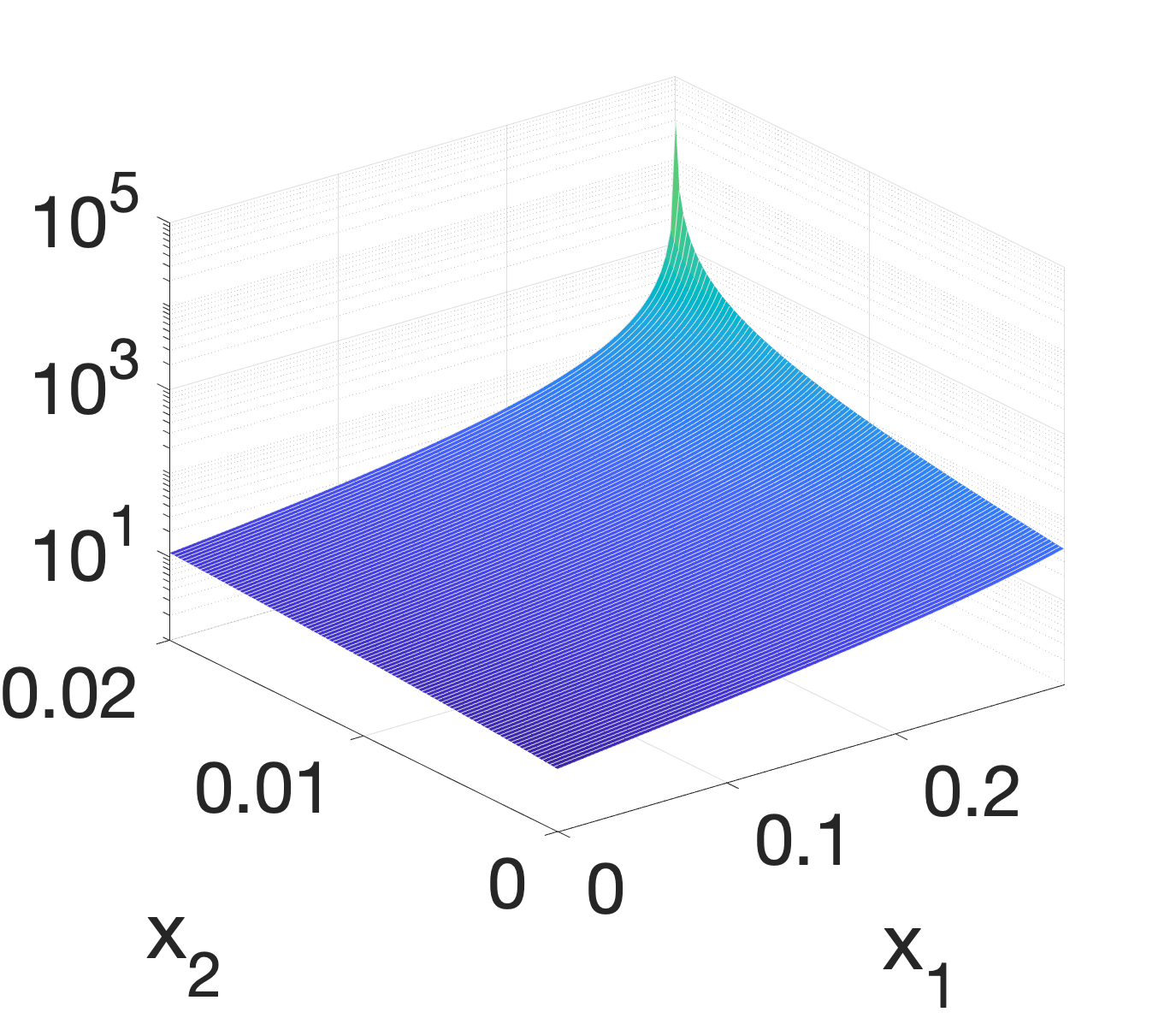}
    	\caption{}
     \end{subfigure}
    \caption{The plots of (a) $\hat{M}$ and (b) $\hat{M}_{est}$ as a function of tube potentials for the material pair (bone,Iodine) in the rectangle $\rR = [0,0.3]\times[0,0.02]$. The plots of $\vvvert J(x)^{-1}\vvvert$ on $\rR$ when (c) $tp = (40,74)$ and (d) $tp = (79,132)$.}
    \label{fig:BoneIodine}
\end{figure}

The above examples suggest that $\hat{M}$ and $\hat{M}_{est}$ are well-aligned in the case of the two-materials.

 \subsection{ME-CT: Three materials - three measurements}
 %%-------------------------------------------------------------------------
We now consider the materials bone, iodine, and water, in the said order. The considered rectangle is  $\rR = [0,0.3]\times[0,0.02]\times[0,3]$. In this case, we have
\begin{align}\label{hatM3d}
 \hat{M} = \max_{x \in \R^3} \vvvert \hJ(x)^{-1}\vvvert= \max_{z \in \rR} \max_{i \in \langle 3 \rangle} \{\vvvert J(z)_i^{-1}\vvvert,\vvvert J(z)^{-1}\vvvert\},
 \end{align}
 where $J(z)_i$ denotes the principal submatrix of the Jacobian $J(z)$ obtained by deleting the $i$-th row and column. Using positivity of the Jacobian matrix everywhere, one can estimate $\hat{M}$, according to theorem \ref{q-injectivity}, by
 \begin{align}\label{hatMest3d}
 \hat{M}_{est} = \frac{L^2}{\displaystyle \min_{i \in \langle 3 \rangle} \{L[J(z)]_i,\det J(z)\}}.
 \end{align}
 
  The plots for $\hat{M}$ and $\hat{M}_{est}$ are given in  fig. \ref{fig:BoneIodineWater}. The minimum value of $\hat{M}$, which is $23.49$, is attained at $(tp_1,tp_2,tp_3) = (40,74,150)$. However, it can also get very large, for example $\hat{M}$ equals 13235 at $(tp_1,tp_2,tp_3) = (70,120,150)$. 
  
  The minimum value for $\hat{M}_{est}$, which is equal to 1005, is attained at $(tp_1,tp_2,tp_3) = (40,67,150)$. We note that $\hat{M}(40,67,150) = 24.43$. We observed that although $\hat{M}$ and $\hat{M}_{est}$ display different overall behaviors, they attain their minimum at nearby points.
%%%%%%%%%%%%%%%%%%%%
   \begin{figure}[htbp]
   \centering
     \begin{subfigure}{0.24\textwidth}
    \includegraphics[width=\textwidth]{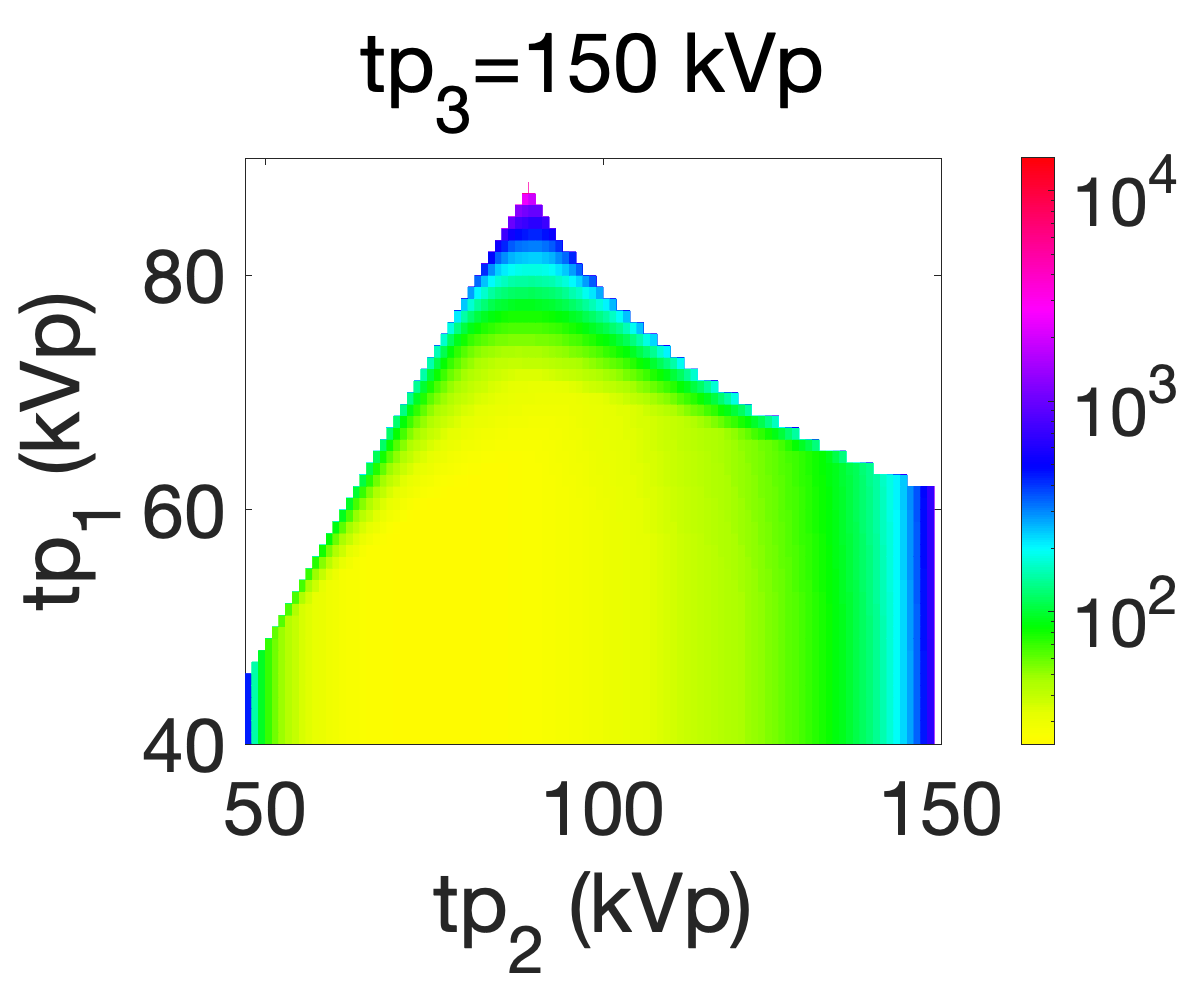}
    \caption{}
    \end{subfigure}%\hspace{-1em}
    \begin{subfigure}{0.24\textwidth}
    \includegraphics[width=\textwidth]{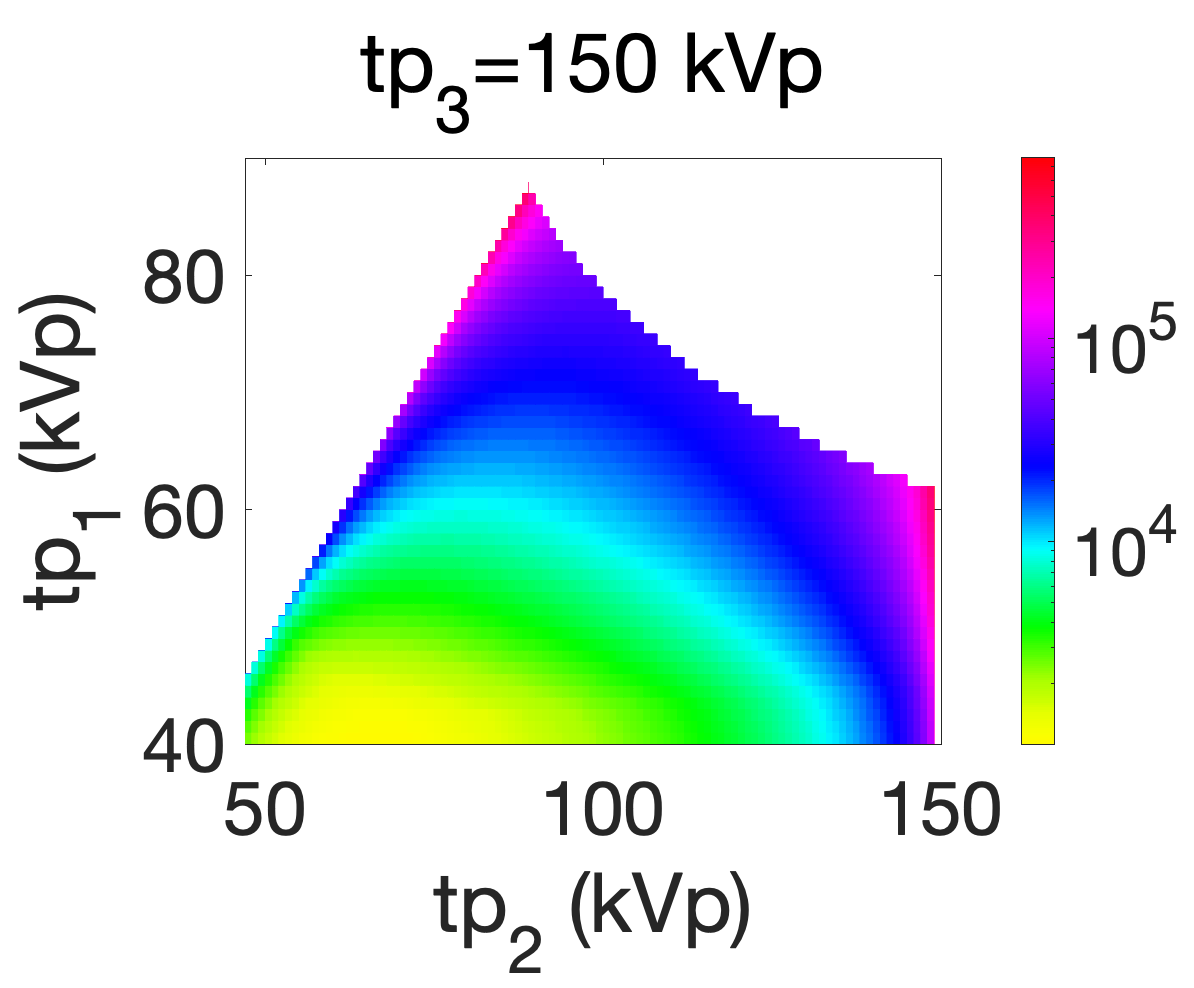}
    \caption{}
    \end{subfigure}
     \begin{subfigure}{0.24\textwidth}
    	\includegraphics[width=\textwidth]{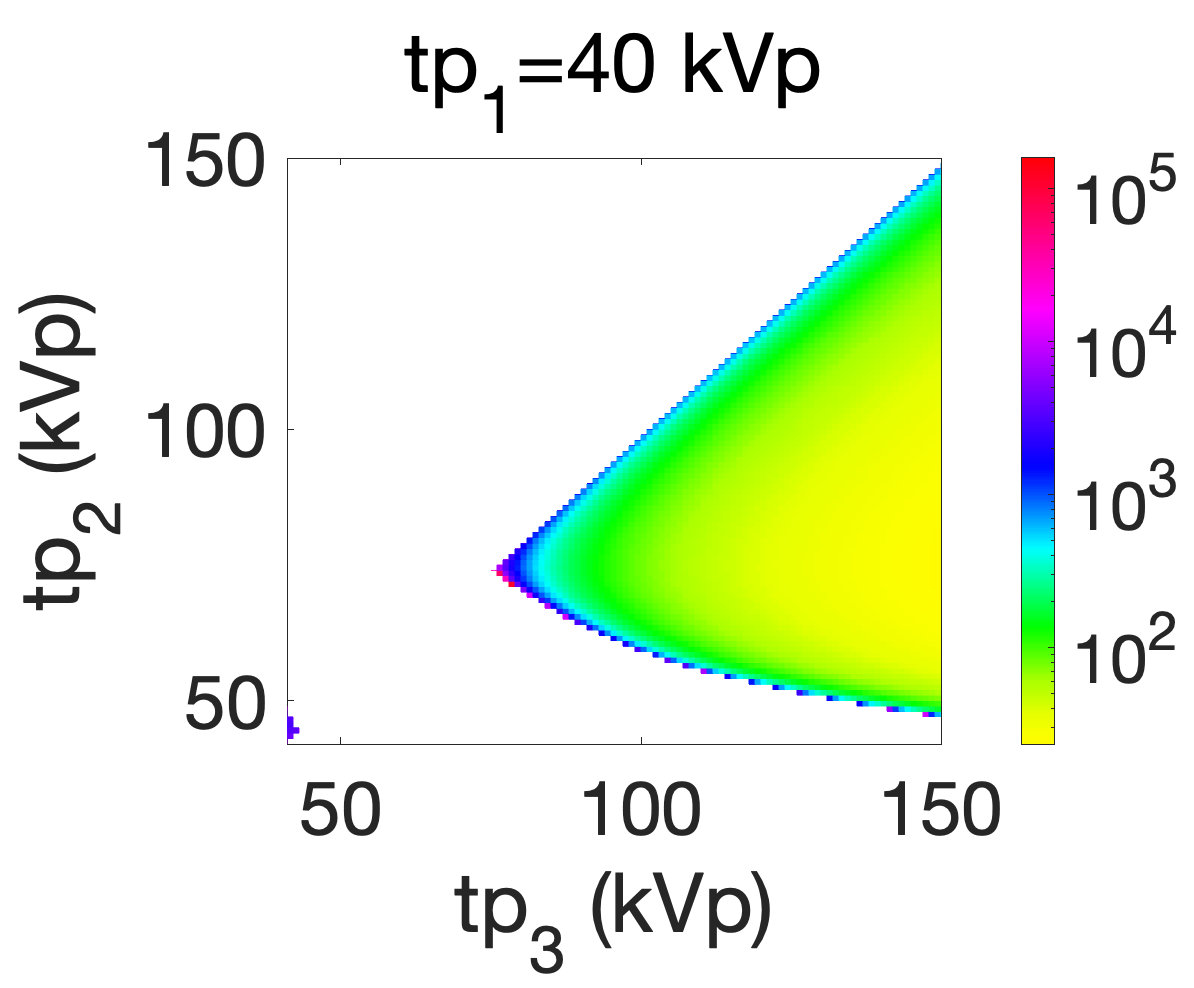}
    	\caption{}
    \end{subfigure}%\hspace{-1em}
     \begin{subfigure}{0.24\textwidth}
    	\includegraphics[width=\textwidth]{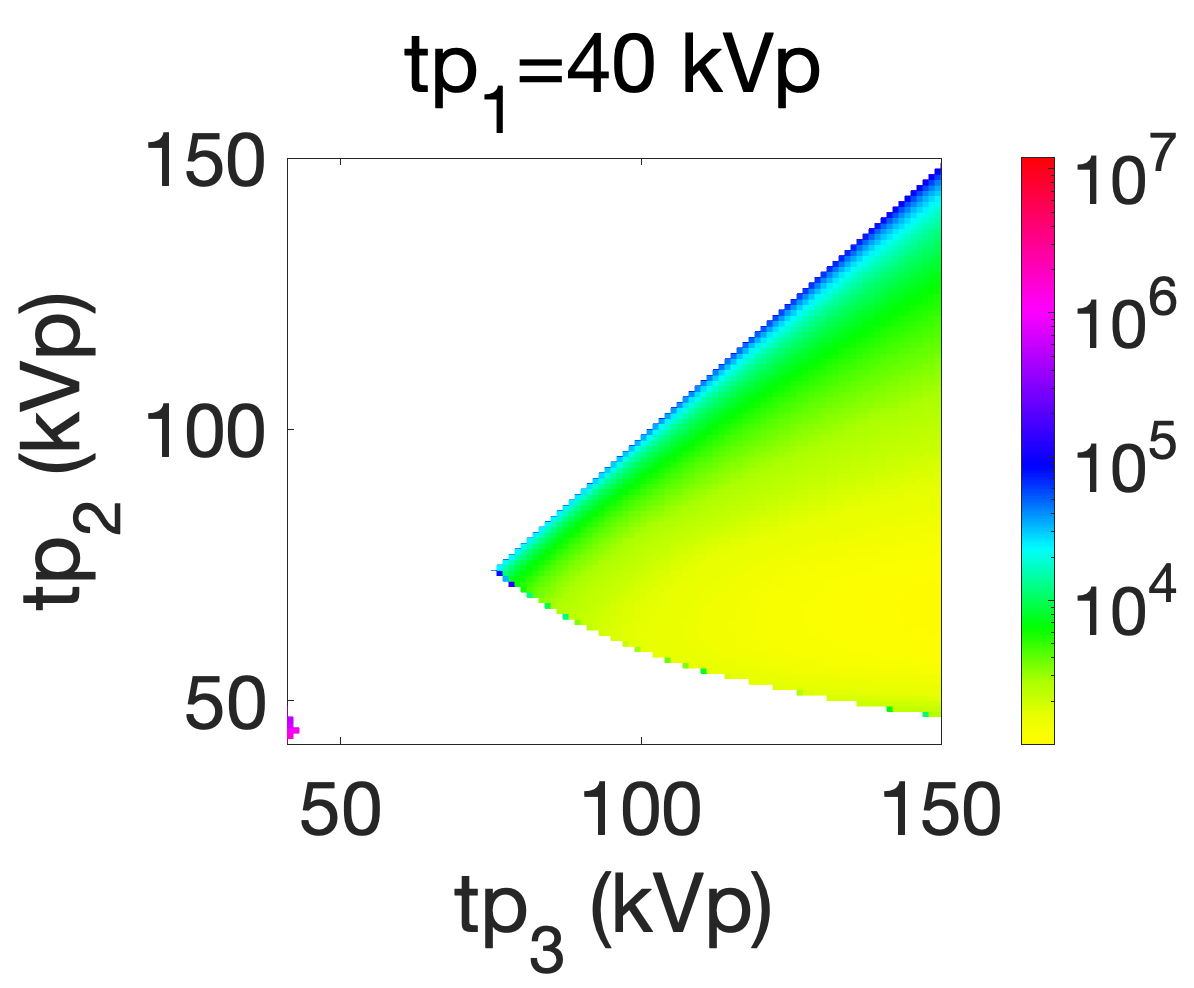}
    	\caption{}
     \end{subfigure}
    \caption{The plots of $\hat{M}$ and $\hat{M}_{est}$ as functions of tube potentials when (a,b) $tp_3=150$ kVp and when (c,d) $tp_1=40$ kVp for the material pair (bone,iodine,water) in the rectangle $\rR = [0,0.3]\times[0,0.02]\times[0,3]$.}
    \label{fig:BoneIodineWater}
\end{figure}

\subsection{Sinogram Reconstructions from DE-CT measurements}\label{s:reconstruction}
%%--------------------------------------------------------------------------------------------------------------------
In this section, we present the results of the numerical implementation of our inversion algorithm using MATLAB. We focus on reconstructing the line integrals (sinograms) from the DE-CT measurements. The second step, reconstructing material density maps from their sinograms can be obtained by standard Radon transform inversion, which is not presented here.

In the experiments, we considered a two-material (iodine and water) phantom supported in $[-2,2]\times[-2,2]$. The material density maps for iodine and water are given by 
$$\rho_1 = 0.05( \raisebox{2pt}{$\chi$}_{D_2} - \raisebox{2pt}{$\chi$}_{D_1})
\ \text{ and } \
\rho_2 = \raisebox{2pt}{$\chi$}_{D_3} - \raisebox{2pt}{$\chi$}_{D_2} + \raisebox{2pt}{$\chi$}_{D_1},$$
respectively. Here, $\raisebox{2pt}{$\chi$}_{D_i}, i=1,2,3,$ denotes the characteristic function of the disk centered at the origin and having radii 0.3, 0.5, and 1.5, respectively (see fig. \ref{fig:IodineWaterPhantomsDE-CTdata}(a)). 
 
The sinograms of the material densities $\rho_i, i=1,2,$ were analytically computed at 257 uniformly sampled nodes in $[-2\sqrt{2},2\sqrt{2}]$ and 400 uniformly sampled angles in $[-\pi,\pi]$ (see fig. \ref{fig:SinogramsReconstructions})(a).
 
Noise is considered in the measurements as follows. For each line $l$ and energy $E$, the number of measured photons in our model is $N_0S_i(E)e^{-M(E)\cdot x(l)}$ with $N_0$ the number of emitted photons independent of energy and line by normalization of our spectra. Poisson (shot) noise is then included in each such quantity.  The number $N_0$ of photons characterizing the Poisson distribution is chosen so that the relative $L_2$-errors between noisy and noiseless DE-CT measurements of $l\to I_i(l)$ with $tp = [40,68]$ for the low ($i=1$) and high ($i=2$) energies are approximately $0.6\%$ and $0.3\%$, respectively. When $tp = [55,82]$, the corresponding errors were approximately $0.4\%$ and $0.2\%$, respectively. In the experiments, we considered roughly $10^5$ emitted photons per energy bin (1 $keV$) per $mm^2$ (detector bin area) per $mAs$ (milliampere-seconds), and hence to a relatively noisy situation in practice \cite{Lasio}. This value was chosen to display small but sizeable errors for the stable tube profile. They generated very large errors for the unstable tube profile. 

The plots of DE-CT measurements corresponding to the two pairs of tube potentials are shown in fig. \ref{fig:IodineWaterPhantomsDE-CTdata}(b,c)). 
 
We tested our inversion algorithm on noisy DE-CT measurements by considering both the original DE-CT map and its extension. Reconstructions of the sinograms, obtained by using the extended map, from noisy DE-CT measurements for the two pairs of tube potentials are shown in fig. \ref{fig:SinogramsReconstructions}(b,c). Fig. \ref{fig:ProfilesSinogramsReconstructions} contains the central horizontal slices of the reconstructions shown in fig. \ref{fig:SinogramsReconstructions}(b,c). The relative $L_2$-errors between exact sinograms (denoted by $X_1$ and $X_2$) and their reconstructions are given in table \ref{tab:L2errors}. As we expected, using the tube potentials minimizing the inverse Jacobian led to considerably better reconstructions. In all cases, using the extended map instead of the original enhanced the quality of the reconstructions. When we considered the DE-CT measurements corresponding to the tube potentials $tp = (55, 82)$, and performed inversion using the original map, the algorithm failed to converge for approximately 17\% of the lines considered. The relative $L_2$-errors computed by excluding these lines is given in the last column of table \ref{tab:L2errors}.

\begin{figure}[htbp]
\centering
\begin{subfigure}{0.19\textwidth}
    \includegraphics[width=\textwidth]{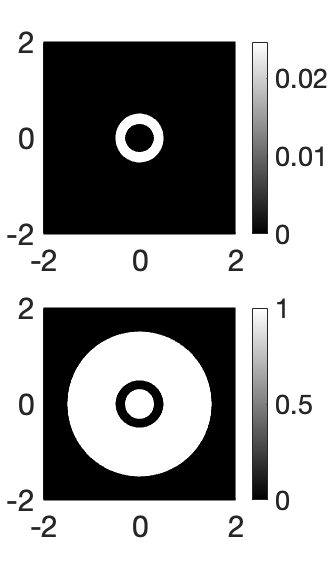}
    \caption{}
\end{subfigure}
\begin{subfigure}{0.32\textwidth}
    \includegraphics[width=\textwidth]{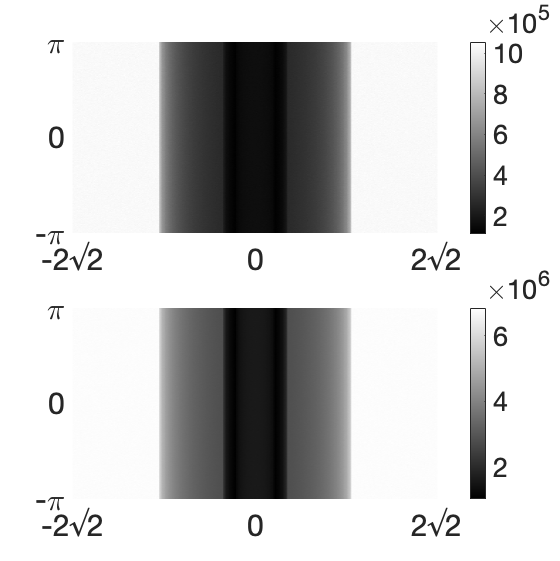}
    \caption{}
\end{subfigure}
\begin{subfigure}{0.32\textwidth}
    \includegraphics[width=\textwidth]{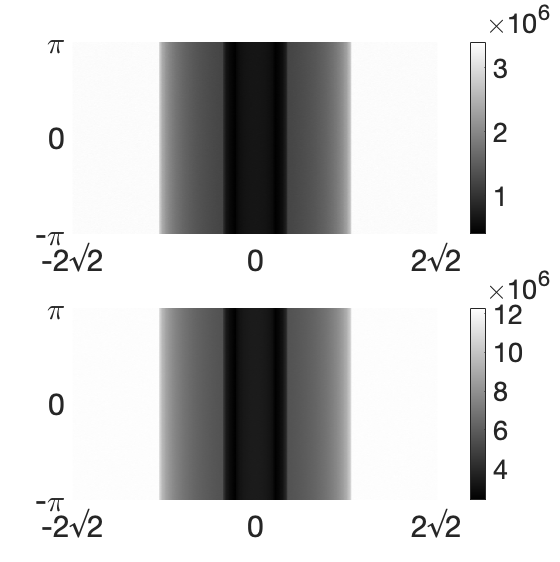}
    \caption{}
\end{subfigure}
    \caption{(a)Iodine (top) and water (bottom) density plots of the two-material phantom, DE-CT measurements corresponding to tube potentials (b) $tp = (40,68)$ and (c) $tp = (55,82)$.}
    \label{fig:IodineWaterPhantomsDE-CTdata}
\end{figure}

\begin{figure}[htbp]
\centering
\begin{subfigure}{0.32\textwidth}
    \includegraphics[width=\textwidth]{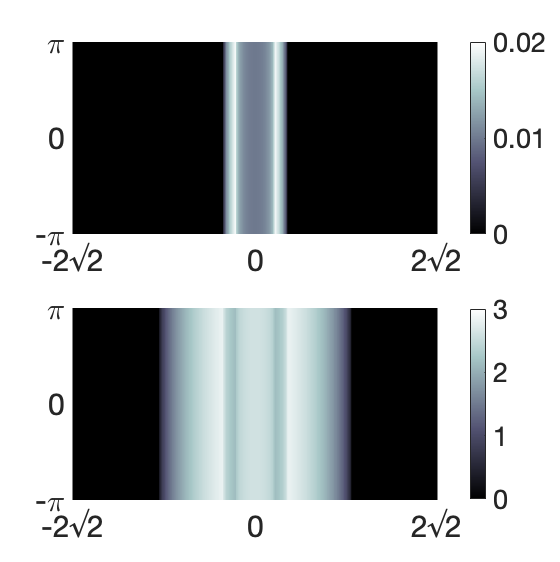}
    \caption{}
\end{subfigure}
\begin{subfigure}{0.32\textwidth}
    \includegraphics[width=\textwidth]{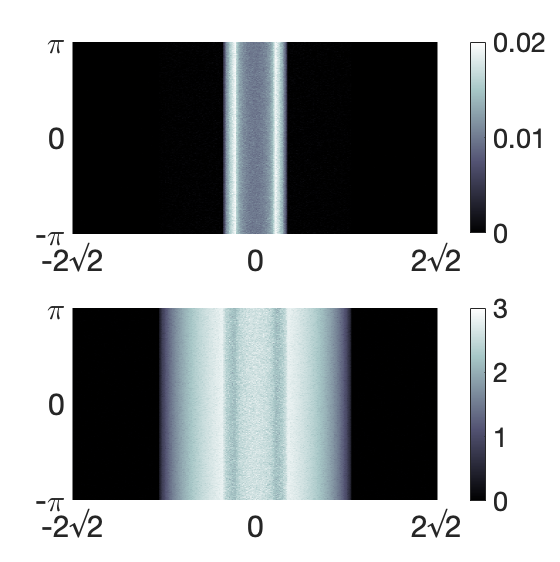}
    \caption{}
\end{subfigure}
\begin{subfigure}{0.32\textwidth}
    \includegraphics[width=\textwidth]{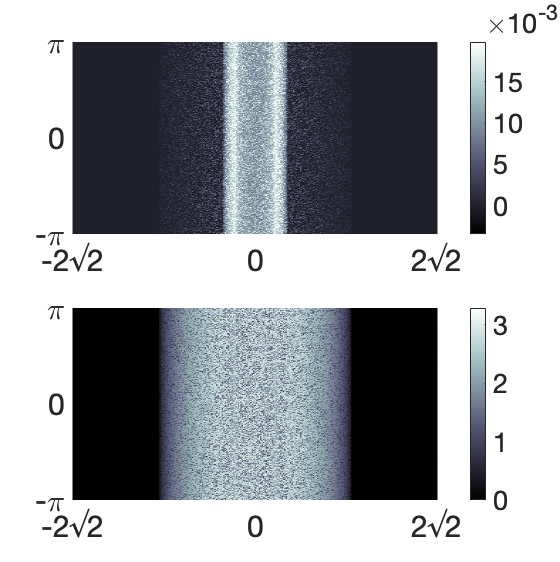}
    \caption{}
\end{subfigure}
    \caption{(a) The exact sinograms of iodine (top) and water (bottom) density maps of the phantom given in Fig. \ref{fig:IodineWaterPhantomsDE-CTdata}, and their reconstructions from noisy DE-CT measurements corresponding to tube potentials (b) $tp = (40,68)$ and (c) $tp = (55,82)$.}
    \label{fig:SinogramsReconstructions}
\end{figure}

\begin{figure}[htbp]
\centering
\begin{subfigure}{0.4\textwidth}
    \includegraphics[width=\textwidth]{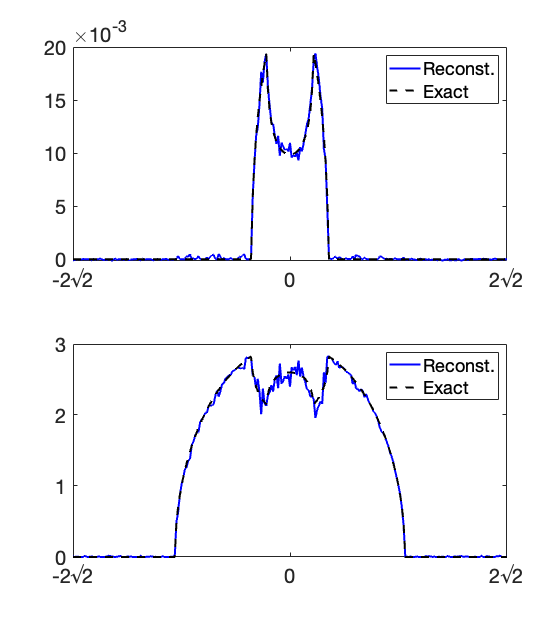}
    \caption{}
\end{subfigure}
\begin{subfigure}{0.4\textwidth}
    \includegraphics[width=\textwidth]{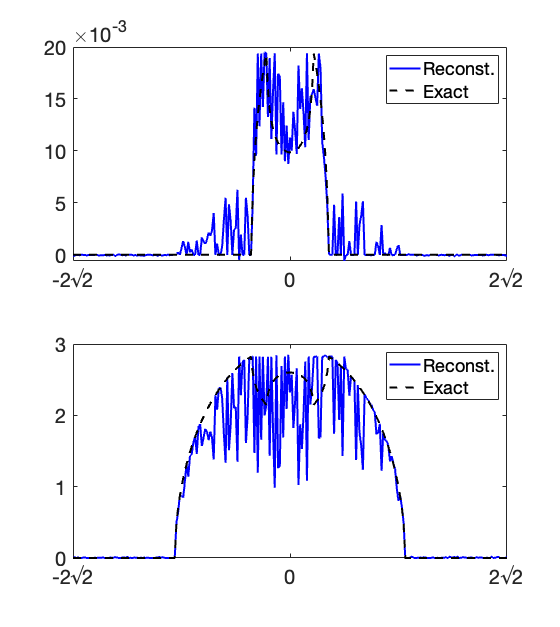}
    \caption{}
\end{subfigure}
    \caption{Comparison of central profiles of the exact sinograms and their reconstructions from noisy DE-CT measurements corresponding to tube potentials (a) $tp = (40,68)$ and (b) $tp = (55,82)$.}
    \label{fig:ProfilesSinogramsReconstructions}
\end{figure}

\begin{table}[htbp]
\begin{tabular}{c|c|c|c|c|}
\cline{2-5}
                              & \multicolumn{2}{c|}{Extension} & \multicolumn{2}{c|}{Original map} \\ \cline{2-5} 
                              & $tp = (40,68)$  & $tp = (55,82)$  & $tp = (40,68)$ & $tp = (55,82)$       \\ \hline
\multicolumn{1}{|c|}{$L_2$-error for $X_1$} & 0.0483      & 0.3408    & 0.0591         & 0.4792$^*$        \\ \hline
\multicolumn{1}{|c|}{$L_2$-error for $X_2$} & 0.0312      & 0.2280     & 0.0380         & 0.3105$^*$       \\ \hline
\end{tabular}
\caption{Normalized $L_2$-errors of sinogram reconstructions using our inversion algorithm. $^*$Computed excluding the lines for which the algorithm failed to converge and hence shown for illustrative purposes.}
     \label{tab:L2errors}
\end{table}
%--------------------
\section{Conclusions}
%--------------------
Finding local criteria for the global injectivity of maps from a subset of $\R^n$ to $\R^n$ is a challenging problem. One such criterion is based on the notion of $P-$functions we considered in this paper. Following \cite{BalTer20}, many Multi-Energy Computed Tomography (ME-CT) problems were shown to satisfy the latter criterion. 

Following \cite{MasColell}, we obtained in section 2 several extensions of a given $P-$function to possibly smooth injective maps from $\R^n$ to $\R^n$ and presented explicit stability estimates controlling the propagation of measurement errors to reconstruction errors. 

While stability is guaranteed for extended $P-$functions, standard algorithms such as those based on Newton or Gauss-Seidel methods are not guaranteed to converge to the unique (fixed point) solution. We propose in section 3 an algorithm, taking the form of a damped Newton method, that is guaranteed to converge to the fixed point, first with linear rate of convergence, and then with quadratic rate of convergence using a standard Newton method when the discrete dynamics are sufficiently close to the fixed point. The algorithm was shown to converge for (sufficiently) smooth extensions of the original problem. The algorithm also most likely converges (with $h$ sufficiently small) for the piecewise-smooth Mas-Colell extension \eqref{Fext} but we do not have a complete proof in that case. 

However, we showed that choosing sufficiently small values of the damping parameter $h$ was necessary to obtain a convergent algorithm. We presented examples of non-convergent discrete cyclic trajectories for smooth $P-$functions with Jacobians with positive entries (as is the case in ME-CT applications) when $h=1$ (standard Newton algorithm) as well as $h$ close to $1$.

We finally considered ME-CT inversions in section 4. We focused on the first step, namely the reconstruction of line integrals of absorption densities from ME-CT measurements. The second step, reconstructing spatially varying functions from their line integrals amounts to a standard inverse Radon transform procedure, which is not presented here. We considered two-  ($n=2$) and three- ($n=3$) dimensional settings. We showed that the errors in the reconstructions strongly depended on the choice of energy measurements (e.g., tube potentials). While not perfect, we also showed that the selection of tube potentials based on the stability estimates of section 2, as opposed to the numerical evaluation of the inverse Jacobian, also proved reasonable. 

Numerical reconstructions of two-material line integrals (sinograms) from DE-CT measurements using the proposed damped Newton algorithm confirmed the theoretical predictions. Reconstructions were shown to be significantly more stable for optimized choices of the tube potentials. Moreover, for less stable tube potentials, we observed that using the extended map significantly improved the reconstructions, with the damped Newton algorithm failing to converge in unstable cases and in the presence of sufficiently large noise.

\section*{Acknowledgment}
The authors thank Emil Sidky for useful discussions and references. This research was partially supported by the National Science Foundation, Grants DMS-1908736 and EFMA-1641100 and by the Office of Naval Research, Grant N00014-17-1-2096. 

%%----------------------------------------------
\begin{appendix}
%%----------------------------------------------
\section{}
In this section, we provide two alternatives to the constant $\tau$ given in \eqref{tau}.

\subsection{An alternative estimate}
If $A$ is a $P-$matrix, then the set of positive eigenvalues of all principal submatrices of $A$, denoted by $\Lambda_A$, contains at least the diagonal entries of $A$, and hence it is nonempty. The constant 
$$\mu_A := \min(\Lambda_A),$$
 can be seen as a characteristic quantity for $P-$matrices.

\begin{prop}[\cite{BalTer20}]\label{mu_A}
If $A$ is a $P-$matrix, then $A-\lambda \I$ is a $P-$matrix for all $0 \leq \lambda < \mu = \mu_A$, and $A-\mu \I$ is a $P_0-$matrix.
\end{prop}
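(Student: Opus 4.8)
The plan is to reduce the statement to a one-variable sign analysis of the determinant polynomials attached to the principal submatrices of $A$. Recall that a principal submatrix of $A-\lambda\I$ indexed by a subset $S\subseteq\langle n\rangle$ is exactly $B-\lambda\I$, where $B$ is the corresponding principal submatrix of $A$ and $\I$ now denotes the identity of the appropriate size. Hence the principal minors of $A-\lambda\I$ are precisely the quantities $p_B(\lambda):=\det(B-\lambda\I)$ as $B$ ranges over all principal submatrices of $A$. Showing that $A-\lambda\I$ is a $P-$matrix (resp. a $P_0-$matrix, i.e. all principal minors nonnegative) is therefore equivalent to showing $p_B(\lambda)>0$ (resp. $p_B(\lambda)\ge 0$) for every such $B$.

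Next I would pin down the sign of $p_B$ on $[0,\mu_A)$. First, $p_B(0)=\det B>0$, because $\det B$ is itself a principal minor of the $P-$matrix $A$. Second, $p_B$ is a polynomial in $\lambda$ whose real roots are exactly the real eigenvalues of $B$. By the definition of $\Lambda_A$, every positive eigenvalue of $B$ lies in $\Lambda_A$ and is therefore $\ge\mu_A$; moreover $0$ cannot be a root since $p_B(0)=\det B\neq 0$. Consequently $p_B$ has no real root in $[0,\mu_A)$. Being a continuous function that is positive at $\lambda=0$ and nonvanishing on the connected set $[0,\mu_A)$, it must stay strictly positive there, i.e. $p_B(\lambda)>0$ for all $\lambda\in[0,\mu_A)$.

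Applying this to every principal submatrix $B$ simultaneously yields that all principal minors of $A-\lambda\I$ are positive, so $A-\lambda\I$ is a $P-$matrix for each $\lambda\in[0,\mu_A)$. For the endpoint I would simply let $\lambda\to\mu_A^-$: by continuity of $\lambda\mapsto p_B(\lambda)$ we get $p_B(\mu_A)=\lim_{\lambda\to\mu_A^-}p_B(\lambda)\ge 0$ for every $B$, so every principal minor of $A-\mu_A\I$ is nonnegative and $A-\mu_A\I$ is a $P_0-$matrix. Note that $\Lambda_A$ is a finite (a union over the finitely many principal submatrices of their finitely many eigenvalues) and nonempty set, so $\mu_A$ is attained; the submatrix realizing $\mu_A$ then satisfies $p_B(\mu_A)=0$, which is why $A-\mu_A\I$ is genuinely only a $P_0-$matrix and not a $P-$matrix.

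The delicate point is the no-root claim on $[0,\mu_A)$: it must exclude both a smaller positive eigenvalue of some principal submatrix (ruled out precisely because $\mu_A$ is the minimum over all principal submatrices, not just over $A$ itself) and a root at or below $0$ (the value $0$ is excluded by $\det B\neq 0$, while any negative eigenvalue lies outside $[0,\mu_A)$). I would emphasize that this argument uses only $\det B>0$ together with the definition of $\mu_A$, and in particular does not require the classical fact that a $P-$matrix has positive real spectrum; that fact would merely guarantee that $\Lambda_A$ coincides with the full set of real eigenvalues of the principal submatrices, which is not needed for the conclusion.
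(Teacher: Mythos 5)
Your argument is correct. Note that the paper itself gives no proof of this proposition --- it is imported from \cite{BalTer20} as a citation --- so there is nothing in the present text to compare against; judged on its own, your reduction to the sign of $p_B(\lambda)=\det(B-\lambda\I)$ for each principal submatrix $B$, combined with $p_B(0)=\det B>0$, the absence of real roots of $p_B$ in $[0,\mu_A)$ (positive eigenvalues are $\ge\mu_A$ by definition of $\Lambda_A$, zero is excluded by $\det B\neq0$, negative roots lie outside the interval), and the intermediate value theorem, is a complete and self-contained proof of both the $P$-matrix claim on $[0,\mu_A)$ and, by passing to the limit, the $P_0$-matrix claim at $\lambda=\mu_A$. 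Your closing observation that the classical positivity of real spectra of $P$-matrices is not needed is also accurate, and the remark that $\mu_A$ is attained (so some principal minor of $A-\mu_A\I$ actually vanishes) is a correct bonus beyond what the statement requires.
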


The constant $\mu$ is particularly useful in obtaining a lower bound for the determinant of a $P-$matrix.
\begin{prop}\label{t:detP-estimate}
Let $A$ be a $n \times n$ $P-$matrix with $k$, $0\leq k \leq n$, real eigenvalues and $\mu = \mu_A$. Then, 
\begin{align}\label{detP-estimate}
\det A \geq \left(\sin \frac{\pi}{n}\right)^{n-k} \mu^n.
\end{align} 
\end{prop}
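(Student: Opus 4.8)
The plan is to pass from $\det A$ to the moduli of the eigenvalues and then exploit, via Proposition \ref{mu_A}, the fact that shifting $A$ by $\mu\I$ stays within the $P$-/$P_0$-matrix world, which confines the eigenvalues to a sector of the complex plane. Write the eigenvalues of $A$ as $\lambda_1,\dots,\lambda_n$ counted with multiplicity, so that $\det A=\prod_j\lambda_j$ and $|\det A|=\prod_j|\lambda_j|$. Since $A$ is a $P$-matrix, its real eigenvalues are strictly positive: by the Fiedler--Pt\'ak sign-reversal characterization \cite{FiedlerPtak}, if $Ax=\lambda x$ with $x\neq0$ real and $\lambda\le0$, then $x_i(Ax)_i=\lambda x_i^2\le0$ for every $i$, contradicting the $P$-matrix property. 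Moreover each real eigenvalue lies in $\Lambda_A$ (taking $A$ as its own principal submatrix), so $\lambda_j\ge\mu$ for every real $\lambda_j$. This disposes of the $k$ real eigenvalues, each contributing a factor at least $\mu$.

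Next I would establish the angular localization of the remaining $n-k$ (necessarily complex) eigenvalues. By Proposition \ref{mu_A}, $A-\mu\I$ is a $P_0$-matrix. Invoking the classical eigenvalue-inclusion theorem of Kellogg for $P$-matrices---that the eigenvalues of an order-$n$ $P$-matrix lie in the open sector $|\arg z|<\pi-\tfrac{\pi}{n}$---and applying it to the genuine $P$-matrices $A-\lambda\I$ for $0\le\lambda<\mu$ (Proposition \ref{mu_A}), I would let $\lambda\uparrow\mu$ and use continuity of eigenvalues to conclude that every eigenvalue of $A-\mu\I$ lies in the closed sector $|\arg z|\le\pi-\tfrac{\pi}{n}$. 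Equivalently, each eigenvalue $\lambda_j$ of $A$ satisfies $\lambda_j-\mu\in\overline{S}$, where $S=\{z:|\arg z|<\pi-\tfrac{\pi}{n}\}$; that is, $\lambda_j\in\mu+\overline{S}$.

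The key quantitative step is a distance estimate: any $z=\mu+re^{i\theta}$ with $r\ge0$ and $|\theta|\le\pi-\tfrac{\pi}{n}$ satisfies $|z|\ge\mu\sin\tfrac{\pi}{n}$. I would verify this by minimizing $|z|^2=\mu^2+2\mu r\cos\theta+r^2$ over $r\ge0$ for fixed $\theta$: when $\cos\theta\ge0$ the minimum is $\mu^2$ at $r=0$, while for $\cos\theta<0$ it is attained at $r=-\mu\cos\theta$ and equals $\mu^2\sin^2\theta$, whose minimum over the admissible range occurs at $|\theta|=\pi-\tfrac{\pi}{n}$ and equals $\mu^2\sin^2\tfrac{\pi}{n}$. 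Hence $|\lambda_j|\ge\mu\sin\tfrac{\pi}{n}$ for every eigenvalue of $A$, in particular for the $n-k$ complex ones.

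Combining the two bounds, and using that $\det A=\prod_j\lambda_j>0$ so that $\det A=\prod_j|\lambda_j|$, I would obtain $\det A\ge\mu^{k}\bigl(\mu\sin\tfrac{\pi}{n}\bigr)^{n-k}=(\sin\tfrac{\pi}{n})^{n-k}\mu^{n}$, which is \eqref{detP-estimate}. I expect the main obstacle to be the angular localization in the second step: one must invoke Kellogg's sector theorem correctly and justify the passage from the $P$-matrices $A-\lambda\I$ to the $P_0$-matrix $A-\mu\I$, handling both the closure of the sector and the possibility of a zero eigenvalue of $A-\mu\I$ (for which the bound holds trivially, as then $\lambda_j=\mu$). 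Once this localization is secured, the optimization and the multiplication of eigenvalue moduli are routine.
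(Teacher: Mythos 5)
Your proposal is correct and follows essentially the same route as the paper's proof: shift by $\mu\I$ to land in the $P_0$-matrix class, localize the eigenvalues in Kellogg's sector, deduce $|\mu+\lambda|\ge\mu\sin\frac{\pi}{n}$, and take the product of eigenvalue moduli. The only (immaterial) differences are that the paper cites Kellogg's theorem for the $P_0$-matrix $A-\mu\I$ directly where you reach the closed sector by letting $\lambda\uparrow\mu$ through the $P$-matrices $A-\lambda\I$, and that the paper leaves the distance minimization implicit.
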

\begin{proof}
Since $A$ is a $P-$matrix, any real, hence positive, eigenvalue of $A$ is bounded below by $\mu$. Also, the eigenvalues of $A$ are given by $\mu + \lambda$ with $\lambda$ being an eigenvalue of $B = A-\mu \I$ which is a $P_0-$matrix. By Kellogg's theorem \cite{Kellogg}, $| \arg \lambda| \leq \pi(1 - \tfrac{1}{n})$, and thus $|\mu + \lambda| \geq \mu \sin \frac{\pi}{n}$. Since the determinant of a matrix is equal to the product of its eigenvalues, we obtain \eqref{detP-estimate}.
\end{proof}

\begin{prop}\label{mu_hatJ}
Let $\hJ_\e$ be given as in \eqref{Jhat_e}. Then, for all $\e \geq 0$ and $x \in \R^n$, we have 
$$\mu_{\hJ_\e(x)} \geq  \mu_{J(P_\e(x))},$$ 
with equality if $\e = 0$. Thus, in view of proposition \ref{t:detP-estimate}, we have
\begin{align}\label{detJhat-estimate}
\det \hJ_\e(x) \geq \left(\sin \frac{\pi}{n}\right)^{n-1} \mu_{J(P_\e(x))}^n,
\end{align} 
and
\begin{align}\label{invJnorm_e-alt}
\vvvert \hJ_\e(x)^{-1}\vvvert \leq \frac{ (n-1)^{\frac{n-1}{2}} L_\e^{n-1}}{\left(\sin \frac{\pi}{n}\right)^{n-1} \mu_{J(P_\e(x))}^n}.
\end{align}
\end{prop}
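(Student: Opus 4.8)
The plan is to prove the inequality $\mu_{\hJ_\e(x)} \geq \mu_{J(P_\e(x))}$ first, and then read off \eqref{detJhat-estimate} and \eqref{invJnorm_e-alt} as consequences of Proposition \ref{t:detP-estimate} and Hadamard's inequality. Throughout I abbreviate $A = J(P_\e(x))$, $\mu = \mu_A$, and write $D = DP_\e(x)$ for the diagonal matrix with entries $d_i = p_\e'(x_i) \in [0,1]$, so that $\hJ_\e(x) = AD + L_\e(\I - D)$ by \eqref{Jhat_e}. The strategy for the inequality is to show that the shifted matrix $\hJ_\e(x) - \lambda\I$ remains a $P$-matrix for every $\lambda \in [0,\mu)$, and then invoke the characterization of $\mu$.

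The first step is the algebraic identity on which everything hinges. Fix $\lambda$ with $0 \leq \lambda < \mu$; by Proposition \ref{mu_A}, $B := A - \lambda\I$ is a $P$-matrix. Comparing diagonal and off-diagonal entries yields
\[
\hJ_\e(x) - \lambda\I = B D + (L_\e - \lambda)(\I - D),
\]
which has \emph{exactly} the structural form of $\hJ_\e(x)$ itself, now with $A$ replaced by the $P$-matrix $B$ and $L_\e$ replaced by $L_\e - \lambda$. Since $\mu \leq \min_i A_{ii} \leq \vvvert A \vvvert \leq L_\e$, the scalar $L_\e - \lambda$ is strictly positive. Every principal submatrix of $\hJ_\e(x) - \lambda\I$ inherits this same form, with $B$ replaced by the corresponding principal submatrix of $B$ (still a $P$-matrix). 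Hence the multilinear determinant expansion used in the proof that $\hJ_\e$ is a $P$-matrix applies verbatim: all terms are nonnegative and the term corresponding to the indices with $d_i < 1$ is strictly positive, so every principal minor of $\hJ_\e(x) - \lambda\I$ is positive. Thus $\hJ_\e(x) - \lambda\I$ is a $P$-matrix for each $\lambda \in [0,\mu)$.

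From this I would derive $\mu_{\hJ_\e(x)} \geq \mu$ by contradiction: if some principal submatrix of $\hJ_\e(x)$ carried a positive eigenvalue $\nu < \mu$, then $\hJ_\e(x) - \nu\I$ would be a $P$-matrix, forcing that submatrix shifted by $\nu\I$ to have positive determinant, contradicting that $\nu$ is one of its eigenvalues. The asserted equality at $\e = 0$ is then immediate on $\rR$, where $P_0(x) = x$ and $DP_0(x) = \I$ give $\hJ_0(x) = J(x) = J(P_0(x))$. Finally, \eqref{detJhat-estimate} follows by applying Proposition \ref{t:detP-estimate} to the $P$-matrix $\hJ_\e(x)$: because $\hJ_\e(x)$ has nonnegative entries (off-diagonal entries $A_{ij}d_j \geq 0$ and positive diagonal), Perron--Frobenius guarantees its spectral radius is a real eigenvalue, so it has $k \geq 1$ real eigenvalues and $(\sin\frac{\pi}{n})^{n-k} \geq (\sin\frac{\pi}{n})^{n-1}$; combining with $\mu_{\hJ_\e(x)}^n \geq \mu^n$ gives the bound via \eqref{detP-estimate}. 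Estimate \eqref{invJnorm_e-alt} then follows just as \eqref{invJnorm_e} was obtained, by bounding the cofactors of $\hJ_\e(x)$ through Hadamard's inequality by $(n-1)^{\frac{n-1}{2}}\vvvert \hJ_\e(x)\vvvert^{n-1} \leq (n-1)^{\frac{n-1}{2}}L_\e^{n-1}$ and dividing by the determinant lower bound.

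The main obstacle is recognizing the reduction identity $\hJ_\e(x) - \lambda\I = (A-\lambda\I)D + (L_\e-\lambda)(\I-D)$: it is precisely the observation that shifting by $\lambda\I$ preserves the structural form of the extended Jacobian that lets me recycle the earlier $P$-matrix determinant expansion rather than analyzing the spectrum of $\hJ_\e$ directly. The two remaining delicate points are purely bookkeeping: verifying $L_\e - \lambda > 0$ (so the expansion terms stay nonnegative), which I secure from $\mu \leq L_\e$, and verifying $k \geq 1$ (so that the exponent $n-1$ rather than $n$ is legitimate), which I secure from the nonnegativity of $\hJ_\e(x)$ and Perron--Frobenius.
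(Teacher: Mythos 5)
Your argument for the main inequality is essentially the paper's: both proofs pivot on the identity $\hJ_\e(x)-\lambda\I=(J(P_\e(x))-\lambda\I)DP_\e(x)+(L_\e-\lambda)(\I-DP_\e(x))$, expand the resulting (sub)determinants multilinearly as in \eqref{char_poly}, and invoke Proposition \ref{mu_A} together with $\mu_{J(P_\e(x))}\leq L_\e$ to conclude that no principal submatrix of $\hJ_\e(x)$ can have a positive eigenvalue below $\mu_{J(P_\e(x))}$. That you package this as ``$\hJ_\e(x)-\lambda\I$ is a $P$-matrix for all $\lambda\in[0,\mu)$'' rather than arguing directly on the vanishing of the characteristic polynomial is a cosmetic difference, and your derivations of \eqref{detJhat-estimate} and \eqref{invJnorm_e-alt} match the paper's.

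Two caveats. The equality clause at $\e=0$ is asserted for all $x\in\R^n$, but you verify it only on $\rR$, where $\hJ_0(x)=J(x)$ makes it trivial; for $x\notin\rR$ the matrix $DP(x)$ has vanishing diagonal entries and $\hJ_0(x)\neq J(P(x))$, and the paper argues instead that at $\e=0$ exactly one coefficient $c_{K'}$ in \eqref{char_poly} is nonzero (and equals $1$), so that every eigenvalue of every principal submatrix of $\hJ_0(x)$ is either $L$ or an eigenvalue of a principal submatrix of $J(P(x))$. Separately, your Perron--Frobenius justification that $k\geq1$ in Proposition \ref{t:detP-estimate} presupposes that the entries of $J(P_\e(x))$ are nonnegative; this holds in the ME-CT application but not for a general $P$-matrix Jacobian, so it is an added hypothesis rather than a free observation (the paper simply does not address the discrepancy between the exponent $n-k$ and the claimed $n-1$ at this point). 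Neither caveat affects the two displayed estimates, which rely only on the inequality.
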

\begin{proof}
Observe that
$
\hJ_\e(x)-\lambda \I = (J(P_\e(x))-\lambda \I)DP_\e(x) +(L_\e-\lambda)(\I-DP_\e(x)).
$
Then, using the properties of the determinant, we obtain
\begin{align}\label{char_poly}
\det(\hJ_\e(x)-\lambda \I) = \sum_{K \subset \langle n\rangle} c_K(x) (L_\e-\lambda)^{|K|}[J(P_\e(x))-\lambda \I]_K,
\end{align}
where 
\begin{align*}
c_K(x) = \Big(\prod_{k \in K} (1-p'_\e(x_k))\Big) \Big(\prod_{k \in \langle n\rangle \setminus K}p'_\e(x_k) \Big).
\end{align*}
We note that, since $0 \leq p'_\e \leq 1$, we have $0 \leq c_K(x) \leq 1$ for all $x \in \R^n$ and $K \subset \langle n\rangle$. Moreover, for all $x \in \R^n$, there is $K \subset \langle n\rangle$ such that $c_K(x) > 0$. 

Now let $x \in \R^n$ be arbitrary, and suppose that $\lambda$ is an eigenvalue of $\hJ_\e(x)$. If $\lambda < \mu_{J(P_\e(x))}$, then $[J(P_\e(x))-\lambda \I]_K >0$ for all $K\subset \langle n\rangle$ by proposition \ref{mu_A}. Also, since $\mu_{J(P_\e(x))} \leq J(P_\e(x))_{ii} \leq L_\e$, we obtain that the right hand side of \eqref{char_poly} is positive while the left hand side is zero, which is a contradiction. Hence, we must have $\lambda \geq \mu_{J(P_\e(x))}$. Applying the same argument to the eigenvalues of the principal submatrices of $\hJ_\e(x)$, we obtain that $ \mu_{\hJ_\e(x)} \geq \mu_{J(P_\e(x))}.$ 

For $\e = 0$, we observe that $c_K = 0$ for all but one $K' \subset \langle n \rangle$, for which $c_{K'} = 1$. This implies that the eigenvalues of submatrices of $\hJ_\e(x)$ are either equal $L_\e$ or coincide with an eigenvalue of a submatrix of $J(P_\e(x))$. Thus, $ \mu_{\hJ(x)} = \mu_{J(P(x))}.$

Then, the estimate \eqref{detJhat-estimate} follows from proposition \ref{t:detP-estimate}. Finally, proceeding as in the proof of theorem \ref{t:invF is Lipschitz} and using \eqref{detJhat-estimate} in estimating the determinant of $\hJ_\e$, we obtain \eqref{invJnorm_e-alt}.
\end{proof}

\begin{definition}
  Let $F$ be a continuously differentiable map on a closed rectangle $\rR$ with a $P-$matrix Jacobian $J(x)$ at every $x \in \rR$. The quantity
\begin{align}\label{mu}
		\mu_F &:= \min_{x \in \rR} \mu_{J(x)},
\end{align}
is called the injectivity constant of $F$.
\end{definition}

\begin{theorem}\label{t:invF is Lipschitz-alt}
Let $\hF$ be given as in \eqref{Fext}. Then, for all $x, y \in \R^n$,   
\begin{align}\label{invFhat is Lipschitz-alt}
 \|\hF(x)-\hF(y)\|_\infty \geq \gamma \|x-y\|_\infty,
\end{align}
where
\begin{align}\label{gamma}
   \gamma := \frac{ \left( \sin \frac{\pi}{n}\right)^{n-1}}{n^{\frac{1}{2}}(n-1)^{\frac{n-1}{2}} } \frac{\mu_F^n}{L^{n-1}}.
\end{align}
\end{theorem}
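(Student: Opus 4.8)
The plan is to reproduce the proof of Theorem~\ref{t:invF is Lipschitz} almost verbatim, changing only the lower bound used for $\det\hJ_\e$. Concretely, I would first establish the inequality for the smooth extension $\hF_\e$ with a constant $\gamma_\e$ built from $L_\e$ and $\mu_{F,\e} := \min_{z\in\rR_\e}\mu_{J(z)}$, and then recover the stated estimate for $\hF$ by letting $\e\to0$.

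For the first step I would work with $\hF_\e$, which is a smooth $P-$function and hence a diffeomorphism. Applying the Mean Value theorem to $\hF_\e^{-1}$, followed by the Cauchy--Schwarz inequality and the equivalence of the $\|\cdot\|$ and $\|\cdot\|_\infty$ norms exactly as in the proof of Theorem~\ref{t:invF is Lipschitz}, I obtain
\begin{align*}
 \|x-y\|_\infty \;\leq\; \sqrt{n}\,\hat M_\e\,\|\hF_\e(x)-\hF_\e(y)\|_\infty, \qquad \hat M_\e := \max_{x\in\R^n}\vvvert\hJ_\e(x)^{-1}\vvvert.
\end{align*}
The only ingredient that changes is the bound on $\hat M_\e$: rather than the minor-based estimate used there, I would invoke \eqref{invJnorm_e-alt} of Proposition~\ref{mu_hatJ}. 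Since $P_\e(\R^n)=\rR_\e$, every $P_\e(x)$ lies in $\rR_\e$, so $\mu_{J(P_\e(x))}\geq\mu_{F,\e}$, and maximizing over $x\in\R^n$ gives $\hat M_\e\leq 1/(\sqrt n\,\gamma_\e)$ with $\gamma_\e$ obtained from \eqref{gamma} by replacing $L$ with $L_\e$ and $\mu_F$ with $\mu_{F,\e}$. Substituting this back yields $\|\hF_\e(x)-\hF_\e(y)\|_\infty\geq\gamma_\e\|x-y\|_\infty$.

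It then remains to pass to the limit $\e\to0$. Continuity of $J$ on the compact rectangle gives $L_\e\to L$, and $\hF_\e\to\hF$ pointwise, so the claim \eqref{invFhat is Lipschitz-alt} follows once I know $\gamma_\e\to\gamma$, i.e. $\mu_{F,\e}\to\mu_F$. This convergence is the one genuinely delicate point, and I expect it to be the main---if modest---obstacle. Since $\rR\subset\rR_\e$ one trivially has $\mu_{F,\e}\leq\mu_F$; for the reverse inequality I would use that the map $A\mapsto\mu_A$ is lower semicontinuous on the set of $P-$matrices. The latter follows by writing $\mu_A=\inf\{s>0:\min_K[A-s\I]_K=0\}$, noting that $\min_K[A-s\I]_K$ is jointly continuous in $(A,s)$ (it is a minimum of finitely many principal minors), and that at $s=\mu_A$ the matrix $A-\mu_A\I$ is a $P_0-$matrix by Proposition~\ref{mu_A}, so the minimum vanishes there; a standard first-zero argument then shows $\liminf_{j}\mu_{A_j}\geq\mu_A$ whenever $A_j\to A$. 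Combined with compactness of $\rR_\e$ and the bound $\mu_{F,\e}\leq\mu_F$, this yields $\mu_{F,\e}\to\mu_F$ and completes the proof.
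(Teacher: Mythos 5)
Your proposal is correct and follows exactly the route the paper takes: its proof of this theorem is the single sentence ``proceed as in the proof of Theorem~\ref{t:invF is Lipschitz}, using \eqref{invJnorm_e-alt} in place of the minor-based bound on $\vvvert \hJ_\e(x)^{-1}\vvvert$, and let $\e\to0$.'' Your additional justification that $\mu_{F,\e}\to\mu_F$ via lower semicontinuity of $A\mapsto\mu_A$ is a point the paper leaves implicit, and it is a correct (and genuinely needed) way to close that small gap.
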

\begin{proof}
Proceeding as in the proof of theorem \ref{t:invF is Lipschitz}, using \eqref{invJnorm_e-alt} in estimating the determinant of $\vvvert \hJ_\e(x)^{-1}\vvvert$, and finally letting $\e \to 0$, we obtain \eqref{invFhat is Lipschitz-alt}.
\end{proof}

The comparison of the estimates \eqref{invFh is Lipschitz} and \eqref{invFhat is Lipschitz-alt} is given in the following proposition.
\begin{prop}
Let $A$ be a $n \times n$ $P-$matrix. Then, for any $K \subset \langle n \rangle$, we have
\begin{align}\label{tau_vs_gamma}
 \vvvert A\vvvert^{|K|} [A]_K \geq \left(\sin \frac{\pi}{n}\right)^{n-1} \mu_A^n,
\end{align} 
and thus $\tau \geq \gamma$ where $\tau$ and $\gamma$ are given in \eqref{tau} and \eqref{gamma}, respectively.
\end{prop}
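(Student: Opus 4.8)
The plan is to prove the displayed single‑matrix inequality first and then obtain $\tau \geq \gamma$ by a direct cancellation. For the latter step, observe that the prefactors in \eqref{tau} and \eqref{gamma} agree, since $(n-1)^{(1-n)/2}=(n-1)^{-(n-1)/2}$. Hence, dividing both by the common factor $n^{-1/2}(n-1)^{-(n-1)/2}$ and multiplying through by $L^{n-1}$, the claim $\tau\geq\gamma$ reduces to $\min_{z}\min_{K} L^{|K|}[J(z)]_K \geq (\sin\tfrac{\pi}{n})^{n-1}\mu_F^n$. Because $L\geq\vvvert J(z)\vvvert$ and $\mu_{J(z)}\geq\mu_F$ by the definition of $\mu_F$ in \eqref{mu}, this follows term by term once we know the matrix inequality for $A=J(z)$. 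So everything reduces to the pointwise bound $\vvvert A\vvvert^{|K|}[A]_K \geq (\sin\tfrac{\pi}{n})^{n-1}\mu_A^n$.

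To prove that bound I would set $m=n-|K|$ and let $A_K$ be the corresponding $m\times m$ principal submatrix, which is again a $P$-matrix. Applying Proposition \ref{t:detP-estimate} to $A_K$ gives $[A]_K=\det A_K\geq (\sin\tfrac{\pi}{m})^{m-k'}\mu_{A_K}^{\,m}$, where $k'$ is the number of real eigenvalues of $A_K$. Two monotonicity facts then simplify the right‑hand side: every principal submatrix of $A_K$ is a principal submatrix of $A$, so $\Lambda_{A_K}\subseteq\Lambda_A$ and $\mu_{A_K}\geq\mu_A$; and each diagonal entry $a_{ii}$ is a positive eigenvalue of a $1\times1$ principal submatrix, so $\vvvert A\vvvert\geq\max_i a_{ii}\geq\mu_A$. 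Inserting these, together with $\vvvert A\vvvert^{|K|}\geq\mu_A^{\,n-m}$, collapses the estimate to $\vvvert A\vvvert^{|K|}[A]_K\geq(\sin\tfrac{\pi}{m})^{m-k'}\mu_A^{\,n}$.

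It then remains to establish the scalar inequality $(\sin\tfrac{\pi}{m})^{m-k'}\geq(\sin\tfrac{\pi}{n})^{n-1}$. For $2\leq m\leq n$ one has $\sin\tfrac{\pi}{m}\geq\sin\tfrac{\pi}{n}$ with both numbers in $(0,1]$, so the inequality holds as soon as the exponent obeys $m-k'\leq n-1$ (the degenerate sizes $m\in\{0,1\}$ are checked directly, where $[A]_K$ equals $1$ or a single positive diagonal entry). Since non‑real eigenvalues occur in conjugate pairs, $m-k'$ is precisely the number of non‑real eigenvalues of $A_K$; whenever $K\neq\emptyset$ we have $m\leq n-1$, and therefore $m-k'\leq m\leq n-1$ automatically.

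I expect the case $K=\emptyset$ to be the main obstacle. There $m=n$ and $A_K=A$, so $m-k'=n-k$ reaches $n$ exactly when $n$ is even and $A$ has no real eigenvalue, and the estimate above then falls short by one factor of $\sin\tfrac{\pi}{n}$. For the comparison $\tau\geq\gamma$ this causes no trouble, since the relevant matrices $A=J(z)$ have strictly positive entries, so the Perron–Frobenius theorem furnishes a real eigenvalue and forces $k\geq1$, hence $n-k\leq n-1$. For a fully general $P$-matrix (and even $n$) one must recover the missing factor by a finer argument; the natural route is the trace bound $\mathrm{tr}(A)\geq\sum_i a_{ii}\geq n\mu_A$, which rules out the extremal configuration underlying Proposition \ref{t:detP-estimate}: if every eigenvalue $\nu_j$ attained $|\nu_j|=\mu_A\sin\tfrac{\pi}{n}$ then $\mathrm{Re}(\nu_j)=\mu_A\sin^2\tfrac{\pi}{n}$ and $\mathrm{tr}(A)=n\mu_A\sin^2\tfrac{\pi}{n}<n\mu_A$, a contradiction. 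This shows the crude estimate is far from tight in the all‑complex case, so the product $\det A=\prod_j|\nu_j|$ exceeds $(\sin\tfrac{\pi}{n})^{n}\mu_A^n$ with room to spare; quantifying this gap to at least $\sin\tfrac{\pi}{n}$ is the one step that requires genuine work rather than the book‑keeping above.
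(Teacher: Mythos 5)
Your argument follows essentially the same route as the paper's: reduce $\tau\ge\gamma$ to the single\hyphenchar\font=-1 -matrix inequality via $L\ge\vvvert J(z)\vvvert$ and $\mu_{J(z)}\ge\mu_F$, handle the degenerate $K$ by the diagonal bound $\vvvert A\vvvert\ge a_{ii}\ge\mu_A$, and otherwise apply Proposition \ref{t:detP-estimate} to the principal submatrix $A_K$ together with the monotonicity $\mu_{A_K}\ge\mu_A$. Where you differ is in the bookkeeping of the exponent $m-k'$ involving the number $k'$ of real eigenvalues of $A_K$, and there you are in fact more careful than the paper: for $K\ne\emptyset$ your observation that $m-k'\le m\le n-1$ closes the case for every $k'$, whereas the paper simply writes the exponent $n-|K|-1$, which follows from Proposition \ref{t:detP-estimate} only when $A_K$ has at least one real eigenvalue. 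The one case you leave open --- $K=\emptyset$ with $n$ even and $A$ having no real eigenvalue, where the proposition only yields the weaker bound $\left(\sin\tfrac{\pi}{n}\right)^{n}\mu_A^n$ --- is not actually covered by the paper's proof either; it tacitly assumes $k\ge1$ at the same spot. For the intended application this is immaterial, since ME-CT Jacobians have strictly positive entries and Perron--Frobenius supplies a real eigenvalue, exactly as you note; but as a proof of the proposition for a general $P$-matrix your trace argument is only a sketch, and recovering the missing factor of $\sin\tfrac{\pi}{n}$ would require completing it (or restricting the statement). In short: same approach, one genuinely sharper observation on the nonempty-$K$ cases, and one corner case that neither write-up fully settles.
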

\begin{proof}
Let $K \subset \langle n \rangle$ be arbitrary. By definition of $\mu_A$, we have $\vvvert A\vvvert \geq a_{ii} \geq \mu_A$ for all $i \in \langle n \rangle$, which implies the result if $K = \langle n \rangle$. For otherwise,  we apply proposition \ref{t:detP-estimate} to $A_K$ to obtain
$$[A]_K \geq \left(\sin \frac{\pi}{n-|K|}\right)^{n-|K|-1} \mu_{A_K}^{n-|K|} \geq \left(\sin \frac{\pi}{n}\right)^{n-1} \mu_A^{n-|K|}.$$
Therefore, 
$$ \vvvert A\vvvert^{|K|} [A]_K \geq \mu_A^{|K|}\left(\sin \frac{\pi}{n}\right)^{n-1} \mu_A^{n-|K|} = \left(\sin \frac{\pi}{n}\right)^{n-1} \mu_A^n,$$
which can be used to obtain the inequality $\tau \geq \gamma$.
\end{proof}

%%%%%%%%%%%%%%%%%%%%%%%%%%%%%%%%%%%%
\subsection{An estimate without using an extension}
In this section, we derive an estimate that does not require any extension. We start with the following geometric property of $P-$matrices. 
\begin{theorem}[\cite{FiedlerPtak,GaleNikaido}]\label{t:GeoCharP}
An $n \times n$ matrix $A$ is a $P-$matrix if and only if $A$ reverses the sign of no vector except zero, that is for every nonzero vector $v \in \R^n$, there is an index $i \in \langle n \rangle$ such that $v_i(Av)_i> 0$.
\end{theorem}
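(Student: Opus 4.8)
The plan is to prove the two implications separately. I would first dispatch the easier direction, that a $P-$matrix $A$ reverses the sign of no nonzero vector, by reducing it to the $P-$function property already recorded in the excerpt. Concretely, I view $A$ as the constant Jacobian of the linear map $F(x)=Ax$ on the rectangle $\rR=[-1,1]^n$. Since $DF(x)=A$ is a $P-$matrix at every point, \cite[theorem 5.2]{MoreRheinboldt} guarantees that $F$ is a $P-$function. Given a nonzero $v\in\R^n$, I would choose $s>0$ small enough that $sv\in\rR$ and apply the $P-$function inequality to the pair $x=sv$, $y=0$: there is an index $k$ with $(sv_k)(A(sv))_k>0$, that is $s^2 v_k(Av)_k>0$, whence $v_k(Av)_k>0$. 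This is exactly the required sign condition, so the forward implication is essentially immediate once the linear map is introduced.

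For the converse I would argue by contraposition: assuming $A$ is not a $P-$matrix, I construct a nonzero $v$ with $v_i(Av)_i\le 0$ for all $i$. Since some principal minor is nonpositive, I choose an index set $S\subseteq\langle n\rangle$ that is minimal under inclusion among those with $\det A_S\le 0$, where $A_S$ denotes the principal submatrix on $S$ (such $S$ is automatically nonempty). By minimality every proper principal submatrix of $B:=A_S$ is a $P-$matrix, while $\det B\le 0$. The key reduction is to produce a sign-reversed vector for $B$ and then pad it with zeros: if $z\in\R^{|S|}$ is reinterpreted as $v$ supported on $S$, then $(Av)_i=(Bz)_i$ for $i\in S$ while $v_i(Av)_i=0$ for $i\notin S$, so it suffices to reverse the sign of some nonzero $z$ under $B$.

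The heart of the argument, and the step I expect to be the main obstacle, is exhibiting this $z$ from the spectral information $\det B\le 0$. I would split into two cases. If $\det B=0$, any nonzero kernel vector $z$ works, since $Bz=0$ gives $z_i(Bz)_i=0$ for all $i$. If $\det B<0$, I claim $B$ has a negative real eigenvalue: the nonreal eigenvalues occur in conjugate pairs whose products $\lambda\bar\lambda=|\lambda|^2$ are positive, so $\det B=\prod_j\lambda_j<0$ forces the product of the real eigenvalues to be negative, hence at least one real eigenvalue $\lambda<0$; taking a real eigenvector $z$ with $Bz=\lambda z$ yields $z_i(Bz)_i=\lambda z_i^2\le 0$ for all $i$. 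The subtle points to verify carefully are that a real eigenvalue exists at all (otherwise $\det B>0$, a contradiction) and that the minimal-$S$ reduction is legitimate, namely that zeroing out the coordinates outside $S$ leaves the sign products on $S$ governed solely by $B$. Combining the two directions then yields the stated equivalence.
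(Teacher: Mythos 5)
The paper offers no proof of this statement---it is quoted from Fiedler--Pt\'ak and Gale--Nikaido---so your argument can only be judged on its own terms, and on those terms it is correct. Your converse direction is the classical one: pick $S$ with $\det A_S\le 0$ (your notation: $A_S$ is the principal submatrix retained on $S$), extract either a kernel vector (if $\det A_S=0$) or a real eigenvector for a negative real eigenvalue (if $\det A_S<0$; your parity argument that conjugate pairs contribute positive factors, forcing a negative real eigenvalue, is sound), and pad with zeros, using that $(Av)_i=(Bz)_i$ for $i\in S$ because the padded coordinates vanish. One remark: the minimality of $S$ is never used---both cases rely only on $\det B\le 0$, not on the positivity of the proper principal minors of $B$, so that hypothesis can simply be dropped.

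The forward direction is formally valid if one treats \cite[theorem 5.2]{MoreRheinboldt} as a black box, as the paper does, but be aware that it is logically backwards relative to the literature: the standard proof of that theorem (a $P$-matrix Jacobian on a rectangle yields a $P$-function) applies the mean value theorem row by row and then invokes precisely the sign-nonreversal property of $P$-matrices (indeed of row-representative matrices assembled from several $P$-matrices) to conclude. Quoting it to establish the linear special case therefore risks circularity once dependencies are traced. A self-contained substitute is short: if $v\ne 0$ satisfies $v_i(Av)_i\le 0$ for all $i$, let $S=\{i:v_i\ne 0\}$, $z=v|_S$, and $d_i=-(A_Sz)_i/z_i\ge 0$ for $i\in S$; then $(A_S+D)z=0$ with $D=\mathrm{diag}(d_i)$, so $\det(A_S+D)=0$, contradicting the multilinear expansion of $\det(A_S+D)$ as a sum over $T\subseteq S$ of $\bigl(\prod_{i\in T}d_i\bigr)$ times the principal minor of $A_S$ on $S\setminus T$, every term of which is nonnegative and whose $T=\emptyset$ term equals $\det A_S>0$. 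I would replace the reduction to the nonlinear theorem by this two-line computation.
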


Theorem \ref{t:GeoCharP} was used in \cite{MathiasPang} to obtain another characteristic quantity for $P-$matrices. Evidently, $A$ is a $P-$matrix if and only if
\begin{align}\label{alphaA}
\alpha_A := \min_{\|v\|_\infty = 1} \max_{i \in \langle n \rangle} v_i(Av)_i>0.
\end{align}
Consequently, for all $v \in \R^n$ (see also \cite[Lemma 3.12]{MoreRheinboldt}),
\begin{align}
\max_{i \in \langle n \rangle} v_i(Av)_i \geq \alpha_A \|v\|^2_\infty. 
\end{align}

We note that $\alpha_A \leq \mu_A$. Indeed, since $A-\mu_A \I$ is no longer a $P-$matrix, we must have $0 \geq \alpha_{(A-\mu_A \I)} \geq  \alpha_A - \mu_A $. We can now obtain the following quantitative estimate of injectivity.
\begin{theorem}\label{t:invI is Lipschitz}
  Let $\rR \subset \R^n$ be a closed rectangle. Suppose that $F: \rR \to \R^n$ is a continuously differentiable map with a $P-$matrix Jacobian $J(z)$ at every $z \in \rR$. We define
 \begin{align}\label{alpha}
	\alpha := \min_{z\in \rR} \alpha_{J(z)}.
\end{align}
   Then, for al $x, y \in \rR$,
\begin{align}\label{alpha-estimate}
\|F(x)-F(y)\|_\infty  \geq \alpha \|x-y\|_\infty.
\end{align}
\end{theorem}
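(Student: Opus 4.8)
The plan is to upgrade the pointwise sign-reversal bound $\max_{i}v_i(J(z)v)_i\ge\alpha\|v\|_\infty^2$ into a statement about secants of $F$, and then read off \eqref{alpha-estimate}. Concretely, I would first reduce the asserted norm inequality to the quantitative $P$-function estimate
\begin{align*}
\max_{k\in\langle n\rangle}(x_k-y_k)\big(F_k(x)-F_k(y)\big)\ge\alpha\,\|x-y\|_\infty^2.
\end{align*}
Indeed, writing $v=x-y$ and $w=F(x)-F(y)$, each term $v_k w_k\le|v_k|\,|w_k|\le\|v\|_\infty\|w\|_\infty$, so the displayed inequality forces $\|w\|_\infty\ge\alpha\|v\|_\infty$, which is exactly \eqref{alpha-estimate}. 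The whole difficulty is therefore concentrated in producing a single good coordinate $k$ for the secant $w$.

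To bring in the Jacobian I would use that $\rR$ is convex, so the segment $z(t)=y+t(x-y)$ stays in $\rR$, and the fundamental theorem of calculus gives $w=\int_0^1 J(z(t))\,v\,dt$, hence $v_k w_k=\int_0^1 v_k\,(J(z(t))v)_k\,dt$ for each $k$. At every fixed $t$, the definition \eqref{alphaA} of $\alpha_{J(z(t))}$ together with \eqref{alpha} furnishes an index $i=i(t)$ with $v_{i}\,(J(z(t))v)_{i}\ge\alpha\|v\|_\infty^2$.

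The main obstacle is that this optimal index $i(t)$ varies along the path: one only controls $\int_0^1\max_i v_i(J(z(t))v)_i\,dt\ge\alpha\|v\|_\infty^2$, whereas the reduction needs $\max_i\int_0^1 v_i(J(z(t))v)_i\,dt\ge\alpha\|v\|_\infty^2$, and in general $\max_i\int\le\int\max$ points the wrong way. (A symptom of the same issue is that the averaged Jacobian $\int_0^1 J(z(t))\,dt$ need not itself be a $P$-matrix, so one cannot simply apply the linear estimate to it.) I would circumvent this by passing to the infimum $c^\ast=\inf_{x\ne y}\|F(x)-F(y)\|_\infty/\|x-y\|_\infty$ over the compact set $\rR\times\rR$ and proving $c^\ast\ge\alpha$. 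Along a minimizing sequence two regimes arise. If the pair collapses to a single point $z_0$, then with $u$ the (unit) secant direction one has $w=\|v\|_\infty\int_0^1 J(z(t))u\,dt\to J(z_0)\bar u$, and the linear bound $\|J(z_0)\bar u\|_\infty\ge\alpha_{J(z_0)}\|\bar u\|_\infty\ge\alpha$ closes this regime cleanly. The remaining regime, where the infimum is attained at a genuine pair $\bar x\ne\bar y$, is the delicate part: I expect it to require a variational argument at the minimizer, exploiting the pointwise characterization of theorem \ref{t:GeoCharP} together with the monotonicity supplied by the $P$-function property to rule out a secant ratio below $\alpha$. This last step is where I anticipate the real work to lie, since it is precisely where the index-dependence on $t$ must be defeated without recourse to an extension.
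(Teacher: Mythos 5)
Your reduction of \eqref{alpha-estimate} to the quantitative $P$-function inequality $\max_k(x_k-y_k)\bigl(F_k(x)-F_k(y)\bigr)\ge\alpha\|x-y\|_\infty^2$ is exactly the paper's first step, and you have correctly isolated the crux: along the segment $z(t)=y+t(x-y)$ the index realizing $\alpha_{J(z(t))}$ in \eqref{alphaA} depends on $t$, so the integral representation of the secant only gives $\int_0^1\max_i v_i(J(z(t))v)_i\,dt\ge\alpha\|v\|_\infty^2$, which sits on the wrong side of $\max_i\int_0^1$. But the device you propose to defeat this does not close the argument. The collapse regime of your minimizing-sequence analysis is fine, yet the substantive case --- the infimum $c^*$ attained at a genuine pair $\bar x\ne\bar y$ --- is exactly where the theorem lives, and you leave it at ``I expect it to require a variational argument'': no mechanism is supplied by which stationarity at the minimizer, or theorem \ref{t:GeoCharP}, would force $c^*\ge\alpha$. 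Note also that the hypothesis that $\rR$ is a \emph{rectangle}, not merely convex, has to enter somewhere (it is essential throughout the Gale--Nikaido circle of results), and nothing in your variational setup invokes it. As written, the proposal is a correct reduction plus an accurate diagnosis of the obstacle, with the obstacle itself left standing.

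The paper takes a shorter route around the same obstacle: it applies the Mean Value Theorem \emph{componentwise}, to $t\mapsto F_i(tx+(1-t)y)$ for each $i$ separately, obtaining $F_i(x)-F_i(y)=(J(z_i)(x-y))_i$ with a distinct intermediate point $z_i$ on the segment for each coordinate, and then bounds $\max_i(x_i-y_i)(J(z_i)(x-y))_i\ge\alpha\|x-y\|_\infty^2$. This replaces your $t$-dependent integrand by a single finite maximum. Be aware, however, that the final inequality there is delicate for precisely the reason you identified: \eqref{alphaA} bounds $\max_i v_i(Av)_i$ for a \emph{single} matrix $A$, whereas in the paper the $i$-th row is evaluated at $J(z_i)$, and a matrix assembled from rows of different $P$-matrices need not be a $P$-matrix, so the bound is not a direct application of the definition of $\alpha$. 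The fully rigorous statement underlying both routes is the ``uniform $P$-function'' part of \cite[theorem 5.2]{MoreRheinboldt}, proved by induction on dimension using the rectangle structure; that is the result you would need to import (or reprove) to finish your argument, since the compactness reduction alone will not substitute for it.
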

\begin{proof}
Let $x, y \in \rR$. If $x=y$, we are done, so we assume that $x \neq y$. For each $i  \in \langle n \rangle$, we define
$$g_i: [0,1] \to \R, \; g_i(t) = F_i(tx + (1-t)y).$$
Then, by the Mean Value Theorem, there exists $t_i \in (0,1)$ such that $g_i(1) - g_i(0) = g_i'(t_i)$. Observing that $g_i(1) = F_i(x)$, $g_i(0) = F_i(y)$, and $\textstyle g_i'(t) = \sum_{j=1}^n \frac{\partial F_i}{\partial x_j}((tx + (1-t)y) (x_j-y_j)$, we obtain
\begin{align}
F_i(x)-F_i(y) = (J(z_i) (x-y))_i,
\end{align}
where $z_i = t_ix+(1-t_i)y$ for some $t_i \in (0,1)$. 

Since the Jacobian $J(z)$ is a $P-$matrix at every $z \in \rR$, we have
\begin{align}
\max_{ i \in \langle n \rangle} v_i(J(z) v)_i \geq \alpha_{J(z)} \|v\|^2_\infty \geq \alpha \|v\|^2_\infty,
\end{align}
for all $v \in \R^n$. Thus, we obtain
\begin{align*}
\|x-y\|_\infty \|F(x)-F(y)\|_\infty 
&\geq \max_{i  \in \langle n \rangle} (x_i-y_i) (F_i(x)-F_i(y)) \\
&= \max_{i  \in \langle n \rangle} (x_i-y_i) (J(z_i) (x-y))_i\
 \geq \ \alpha  \|x-y\|^2_\infty.
\end{align*}

Finally, since $x \neq y$, we can divide both sides by $\|x-y\|_\infty$ and obtain \eqref{alpha-estimate}.
\end{proof}

\end{appendix}

%%%%%%%%%%%%
\bibliographystyle{ieeetr}
\bibliography{P-functionInversion}
%%%%%%%%%%%%
\end{document}